\author[1]{Charles Arnal$^*$}
\author[2]{David Cohen-Steiner$^*$}
\author[3]{Vincent Divol$^*$}
\affil[1]{Université Paris-Saclay, CNRS, Inria, Laboratoire de Mathématiques d'Orsay}
\affil[2]{Université Côte d’Azur, Inria Sophia Antipolis-Mediterranée}
\affil[3]{CEREMADE, Université Paris-Dauphine, Université PSL}
\newcommand{\eps}{\varepsilon}
\newcommand{\dotp}[1]{\langle #1 \rangle}
\newcommand{\p}[1]{\left(#1 \right)}
\newcommand{\Crit}{\mathrm{Crit}}
\newcommand{\wfs}{\mathrm{wfs}}
\newcommand{\diam}{\mathrm{diam}}
\newcommand{\Vol}{\mathrm{Vol}}
\newcommand{\dgm}{\mathrm{dgm}}
\newcommand{\Pers}{\mathrm{Pers}}
\newcommand{\pers}{\mathrm{pers}}
\newcommand{\one}{\mathbf{1}}
\newcommand{\dd}{\mathrm{d}}
\newcommand{\OT}{\mathrm{OT}}
\newcommand{\Z}{{\mathbb Z}}
\newcommand{\R}{{\mathbb R}}
\newcommand{\A}{{\mathsf A}}
\newcommand{\B}{{\mathsf B}}
\renewcommand{\S}{{\mathsf S}}
\newcommand{\M}{{\mathsf M}}
\newcommand{\E}{{\mathbb E}}
\renewcommand{\P}{{\mathbb P}}
\newcommand{\F}{{\mathcal F}}
\newcommand{\N}{{\mathbb N}}
\newcommand{\cC}{\mathcal{C}}
\newcommand{\cD}{\mathcal{D}}
\newcommand{\cM}{\mathcal{M}}
\newcommand{\cW}{\mathcal{W}}
\declaretheorem[name=Theorem,numberwithin=section]{thm}
\newtheorem{definition}[thm]{Definition}
\newtheorem{example}[thm]{Example}
\newtheorem{proposition}[thm]{Proposition}
\newtheorem{lemma}[thm]{Lemma}
\begin{document}

\title{Wasserstein convergence of \v Cech persistence diagrams for samplings of submanifolds}

\def\thefootnote{*}\footnotetext{The authors contributed equally to this work.}\def\thefootnote{\arabic{footnote}}

\maketitle

\begin{abstract}
   \v Cech Persistence diagrams (PDs) are topological descriptors routinely used to capture the geometry of complex datasets. They are commonly compared using the Wasserstein distances $\OT_p$; however, the extent to which PDs are stable with respect to these metrics remains poorly understood. 
We partially close this gap by focusing on the case where datasets are sampled on an $m$-dimensional submanifold of $\R^d$. Under this manifold hypothesis, we show that convergence with respect to the $\mathrm{OT}_p$ metric happens exactly when $p>m$. We also provide improvements upon the bottleneck stability theorem in this case and prove new laws of large numbers for the total $\alpha$-persistence of PDs. Finally, we show how these theoretical findings shed new light on the behavior of the feature maps on the space of PDs that are used in ML-oriented applications of Topological Data Analysis.
\end{abstract}

\section{Introduction}

Topological Data Analysis (TDA) is a set of tools that aims at extracting relevant topological information from complex datasets, e.g. regarding connected components,  loops, cavities, or higher dimensional features. These different notions are made formal through the use of \textit{homology theory}, and in particular the \textit{$i$-th homology group} $H_i(\A)$ with coefficients in some field $\mathcal{F}$ of a set $\A$, which captures the $i$-dimensional topological features of $\A$ for $i\geq 0$, see e.g. \cite{Hatcher}.  
TDA has been successfully applied in a variety of domains, including material science \cite{kondic2016structure, saadatfar2017pore,dijksman2018characterizing,  buchet2018persistent},  biology \cite{ carriere2020topological, biomed,aukerman2022persistent}, real algebraic geometry \cite{BottleneckDegree,Draisma2013TheED, Practical_reach_RAG}  and neuroscience \cite{sizemore2019importance, rieck2020uncovering, caputi2021promises}, to name a few.  When used in conjunction with more traditional approaches such as neural networks, TDA-based methods have outperformed state of the arts methods for tasks such as graph classifications \cite{carriere2020perslay, PLLay}.

The most prominent techniques in TDA rely on multiscale approaches, in particular through the use of \textit{persistent homology} \cite{chazal2016structure}. Given a compact set $\A$ in $\R^d$, persistent homology tracks the evolution of the homology groups $H_i(\A^t)$ of the $t$-offset $\A^t = \bigcup_{x\in \A}\overline B(x,t)$ of  $\A$  as $t$ goes from $0$ to $+\infty$ (where $\overline B(x,t)$ is the closed ball of radius $t$ centered at $x$). 
The process is summarized by the \textit{\v Cech persistence diagram (PD) of degree $i$} of the set $\A$: the PD $\dgm_i(\A)$ is a multiset\footnote{A multiset is a set where each element appears with some non-zero multiplicity.} of points in the half-plane $\Omega:= \{(u_1,u_2)\in \R^2:\ u_1<u_2\}$, where each point $(u_1,u_2)$ in the PD corresponds to a $i$-dimensional topological feature that appeared in $\A^t$ at scale $t=u_1$ (its birth time) and disappeared at scale $t=u_2$ (its death time), see \Cref{fig:cech_filtration}.\footnote{In  general, PDs can have points with infinite coordinates. For \v Cech PDs, this will only be the case for a single point of the diagram for $i=0$, of coordinate $(0,+\infty)$.  We discard this point in the following.} Points close to the diagonal $\partial \Omega=\{(u_1,u_2)\in \R^2:\ u_1=u_2\}$ correspond to topological features of small  \textit{persistence} $\pers(u)=\frac{u_2-u_1}2$, which have a short lifetime in the filtration $(\A^t)_{t\geq 0}$.

\begin{figure}
    \centering
    \includegraphics[width=\textwidth]{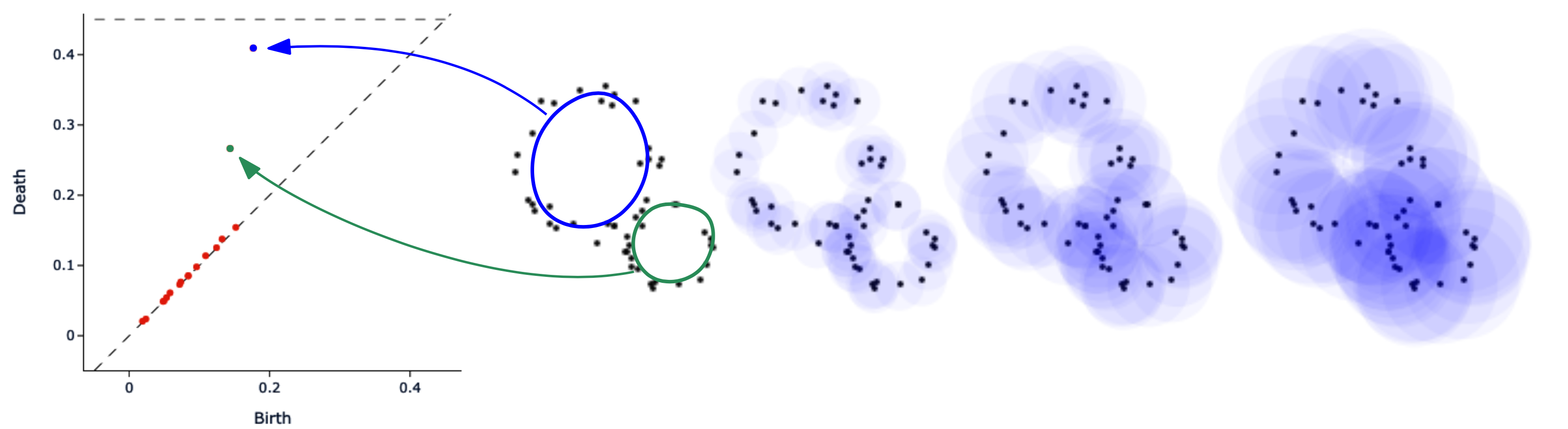}
     \caption{The \v Cech PD of a point cloud $\A$ in $\R^2$ for $i=1$  and its $t$-offsets. The two points far from the diagonal $\partial \Omega$ in $\dgm_i(\A)$ correspond to the two large cycles in the set $\A$.}
    \label{fig:cech_filtration}
\end{figure}

A key property of PDs is their robustness to small perturbation in the data, making them suitable for analyzing real-world datasets. This stability property is phrased in terms of the \textit{bottleneck distance} between PDs \cite{edelsbrunner2002topological}. Let $a$ and $b$ be two PDs. A \textit{partial matching} between $a$ and $b$ is a bijection of multisets  $\gamma : a \cup \partial \Omega \rightarrow b\cup \partial \Omega $, where each point $(u,u)$ of $\partial \Omega $ has infinite multiplicity.
In other words, the points of $a$ are either paired with a single point of $b$, or mapped to the diagonal $\partial \Omega$ (and similarly for the points of $b$).
Let $\Gamma(a,b)$ be the set of all partial matchings between $a$ and $b$. The bottleneck distance is defined as
\begin{equation}
\label{eq:def_bottleneck_distance}
    \OT_\infty(a,b)=\inf_{\gamma \in \Gamma(a,b)}\max_{u\in a\cup \partial\Omega} \|u-\gamma(u)\|_\infty.
\end{equation}
 The Bottleneck Stability Theorem \cite{cohen2005stability, chazal2016structure}  states that if $\A_1$ and $\A_2$ are two compact sets, then for any integer $i\geq 0$
\begin{equation}\label{eq:bottleneck_stability}
    \OT_\infty(\dgm_i(\A_1),\dgm_i(\A_2))\leq \eps,
\end{equation}
where $\eps$ is the Hausdorff distance between the sets $\A_1$ and $\A_2$, defined by \[d_H(\A_1,\A_2)=\sup_{x\in \R^d}|d_{\A_1}(x)-d_{\A_2}(x)|\] and where $d_\A$ is the distance function to a set $\A$. 
An important property of the bottleneck distance is that it is blind to small-persistence topological features: if $\OT_\infty(a,b)= \eps$, then one can arbitrarily modify the PDs $a$ and $b$ on a slab of width $\eps$ above the diagonal (for the $\ell_\infty$-metric) without changing the bottleneck distance between the two PDs. 

Due to this phenomenon, the bottleneck distance turns out to be too weak in many situations of interest, where some topological features of small persistence can be as important as large-scale topological features in the PD (say, with a classification or a regression task in mind).
For this reason, finer transport-like distances are often preferred to compare PDs. These distances, which we denote as $\OT_p$, are defined for $1\leq p<\infty$ by
\begin{equation}\label{eq:OTp_PD}
    \OT_p(a,b)=\inf_{\gamma \in \Gamma(a,b)}\p{\sum_{u\in a\cup \partial\Omega} \|u-\gamma(u)\|_\infty^p}^{1/p},
\end{equation}
with $\OT_p\leq \OT_{p'}$ for $1\leq p'\leq p < \infty$.
They can be seen as modified versions of the Wasserstein distances used in optimal transport, with the diagonal $\partial \Omega$ playing the role of a landfill of infinite mass, see e.g. \cite{divol2021understanding}.

The increased sensitivity to small perturbations of the $\OT_p$ distances is of crucial importance in the standard TDA pipeline, which we briefly recall. Starting from a sample of sets $\A_1,\dots,\A_n$, one computes a sample of PDs $a_j=\dgm_i(\A_j)$, $j=1,\dots,n$.
Statistical methods to analyze this sample of PDs are typically awkward to define, due to the nonlinear geometry of  the space of PDs \cite{bubenik2020embeddings, wagner2021nonembeddability}. To overcome this issue, the space of PDs is mapped to a vector space through some map $\Phi$ called a feature map. Statistical method are then applied in the feature space on the transformed sample $\Phi(a_1),\dots,\Phi(a_n)$. Various feature maps have been designed \cite{chazal2014stochastic, reininghaus2015stable, chen2015statistical, kusano2016persistence, carriere2017sliced, som2018perturbation, kusano2018kernel, hofer2019learning}, important examples including persistent images \cite{adams2017persistence}, PersLay \cite{carriere2020perslay}, and PLLay \cite{ PLLay}; in the latter two, the feature map is parametrized by a neural network. A good feature map should preserve as much as possible the geometry of the space of PDs \cite{mitra2021space}; in particular,  Lipschitz (or Hölder) continuity of the feature map is a basic requirement.
However, due to their (often desirable) sensitivity to small scale features, most common feature maps are \textit{not} regular with respect to the $\OT_\infty$ distance on the space of diagrams.
Instead, they enjoy Lipschitz regularity with respect to either the finer $\OT_1$ distance (see e.g. \cite[Proposition 5.2]{divol2021understanding}), or to the $\OT_p$ distances (for $p>1$) when restricted to diagrams $a$ whose \textit{$\alpha$-total persistence}
\begin{equation}
    \Pers_\alpha(a) = \sum_{u\in a} \pers(u)^\alpha
\end{equation}
is bounded for some $\alpha>0$ large enough, see \cite{divol2019choice} and \cite{kusano2018kernel}.
Boundedness assumptions on the $\alpha$-total persistence also yield a version of the Bottleneck Stability Theorem with respect to the finer $\OT_p$ distance: if $\A_1$ and $\A_2$ are such that $\Pers_\alpha(\dgm_i(\A_k)) \leq M$ ($k=1,2$), then, for $p\geq \alpha$,
\begin{equation}
      \OT_p^p(\dgm_i(\A_1),\dgm_i(\A_2))\leq M\eps^{p-\alpha},
\end{equation}
where $\eps=d_H(\A_1,\A_2)$, see \cite{Cohen-Steiner_Edelsbrunner_Harer_Mileyko_2010}.
Hence, in addition to having intrinsic theoretical interest, controlling the total persistence and convergence with respect to the $\OT_p$ distance of PDs is crucial to ensure the soundness of most methods commonly used in TDA. This is the subject of this article.

\medskip

\textbf{Contributions.} 
We provide a deeper understanding of the structure of \v Cech PDs in the specific case where the underlying set $\A$ is a compact subset of an $m$-dimensional manifold $\M$ in $\R^d$, focusing in particular on the total persistence of the PDs and their convergence to the PDs of $\M$ with respect to the $\OT_p$ distances. The importance of this case is supported by the \textit{manifold hypothesis}, which often serves as a fundamental principle guiding the development of algorithms and models for data analysis \cite{genovese2012manifold, weed2019sharp, brown2022verifying}. Specifically, our main contributions are the following:
\begin{itemize} 
    \item \textbf{\Cref{prop:regions}:} When $\A\subset \M$ is a compact set satisfying $d_H(\A,\M)\leq \eps$ for $\eps$ small enough, we provide a quadratic improvement upon the standard Bottleneck Stability Theorem \eqref{eq:bottleneck_stability}. Namely, we show that there exists an optimal bottleneck matching $\gamma$ such that the distance between a coordinate of a point $u\in \dgm_i(\A)$ and the coordinate of the matched point $\gamma(u)\in \dgm_i(\M)\cup \partial \Omega$ is of order $O(\eps^2)$ whenever the coordinate of $u$ is larger than $2\eps$. 
        \item \textbf{\Cref{prop:generic_deterministic}:} In the case where the manifold $\M$ is generic and the set $\A$ is a $\delta$-sparse point cloud (i.e. $\min_{x\neq y\in \A}\|x-y\|\geq \delta$), we provide a finer analysis by showing that the $p$-total persistence $\Pers_p(\dgm_i(\A))$ remains bounded and the distance $\OT_p(\dgm_i(\A),\dgm_i(\M))$ converges to $0$ for all $p> m$ whenever the ratio $\eps/\delta$ is upper bounded. 
    \item \textbf{\Cref{cor:Wasserstein_convergence}:} We then focus on a random context, by assuming that $\A= \A_n$ is obtained by sampling $n$ i.i.d. random variables with  positive bounded density $f$ on a generic manifold $\M$. We prove that $\OT_p(\dgm_i(\A_n),\dgm_i(\M))$ converges in expectation to $0$ for $p>m$. Furthermore, we obtain a law of large numbers for the $\alpha$-total persistence of $\dgm_i(\A_n)$:
    \begin{equation}
        \Pers_\alpha(\dgm_i(\A_n)) = \Pers_\alpha(\dgm_i(\M)) + C_in^{1-\alpha/m}  + o_{L^1}(n^{1-\alpha/m})+O_{L^1}\Big(\p{\frac{\log n}n}^{\frac 1 m}\Big)
    \end{equation}
    for all $\alpha >0$, where $C_i$ is a constant that depends explicitly on $\M$ and $f$.    In particular, for $0\leq i<m$, $\Pers_\alpha(\dgm_i(\A_n))$ stays bounded if and only if $\alpha\geq m$. 
\end{itemize}

Our contributions are to be compared to one of the only preexisting results regarding the $\alpha$-total persistence of a PD: in \cite{Cohen-Steiner_Edelsbrunner_Harer_Mileyko_2010}, the authors proved that for all $\alpha$ strictly greater than the ambient dimension $d$, the $\alpha$-total persistence of the \v Cech PD of a compact set $\A\subset B(0,R)\subset \R^d$  satisfies \begin{equation}\label{eq:CH_bound_pers}
\Pers_\alpha(\dgm_i(\A))\leq C_{\alpha,d}R^\alpha
\end{equation}
for some constant $C_{\alpha,d}$ depending on $\alpha$ and $d$.
In the two scenarios we considered (either $\delta$-sparse or random samples), the ambient dimension $d$ in the constraint $\alpha>d$ for the control of the $\alpha$-total persistence in \eqref{eq:CH_bound_pers} has been replaced by the  smaller intrinsic dimension $m$ of the problem: the manifold hypothesis has been successfully exploited. 

To summarize, our work sheds light on the behaviour of PDs, provides new guarantees for commonly used ML methods (see e.g. Corollary~\ref{cor:feature_maps_regularity}), and suggests new heuristics (see Section \ref{sec:experiments}). 
We also perform various experiments to illustrate the validity of our results and their relevance to the classic TDA pipeline.

 \medskip

\textbf{Related work.} This work is part of a long ongoing effort to understand simplicial complexes and PDs in a random context \cite{bobrowski2014distance, bobrowski2017maximally, bobrowski2017vanishing, bobrowski2018topology, kahle2013limit, divol2019density, owada2022convergence, botnan2022consistency, hacquard2023euler}. Closest to our work, Hiraoka, Shirai and  Trinh gave limit laws for \v Cech PDs for random points in the cube $[0,1]^d$ \cite{hiraoka2018limit}, while Goel, Trinh and Tsunoda gave similar asymptotics in the case of samples on manifolds \cite{goel2019strong}.  Limit laws for the total persistence have been obtained by Divol and Polonik in the case of random samples in the cube \cite{divol2019choice}. Among other contributions, this work generalizes this result to submanifolds: unlike the cube, a manifold has a nontrivial topology; a fact which considerably complicates the situation, for we have to take into account the presence of large topological features in the \v Cech PDs in order to control the total persistence. 

 \medskip

\textbf{Outline.}
We recall the definition and some properties of \v Cech persistence diagrams in Section \ref{sec:subsets}, and we prove an improved version of the Bottleneck Stability Theorem for submanifolds (\Cref{prop:regions}).
In Section \ref{sec:generic}, we define topological Morse functions and consider the properties of the distance function to generic submanifolds; we also describe the \v Cech persistence diagrams of such a manifold, and those of its $(\delta,\eps)$-samplings (\Cref{prop:generic_deterministic}).
We study the persistence diagrams of random point clouds on submanifolds in Section \ref{sec:random} (\Cref{cor:Wasserstein_convergence} among other results), before presenting some experimental confirmations of our findings in Section \ref{sec:experiments}.

 \medskip

\textbf{Notation.} To simplify notation, we write $K = K(\M)$ (or $K= K(\M, a, \ldots)$) to implicitly state that some constant $K$ depends only on $\M$ (or only on $\M$, $a$, etc.).

\section{\v Cech persistence diagrams for subsets of submanifolds}\label{sec:subsets}

Although this work is only concerned with PDs with respect to the \v Cech filtration, it is more natural to define PDs for the sublevel sets of proper continuous functions that are bounded below. We refer to \cite{chazal2016structure} for a thorough introduction to persistent homology and PDs in an even more general context.

Let $f:\R^d\to \R$ be a proper continuous function that is bounded below--e.g., the distance function to a compact set. For $t\in \R$, let $X_t =f^{-1}(-\infty,t]$ be the sublevel set of $f$ at level $t$. The collection $(X_t)_{t\in \R}$ is called a filtration. Let $i\geq 0$ be an integer. We let $H_{i}(X_t)$ be the homology group of degree $i$ with coefficients in any fixed field $\F$ (e.g. $\F = \Z/2\Z$ is a popular choice) of $X_t$.

The persistence diagram  $\dgm_i(f)$ of degree $i$ of $f$ tracks the evolution of the homology groups of degree  $(H_i(X_t))_{t\in \R}$. 
Define the extended half-plane $\Omega_\infty = \{u=(u_1,u_2)\in (\R\cup\{-\infty, +\infty\})^2:\ -\infty\leq u_1<u_2\leq +\infty\}$ and $\Omega= \{u=(u_1,u_2)\in \R^2:\  u_1<u_2\}$.
Then $\dgm_i(f)$ presents itself as a multiset (a set whose elements can have multiplicity greater than $1$) in $\Omega_\infty $.  We refer to \cite{chazal2016structure} for its precise definition, but it can be intuitively understood as follows: for $u=(u_1,u_2)\in \Omega_\infty $, the number of points in the persistence diagram $\dgm_i(f)$ found in the upper-left quadrant $Q_u = \{v\in (\R\cup\{-\infty, +\infty\})^2:\ v_1\leq u_1<u_2\leq v_2\}$ is informally equal to the number of $i$th dimensional features present in the sublevel set  $f^{-1}(-\infty,u_1]$ (captured through its $i$th homology group $H_i(f^{-1}(-\infty,u_1])$) that are still present in the sublevel set $f^{-1}(-\infty,u_2]$. Note that this number of points is always finite \cite[Corollary 3.34]{chazal2016structure}. 
Though persistence diagrams can have points with infinite coordinates, these will be of little interest in the cases considered in this article.\footnote{The persistence diagram of degree $i$ of the distance function to a set has no point with infinite coordinates if $i>0$, and a single such point if $i=0$ whose coordinates are $(0,+\infty)$.}
To simplify our notation and definitions, we let from now on $\dgm_i(f)$ denote the finite part (i.e. the points whose coordinates are finite) of the diagram of degree $i$ of $f$, and we assume that every diagram considered henceforth has no point with infinite coordinates.
In particular, they are all multisets of the half-plane $\Omega := \{u=(u_1,u_2)\in \R^2:\  u_1<u_2 \}$, which leads to the following definition: the space $\cD$ of persistence diagrams is the set of all multisets $a$ in $\Omega$ that contain a finite number of points\footnote{This corresponds to the set of diagrams of q-tame persistence modules as defined in \cite{chazal2016structure}.} in every quadrant $Q_u$, $u\in \Omega$.

 In this paper, we are interested in the PDs of the distance function $d_\A$ to various compact sets $\A$, called the \v Cech persistence diagram\footnote{The distance function $d_\A$ is proper due to the compacity of $\A$, hence its persistence module is $q$-tame and the associated persistence diagram is well-defined --see \cite{chazal2016structure}.} of $\A$ and denoted by $\dgm_i(\A)$.

Changes in the topology of the sublevels of $d_\A$ can be partially described in terms of zeros of its generalized gradient, which is defined at $y\in \R^d\backslash \A$ as
\begin{equation}
    \nabla d_\A(y) = \frac{y-c(\sigma_\A(y))}{d_\A(y)},
\end{equation}
where $\sigma_\A(y)=\{x\in \A:\ \|x-y\|=d_\A(y)\}$ is the set of projections of $y$ on $\A$ and $c(\tau)$ represents the center of the smallest enclosing ball of a set $\tau$. When $y\in \R^d\backslash \A$ satisfies $\nabla d_\A(y)=0$, $y$ is called a \textit{differential critical point} of $d_\A$. 
We let $\Crit(\A)$ denote the set of differential critical points of $d_\A$.
An adapted version of the classical Isotopy Lemma for Morse functions is true for distance functions to compact sets as well, as shown in \cite{GroveCriticalPoints}: 
\begin{lemma}[Isotopy Lemma for Distance Functions]
\label{lem:isotopy_lemma_general}
If $0<a<b$ are such that $d_\A^{-1}[a,b]$ contains no differential critical point of $d_\A$, then $d_\A^{-1}(-\infty,a] $ is a deformation retract of  $d_\A^{-1}(-\infty,b] $.
Consequently, any $(u_1,u_2) \in \dgm_i(\A)$ is such that $u_1,u_2 \not \in [a,b]$.
\end{lemma}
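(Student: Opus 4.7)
The plan is to mimic the classical Morse-theoretic proof of the isotopy lemma, with the twist that $\nabla d_\A$ is not a continuous vector field and must be replaced by a suitable regularization. First, I would argue that the set $K = d_\A^{-1}[a,b]$ is compact (by properness of $d_\A$ and the fact that $a>0$, so $K\subset \R^d\setminus \A$) and that, since $K$ contains no differential critical point, the norm $y\mapsto \|\nabla d_\A(y)\|$ is bounded below by some constant $\eta>0$ on $K$. The latter requires noting that although $\nabla d_\A$ itself is discontinuous, the squared norm $\|\nabla d_\A\|^2 = 1 - \|c(\sigma_\A(y))-y\|^2/d_\A(y)^2$ is lower semicontinuous in $y$ (because $\sigma_\A$ is upper semicontinuous as a set-valued map and the smallest enclosing ball depends semicontinuously on its input), so it attains its infimum on $K$.

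Next, I would construct a continuous, locally Lipschitz vector field $V$ on a neighborhood $U$ of $K$ that is uniformly ``gradient-like'' in the sense that $V$ is tangent to the positive cone of the generalized gradient at every point, with $\|V\|=1$, and such that for every $y\in U$ and every $x\in \sigma_\A(y)$ one has $\dotp{V(y), (y-x)/\|y-x\|} \geq \eta/2$. This is done locally: near a point $y_0$ with $\nabla d_\A(y_0)\neq 0$, the constant vector $\nabla d_\A(y_0)/\|\nabla d_\A(y_0)\|$ works on a small ball because $\sigma_\A$ is upper semicontinuous; these local choices are then glued by a partition of unity, using convexity of the good cone at each point.

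With $V$ in hand, the flow $\phi_s$ obtained by integrating $V$ pushes points from lower to higher values of $d_\A$ at a rate at least $\eta/2$ (by the standard ``chain rule'' for the distance function: $\frac{d}{ds}d_\A(\phi_s(y)) \geq \dotp{V(\phi_s(y)), \nabla d_\A(\phi_s(y))}$, which remains a valid inequality in the non-smooth setting). Reparametrizing $s$ by the value of $d_\A$ and reversing time produces a strong deformation retraction of $d_\A^{-1}(-\infty,b]$ onto $d_\A^{-1}(-\infty,a]$, proving the first assertion.

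For the consequence on persistence diagrams, the retraction implies that the inclusion $X_a \hookrightarrow X_s$ is a homotopy equivalence for every $s\in [a,b]$, hence the induced maps $H_i(X_a)\to H_i(X_s) \to H_i(X_b)$ are all isomorphisms, so the persistence module restricted to $[a,b]$ is constant. By the characterization of $\dgm_i(d_\A)$ in terms of the upper-left quadrant counts recalled above, any point $(u_1,u_2)\in \dgm_i(\A)$ with $u_1\in[a,b]$ or $u_2\in[a,b]$ would force a strict change of rank somewhere in $[a,b]$, a contradiction. The main technical obstacle is the construction of $V$: the discontinuity of $\sigma_\A$ means one cannot just normalize $\nabla d_\A$, and one has to verify that the partition-of-unity average still lies in the good cone at every point — this is exactly the content of Grove's argument in \cite{GroveCriticalPoints}, which we invoke rather than reprove.
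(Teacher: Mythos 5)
The paper gives no proof of this lemma, citing it directly from Grove \cite{GroveCriticalPoints}; your sketch is an accurate summary of Grove's argument (lower bound on the generalized gradient by lower semicontinuity, construction of a unit gradient-like vector field via partition of unity, integration and reparametrization) and ultimately defers to that same reference, so the approach is essentially the one the paper relies on. One small slip worth noting: you write $\|\nabla d_\A\|^2 = 1 - \|c(\sigma_\A(y))-y\|^2/d_\A(y)^2$, but from the definition $\|\nabla d_\A(y)\|^2 = \|y-c(\sigma_\A(y))\|^2/d_\A(y)^2$, which (since $d_\A(y)^2 = \|y-c(\sigma_\A(y))\|^2 + \rho(\sigma_\A(y))^2$ with $\rho$ the radius of the smallest enclosing ball) equals $1 - \rho(\sigma_\A(y))^2/d_\A(y)^2$; you have written the complementary quantity, though the lower-semicontinuity conclusion you need is unaffected.
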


A fundamental result in TDA, the Bottleneck Stability Theorem \eqref{eq:bottleneck_stability}, states that \v Cech PDs are stable with respect to Hausdorff perturbations. Consider the particular setting where one has access to a set $\A$, obtained as an approximation of an unknown shape of interest $\S$ through some sampling procedure, with $\A\subset \S$ and $\sup_{x\in \mathsf{S}} d_\A(x)\leq \eps$. The Bottleneck Stability Theorem ensures that $\OT_\infty(\dgm_i(\A),\dgm_i(\S))\leq \eps$ for any $i\geq 0$, a bound which cannot be improved in general.  However, it turns out that a finer understanding of the proximity between $\dgm_i(\A)$ and $\dgm_i(\S)$ can be obtained if more regularity is assumed on the shape of interest $\S$, namely in the situation where $\S=\M$ is a compact submanifold  with positive reach. 

Let us first set some notation. Let $\M$ be an $m$-dimensional compact topological submanifold of $\R^d$; we always assume that the boundary of $\M$ is empty and that $m\geq 1$. The orthogonal projection $\pi_\M$ on $\M$ is defined for $x$ close enough to $\M$, and we define the \textit{reach} $\tau(\M)$ as the largest $r>0$ such that the orthogonal projection $\pi_\M$ is well (i.e. uniquely) defined for all $x\in \R^d$ at distance strictly less than $r$ from $\M$. The reach is a key notion to quantify the regularity of a manifold, see e.g. \cite{federer1959curvature} and \cite{Chazal_CS_Lieutier_OGarticle} for more information. In particular, any compact $C^2$ submanifold has positive reach.  For  $x\in \M$, we let $T_x\M$ be the tangent space of $\M$ at $x$, which is identified with a linear subspace of $\R^d$. We denote by $\pi_x:\R^d\to T_x\M$ the orthogonal projection on this subspace and let $\pi_x^\bot=id-\pi_x$ be the orthogonal projection on the normal space at $x$. A key property, that we will repeatedly used, is that the reach of $\M$ controls the deviation of the manifold $\M$ from its tangent space. Namely, \cite[Theorem 4.18]{federer1959curvature} states that for all $y\in \M$,
\begin{equation}\label{eq:federer}
    \|\pi_x^\bot(x-y)\|\leq \frac{\|x-y\|^2}{2\tau(\M)}.
\end{equation}
We also define the \textit{weak feature size of $\M$}, denoted by $\wfs(\M)$, as the minimal distance between a critical point of $\M$ and $\M$. As by definition, the projection is not unique at a critical point, we must have $\wfs(\M)\geq \tau(\M)$.

Let $\A\subset\M$ be such that $d_H(\A,\M)\leq \eps$ and let $z\in \R^d$. 
By definition of the Hausdorff distance, it holds that $|d_{\A}(z)-d_\M(z)|\leq \eps$. However, this naive bound can be quadratically improved as long as $z$ stays far away from $\M$.

\begin{lemma}
\label{lem:fred_keeps_saying_this}
        Let $\M\subset \R^d$ be a compact submanifold with positive reach and let $\A\subset \M$ be a compact set with $d_H(\A,\M) \leq \eps$ for some $\eps >0$. Let $z\in \R^d\backslash\M$. Then, $|d_\M(z)-d_\A(z)|\leq \frac{\eps^2}{2d_\M(z)}\p{1+\frac{d_\M(z)}{\tau(\M)}}$.
\end{lemma}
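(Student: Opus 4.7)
The plan is to exploit the fact that the projection direction from $z$ onto $\M$ is normal to $\M$, combined with Federer's tangential deviation bound \eqref{eq:federer}. Since $\A \subset \M$, we trivially have $d_\M(z) \leq d_\A(z)$, so the task reduces to upper bounding $d_\A(z) - d_\M(z)$.

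First I would pick any $x \in \M$ with $\|z-x\| = d_\M(z)$ (which exists by compactness of $\M$). Since $\M$ is boundaryless and $x$ is a global minimizer of $\|z-\cdot\|$ on $\M$, the vector $z - x$ must be orthogonal to $T_x \M$, i.e., $\pi_x(z-x) = 0$. Using $d_H(\A,\M) \leq \eps$ and $\A \subset \M$, I pick $y \in \A$ with $\|x - y\| \leq \eps$, so that $d_\A(z) \leq \|z-y\|$. The main computation is then
\begin{equation*}
    \|z-y\|^2 = \|z-x\|^2 + 2\dotp{z-x, x-y} + \|x-y\|^2.
\end{equation*}
By the normality of $z - x$, the cross term equals $\dotp{z-x, \pi_x^\bot(x-y)}$, which by Cauchy--Schwarz and \eqref{eq:federer} is bounded in absolute value by $d_\M(z) \cdot \frac{\|x-y\|^2}{2\tau(\M)} \leq \frac{d_\M(z)\,\eps^2}{2\tau(\M)}$.

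Putting these together yields
\begin{equation*}
    \|z-y\|^2 - d_\M(z)^2 \leq \eps^2 + \frac{d_\M(z)\,\eps^2}{\tau(\M)} = \eps^2\p{1 + \frac{d_\M(z)}{\tau(\M)}}.
\end{equation*}
To finish, I factor $\|z-y\|^2 - d_\M(z)^2 = (\|z-y\|-d_\M(z))(\|z-y\|+d_\M(z))$ and use $\|z-y\|+d_\M(z) \geq 2 d_\M(z)$ (since $y \in \M$ implies $\|z-y\| \geq d_\M(z)$), which gives the claimed bound.

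I do not expect a serious obstacle; the only subtlety is ensuring the orthogonality argument does not require any smallness assumption on $d_\M(z)$ relative to the reach. The point is that even if the projection onto $\M$ is not globally uniquely defined at $z$, any particular minimizer $x$ still satisfies $z - x \perp T_x \M$ because $\M$ has no boundary, so the argument goes through unconditionally. In the regime $d_\M(z) \leq \eps$, the bound may be weaker than the trivial $|d_\M(z)-d_\A(z)|\leq \eps$, but it remains valid, and the quadratic improvement is exactly what is needed for points $z$ away from $\M$ in the subsequent analysis of the bottleneck matching in \Cref{prop:regions}.
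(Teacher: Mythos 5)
Your proof is correct and is essentially identical to the paper's: both pick a nearest point $x \in \M$, use orthogonality of $z-x$ to $T_x\M$ together with Federer's bound \eqref{eq:federer} to control the cross term, and then factor the difference of squares. The only cosmetic difference is that you bound $\|z-y\| - d_\M(z)$ rather than $d_\A(z) - d_\M(z)$ directly, which is harmless since $d_\A(z) \leq \|z-y\|$; your remark that orthogonality holds at any minimizer without a smallness assumption on $d_\M(z)$ is a correct and worthwhile clarification that the paper leaves implicit.
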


   \begin{proof}
        As $\A\subset \M$, we have $d_\M(z)\leq d_\A(z)$. Let $x$ be a projection of $z$ onto $\M$, and let $y\in \A$ be a point at distance less than $\eps$ from $x$.  Then, using \eqref{eq:federer} and the fact that $z-x$ is orthogonal to $T_x\M$, we obtain that
        \begin{align*}
            d_\A(z)^2&\leq \|z-y\|^2= \|z-x\|^2 + \|x-y\|^2 +2\dotp{z-x,x-y}\\
            &\leq d_\M(z)^2 + \eps^2 +2\dotp{z-x,\pi_x^\bot(x-y)} \\
            &\leq d_\M(z)^2 + \eps^2 + \frac{d_\M(z) \eps^2}{\tau(\M)}.
        \end{align*}
        Hence,
        \[
            d_\A(z)-d_\M(z) = \frac{d_\A(z)^2-d_\M(z)^2}{d_\A(z)+d_\M(z)} \leq \frac{\eps^2}{2d_\M(z)}\p{1+\frac{d_\M(z)}{\tau(\M)}}. 
    \qedhere
        \]
    \end{proof}

\begin{figure}
    \centering
    \vspace{-\intextsep}
\includegraphics[width=0.4\textwidth]{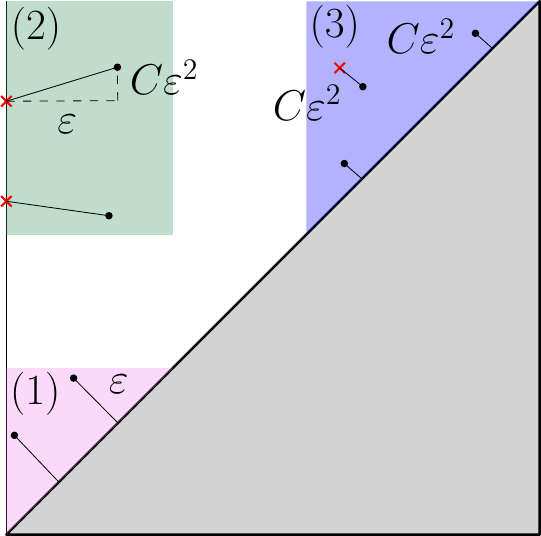}
    \caption{ PDs of $\M$ (red) and of $\A$ (black).}
    \label{fig:regions}
\end{figure}

 We can build upon this basic remark to obtain a very precise control of the behavior of the \v Cech PD of the set $\A$. Namely, we identify three regions in the upper halfplane $\Omega$ (displayed in \Cref{fig:regions}) which contain all points in the PD $\dgm_i(\A)$ (for some integer $i\geq 0$). In the first region, corresponding to microscopic topological features disappearing at scales smaller than $\eps+\eps^2/\tau(\M)$, the Bottleneck Stability Theorem cannot be improved. 
However, there exists an optimal matching (i.e. a matching $\gamma:\dgm_i(\A)\cup \partial\Omega\to \dgm_i(\M)\cup \partial\Omega$ that realizes the bottleneck distance \eqref{eq:def_bottleneck_distance}) such that at least one of the coordinates of any point in the other two regions is larger than $\tau(\M)-\eps^2/\tau(\M)$, and the proximity between a large coordinate of a point $u\in \dgm_i(\A)$  and the coordinate of the matched point $\gamma(u)\in \dgm_i(\M)\cup \partial \Omega$ is of order $O(\eps^2)$. This yields a quadratic improvement upon the Bottleneck Stability Theorem.

\begin{restatable}[Improved Bottleneck Stability Theorem]{thm}{improvedbottleneckstabilitythm}
\label{prop:regions}
    Let $\M \subset \R^d$ be a compact submanifold with positive reach and let $\A\subset \M$ be a compact set such that $d_H(\A,\M)\leq \eps< \tau(\M)/4$.
    Let $i\geq 0$ be an integer. Then
 $\dgm_i(\A)$ is the union of three regions:
 \begin{align*}
\dgm_i(\A) & =  \left(\dgm_i(\A) \cap \{u_1, u_2\leq \eps +\frac{\eps^2}{\tau(\M)} \} \right)   \qquad \qquad &{  \Biggr\} \dgm_i^{(1)}(\A_n)}
\\ & \sqcup \left( \dgm_i(\A) \cap \{ u_1 \leq \eps , u_2\geq \tau(\M) -\frac{\eps^2}{\tau(\M)}\} \right)&{ \Biggr\} \dgm_i^{(2)}(\A_n)}
\\ & \sqcup\left( \dgm_i(\A) \cap \{u_1, u_2 \geq \tau(\M) -\frac{\eps^2}{\tau(\M)} \right) &{  \Biggr\} \dgm_i^{(3)}(\A_n)}.
\end{align*}
    Furthermore, let $C=\frac{2}{\tau(\M)}\p{1+\frac{R(\M)}{\tau(\M)}}$, where $R(\M)$ is the radius of the smallest ball that contains $\M$. There exists an optimal matching $\gamma:\dgm_i(\A)\cup \partial\Omega\to \dgm_i(\M)\cup \partial\Omega$ for the bottleneck distance between $\dgm_i(\A)$ and $\dgm_i(\M)$ such that
    \begin{itemize}\setlength\itemsep{0.1mm}
        \item Region (1): If $u \in \dgm_i^{(1)}(\A)$, then $\gamma(u)\in \partial\Omega$ and $\|u-\gamma(u)\|_\infty\leq \eps$.
        \item Region (2): If $u \in \dgm_i^{(2)}(\A)$, then $\gamma(u)$ is of the form $(0,v_2)$ and $|u_2-v_2|\leq C\eps^2$.  The number of such points is finite and depends only on $\M$.
        \item Region (3): If $u \in \dgm_i^{(3)}(\A)$, then $\|u-\gamma(u)\|_\infty\leq C\eps^2$.
    \end{itemize}
\end{restatable}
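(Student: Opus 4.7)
The plan is to first establish a structural constraint on $\dgm_i(\M)$, then use Lemma~\ref{lem:fred_keeps_saying_this} together with this constraint to rule out points of $\dgm_i(\A)$ in a central strip, and finally to sharpen the $\eps$-cost of the bottleneck matching to $C\eps^2$ on the two large-scale regions via a truncation argument.

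For the structural step, every critical value of $d_\M$ lies in $\{0\}\cup[\wfs(\M),\infty)\subset\{0\}\cup[\tau(\M),\infty)$, so by the Isotopy Lemma (Lemma~\ref{lem:isotopy_lemma_general}) every point $(v_1,v_2)\in\dgm_i(\M)$ satisfies $v_1\in\{0\}\cup[\tau(\M),\infty)$ and $v_2\geq\tau(\M)$. In particular, the number of such $v$ with $v_1=0$ is finite and depends only on $\M$, which yields the finiteness claim for Region~2. For the region decomposition, set $s_0=\eps+\eps^2/\tau(\M)$, $s_1=\tau(\M)-\eps^2/\tau(\M)$, and $\phi(s)=s+\frac{\eps^2}{2s}(1+s/\tau(\M))$. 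Lemma~\ref{lem:fred_keeps_saying_this} provides the time-reparametrized interleaving $\M^s\subset\A^{\phi(s)}\subset\M^{\phi(s)}$ for $s>0$, and a short computation using $\eps<\tau(\M)/4$ places both $s$ and $\phi(s)$ in the plateau $(0,\tau(\M))$ where $H_i(\M^u)\cong H_i(\M)$, for every $s\in(s_0,s_1)$. Chaining the inclusions in a rank (box-lemma) computation for the persistence modules then shows that $H_i(\A^s)\to H_i(\A^t)$ has constant rank on $s_0<s<t<s_1$, ruling out any point of $\dgm_i(\A)$ with a coordinate in $(s_0,s_1)$ and yielding the three-region decomposition in the statement.

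For the matching itself, I would truncate both distance functions at $s_1$ by setting $\tilde d_\A=\max(d_\A,s_1)$ and $\tilde d_\M=\max(d_\M,s_1)$. Lemma~\ref{lem:fred_keeps_saying_this}, combined with the Hausdorff bound $d_\A-d_\M\leq\eps$ to handle the regime of small $d_\M$ (which forces $d_\M\geq s_1-\eps$ whenever $d_\A\geq s_1$, and thus pushes $d_\M$ back into the range where the sharp bound of the lemma applies), yields $\|\tilde d_\A-\tilde d_\M\|_\infty\leq C\eps^2/2$; applying the standard Bottleneck Stability Theorem to the truncated functions then produces an optimal matching between $\dgm_i(\tilde d_\A)$ and $\dgm_i(\tilde d_\M)$ of cost at most $C\eps^2$. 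By the previous step, this truncated matching is in natural bijection with the Region~2 and Region~3 features on each side, and restoring the original birth times---which are bounded by $\eps$ for Region~2 via the standard bottleneck stability applied to the untruncated diagrams---gives the sharp matching on these two regions. Completing it with the trivial matching of Region~1 to the diagonal yields the desired $\gamma$. The main obstacle I foresee is this splicing step: one must verify that the matching from the truncated problem and the Region~1 matching combine into a single globally bottleneck-optimal $\gamma$, which should be ensured by the disjointness and clean structure of the three regions established in the previous paragraph together with the observation that the $\eps$-cost on Region~1 already realizes the global bottleneck bound.
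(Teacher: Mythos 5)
Your overall plan mirrors the paper's proof structure (structural constraint on $\dgm_i(\M)$, region decomposition, truncation for the $\eps^2$ bound, splicing into a global matching), and the truncation and splicing ideas are essentially the right ones even if you only sketch the details. The substantive problem is the region decomposition step, where your interleaving argument does not actually close.

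You claim that $\M^s\subset\A^{\phi(s)}\subset\M^{\phi(s)}$ for $s>0$ with $\phi(s)=s+\frac{\eps^2}{2s}\bigl(1+\frac{s}{\tau(\M)}\bigr)$, and that $\phi(s)<\tau(\M)$ for all $s\in(s_0,s_1)$ so that $H_i(\M^{\phi(s)})\cong H_i(\M)$. Neither part survives scrutiny. For the first inclusion, Lemma~\ref{lem:fred_keeps_saying_this} bounds $d_\A(z)-d_\M(z)$ by a quantity that blows up as $d_\M(z)\to 0$, so points $z$ with small positive $d_\M(z)$ are not controlled by the quadratic estimate at all; one must fall back on the Hausdorff bound $d_\A\leq d_\M+\eps$ there, and the inclusion $\M^s\subset\A^{\phi(s)}$ with your $\phi$ then only holds after a careful case analysis (it does hold for $s\geq s_0$, but this needs to be checked rather than asserted for all $s>0$). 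More seriously, the claim $\phi(s)<\tau(\M)$ fails near $s_1$: writing $b=\eps^2/\tau^2$, a direct computation gives $\phi(s_1)=\tau+\frac{b^2\tau}{2-2b}>\tau$. Since the paper only guarantees $\wfs(\M)\geq\tau(\M)$ (equality is possible, e.g.\ for a round sphere), the plateau $H_i(\M^u)\cong H_i(\M)$ cannot be invoked at $u=\phi(s)$ when $s$ is close to $s_1$, and the rank/box-lemma chain breaks exactly at the endpoint where it is needed. Your argument would thus only rule out coordinates of $\dgm_i(\A)$ in a strictly smaller interval $(s_0',s_1')\subsetneq(s_0,s_1)$, not the full stated range. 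The paper sidesteps all this by invoking Proposition~5 of \cite{Attali2022_tight_bounds}, which provides a deformation retraction of $\A^r$ onto $\M$ for every $r\in[\eps+\eps^2/\tau(\M),\,\tau(\M)-\eps^2/\tau(\M)]$; this is a genuinely stronger topological input than anything an interleaving of the two filtrations can supply, and it is what yields the sharp interval endpoints. You should cite that result (or reproduce its argument) rather than try to derive the decomposition from Lemma~\ref{lem:fred_keeps_saying_this} alone.

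Two smaller remarks. First, the paper truncates $d_\A$ and $d_\M$ both from below (at $\tau(\M)/2$) and from above (at $R(\M)$); your one-sided truncation at $s_1$ does still give an $O(\eps^2)$ sup-norm bound because the factor $\frac{\eps^2}{2 d_\M(z)}\bigl(1+\frac{d_\M(z)}{\tau}\bigr)=\frac{\eps^2}{2d_\M(z)}+\frac{\eps^2}{2\tau}$ stays bounded as $d_\M(z)\to\infty$, but you should say this explicitly, and note that your resulting constant differs from the paper's $C$. Second, the splicing step (showing the combination of the Region~1 matching with the truncated-diagram matching is itself bottleneck-optimal) is exactly where the paper spends effort: it compares against an arbitrary optimal matching $\gamma$ of cost $\delta\leq\eps$, observes that any such $\gamma$ must already send Region~1 to the diagonal and Region~2 (resp.~3) to $\dgm_i^{(2)}(\M)$ (resp.~$\dgm_i^{(3)}(\M)\cup\partial\Omega$), and then argues that the first coordinate of the new Region~2 matching is unchanged (still $u_1$) while the second coordinate improves to $\min(\delta,C\eps^2)$, so the spliced matching has cost $\leq\delta$. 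Your ``should be ensured by the disjointness'' is pointing at this but needs to be spelled out.
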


Note that for any $i\geq d$, the $i$-th PDs of $\M$ and $\A$ are actually trivial.
Furthermore, under the hypotheses of \Cref{prop:regions}, the sub-diagrams $\dgm_i^{(2)}(\A)$ and $\dgm_i^{(2)}(\M)$ are empty if $i$ is strictly greater than the dimension of $\M$.

Note also that Regions (2) and (3) can be made smaller as follows: there exists $\delta:\R_+ \rightarrow \R_+ $ such that $\lim_{\eps \xrightarrow{>}0}\delta(\eps) = 0 $ and such that  $\dgm_i^{(2)}(\A) \subset  \dgm_i(\A) \cap \{ u_1 \leq \eps, u_2\geq \wfs(\M)-\delta(\eps)\}$  and $\dgm_i^{(3)}(\A) \subset  \dgm_i(\A) \cap \{u_1, u_2 \geq \wfs(\M)-\delta(\eps)  \}$ if $d_H(\A,\M) \leq \eps$. These inequalities are proved using the lower semicontinuity of the generalized gradient, as well as the Isotopy Lemma for distance functions. As $\wfs(\M)\geq \tau(\M)$, this can yield tighter constraints than in the statement of Proposition \ref{prop:regions}, but there is sadly no simple description of the function $\delta$ (e.g. in terms of $\tau(\M)$ or $\wfs(\M)$).

\begin{proof}
Proposition 5 from \cite{Attali2022_tight_bounds} states that if $\eps=d_H(\A,\M)  <(\sqrt 2-1)\tau(\M) $, then the offset $\A^r$ deformation-retracts onto $\M$ for any
\[r\in \left[\frac{1}{2}(\tau(\M)+\eps - \sqrt{\Delta}),\frac{1}{2}(\tau(\M)+\eps + \sqrt{\Delta}) \right]\]
and $\Delta = \tau(\M)^2 -2\eps\tau(\M) -\eps^2$.
Under the stronger assumption that $d_H(\A,\M)  <\tau(\M)/4 $, and using elementary calculus, we find that 
the offset $\A^r$ deformation-retracts onto $\M$ for any
\[r\in \left[\eps +\frac{\eps^2}{\tau(\M)},\tau(\M) -\frac{\eps^2}{\tau(\M)}\right].\]
This means in particular that the homology type, hence the homology, of $\A^r$ does not change in that interval; as a result, there can be no birth or death of intervals in the \v Cech persistence diagrams of $\A$ between $\eps +\frac{\eps^2}{\tau(\M)}$ and $\tau(\M) -\frac{\eps^2}{\tau(\M)}$, and all $(u_1,u_2)\in \dgm_i(\A)$ must either be such that $u_1,u_2\leq \eps +\frac{\eps^2}{\tau(\M)}$, or $u_1 \leq \eps +\frac{\eps^2}{\tau(\M)}, u_2 \geq \tau(\M) -\frac{\eps^2}{\tau(\M)}$, or $u_1,u_2 \geq \tau(\M) -\frac{\eps^2}{\tau(\M)}$. 
This almost proves that the partition of $\dgm_i(\A)$ into $ \dgm_i^{(1)}(\A)$, $ \dgm_i^{(2)}(\A)$ and $ \dgm_i^{(3)}(\A)$ as defined in the statement is correct, except that the definition of $ \dgm_i^{(2)}(\A)$ requires that $u_1\leq \eps$, whereas we only have obtained that $u_1\leq \eps +\frac{\eps^2}{\tau(\M)}$.

Let $\gamma:\dgm_i(\A)\cup \partial\Omega\to \dgm_i(\M)\cup \partial\Omega$ be any optimal matching for the bottleneck distance: 
then the Bottleneck Stability Theorem states that any point $u=(u_1,u_2)\in \dgm_i(\A)$ is such that $\|u-\gamma(u)\|_\infty\leq d_H(\A,\M) \leq \eps$. Suppose that $u$ is such that $u_1\leq \eps +\frac{\eps^2}{\tau(\M)}$ and $u_2\geq \tau(\M) -\frac{\eps^2}{\tau(\M)}$. Its distance in the infinity norm to the diagonal $\partial \Omega$ is   equal to $(u_2-u_1)/2 \geq (\tau(\M) -\eps - 2\eps^2/\tau(\M))/2 \geq \frac{5}{16} \tau(\M) > \eps$, where we use the fact that $\eps <\tau(\M)/4$.
Hence $\gamma(u)$ must belong to $\dgm_i(\M)$. Let $(v_1,v_2)$ denote $\gamma(u)$. 
The Isotopy Lemma for distance functions shows that $\dgm_i(\M)$ only contains two types of points: points of the shape $(0,w_2)$, which correspond to the homology of $\M$ itself, and points of the shape $(w_1,w_2)$, where in both cases $w_1,w_2$ are critical values of $d_\M$.  In particular, both $w_1$ and $w_2$ must be greater than $\wfs(\M)$.
Let $\dgm_i^{(2)}(\M)$ denote the multiset of all points of the first type, and $\dgm_i^{(3)}(\M)$ denote the multiset of all points of the second type.
If $v_1$ was non-zero, it would have to be greater than $\wfs(\M) \geq \tau(\M)$, and we would have $\|u-\gamma(u)\|_\infty\geq |u_1-v_1| \geq \tau(\M) -\eps -\frac{\eps^2}{\tau(\M)} \geq \tau(\M)/2 > \eps$, which would be a contradiction (we once again use that $\eps<\tau(\M)/4$).
Hence $v_1$ must be $0$ (i.e. $\gamma(u)\in\dgm_i^{(2)}(\M) $), and $u_1 = |u_1-v_1|\leq  \|u-\gamma(u)\|_\infty \leq \eps$. This proves the correctness of the partition into regions from the statement.

Consider now $u=(u_1,u_2)\in \dgm_i^{(1)}(\A)$. All $v=(v_1,v_2)\in \dgm_i(\M)$ are such that 
\[\|u-v\|_\infty\geq v_2-u_2 \geq \wfs(\M) -\eps -\frac{\eps^2}{\tau(\M)} \geq  \tau(\M)\left(1-\frac{1}{4}-\frac{1}{16}\right) > \eps \geq \|u -\gamma(u)\|_\infty, \]
hence $\gamma(u)$ must belong to $\partial\Omega$.
This completes the proof of the first bullet point of the statement.

We have already shown that if $\gamma$ is an optimal matching and $u\in \dgm_i^{(2)}(\A)$, then $\gamma(u) \in \dgm_i(\M)$ is of the form $(0,v_2)$. As $\gamma$ maps at most a single point of $\dgm_i^{(2)}(\A)$ to each point of $\dgm_i(\M)$, the number of such points is upper bounded by the number of points of the form $(0,v_2)$ with $v_2\geq \wfs(\M)$ in $\dgm_i(\M)$. Though $\dgm_i(\M)$ need not be finite, applying Corollary 3.34 from \cite{chazal2016structure} to $d_\M$ shows that $\dgm_i(\M)$ is q-tame, and in particular that it contains only a finite number $N$ of points $(v_1,v_2)$ with $v_1\leq \wfs(\M)/4$ and $v_2\geq \wfs(\M)/2$. Hence the cardinality of $\dgm_i^{(2)}(\A)$ is bounded by $N$.

It only remains to show that $\gamma$ can be chosen such that if $u = (u_1,u_2)\in\dgm_i^{(2)}(\A)$, respectively $u' \in \dgm_i^{(3)}(\A)$, then $|u_2 -\gamma(u)_2|\leq C\eps^2$, respectively $\|u'-\gamma(u')\|_\infty \leq C\eps^2$. 
To that end, remember first that as shown above, our starting optimal matching $\gamma$ must be such that points in $ \dgm_i^{(2)}(\A)$ must be matched to points $ \dgm_i^{(2)}(\M)$. Conversely and for the same reasons, points in $ \dgm_i^{(2)}(\M)$ must be matched to points in $\dgm_i^{(2)}(\A)$. Similarly, points $u\in\dgm_i^{(3)}(\A)$ can only be matched to points in $\dgm_i^{(3)}(\M)$ or to the diagonal $\partial \Omega$; otherwise, $\|u-\gamma(u)\|_\infty \geq |u_1 -\gamma(u)_1| = u_1$ would be too large. Likewise, points in $\dgm_i^{(3)}(\M)$ can only be matched to points in $\dgm_i^{(3)}(\A)$ or to the diagonal.
Hence $\gamma$ defines disjoint submatchings 
\[ \gamma^{(2)}:\dgm_i^{(2)}(\A)\rightarrow \dgm_i^{(2)}(\M)\] and 
\[\gamma^{(3)}:\dgm_i^{(3)}(\A)\cup \partial\Omega\to \dgm_i^{(3)}(\M)\cup \partial\Omega.\]
Now let $R(\M)$ be the radius of the smallest ball that contains $\M$, and consider the functions
\[a: \R^d \rightarrow \R, x\mapsto \min(\max(d_\A(x), \tau(\M)/2), R(\M)) \]
and 
\[m: \R^d \rightarrow \R, x\mapsto \min(\max(d_\M(x), \tau(\M)/2), R(\M)). \]
Let us compare the persistence diagrams $\dgm_i(a)$ and $\dgm_i(m)$ of the sublevel sets filtration of $a$ and $m$ and the \v Cech persistence diagrams $\dgm_i(\A)$ and $\dgm_i(\M)$ respectively (which are by definition the persistence diagrams of the sublevel sets filtration of $d_\A$ and $d_\M$).

Note first that $d_\A$ and $d_\M$ can have no critical values strictly greater than $R(\M)$, as a critical point must belong to the convex hull of its projections.
Note also that for any $t \in [\tau(\M)/2,R(\M)]$, the sublevel set $a^{-1}(-\infty,t] $ is exactly equal to $d_\A^{-1}(-\infty,t] = \A^t $.
Consequently, $\dgm_i(a)$ contains exactly two disjoint types of points.
The first type are points of the form $(\tau(\M)/2, u_2)$, which are in bijection with the points $(u_1,u_2) \in \dgm_i(\A)$ with $u_1\leq \tau(\M)/2$ (the bijection maps $(u_1,u_2) \mapsto (\tau(\M)/2, u_2)$); those are exactly the points in $\dgm_i^{(2)}(\A)$.
The second type are points of the form $(u_1,u_2)$ with $u_1\geq \tau(\M)/2$, which are in trivial bijection (the bijection is the identity) with the points in $\dgm_i(\A)$ that satisfy the same condition; those are exactly the points of $\dgm_i^{(3)}(\A)$.
The points of $\dgm_i^{(1)}(\A)$ cannot be ``seen'' in $\dgm_i(a)$.
We will call $\dgm_i^{(2)}(a)$ the subdiagram comprised of the points of the first type, and $\dgm_i^{(3)}(a)$ the subdiagram of $\dgm_i(a)$ comprised of the points of the second type.

Similarly, $\dgm_i(m)$ contains two types of points: the first type are points of the form $(\tau(\M)/2, v_2)$, which are in bijection with the points $(0,v_2) \in \dgm_i(\M)$ (the bijection maps $(0,v_2) \mapsto (\tau(\M)/2, v_2)$); those are exactly the points in $\dgm_i^{(2)}(\M)$. The second type are points of the form $(v_1,v_2)$ with $v_1\geq \tau(\M)/2$, which are in trivial bijection (the bijection is the identity) with the points in $\dgm_i(\M)$ that satisfy the same condition; those are exactly the points of $\dgm_i^{(3)}(\M)$.
We will call $\dgm_i^{(2)}(m)$ the subdiagram of $\dgm_i(m)$ comprised of the points of the first type, and $\dgm_i^{(3)}(m)$ the subdiagram comprised of the points of the second type.

Recall that Lemma \ref{lem:fred_keeps_saying_this}  states that $|d_\M(z)-d_\A(z)|\leq \frac{\eps^2}{2d_\M(z)}\p{1+\frac{d_\M(z)}{\tau(\M)}}$ for any $z\in \R^d\backslash\M$.
If $z\in \R^d$ is such that $d_\A(z)\leq \tau(\M)/2$, then $a(z) = m(z) = \tau(\M)/2$; if it is such that $d_\M(z) \geq R(\M)$, then $a(z) = m(z) = R(\M)$.
Otherwise, $d_\M(z) \geq d_\A(z) - d_H(\A,\M) \geq \tau(\M)/4$ and $d_\M(z) \leq R(\M)$, hence 
\[|d_\M(z)-d_\A(z)|\leq \frac{\eps^2}{2d_\M(z)}\p{1+\frac{d_\M(z)}{\tau(\M)}} \leq  \frac{2\eps^2}{\tau(\M)}\p{1+\frac{R(\M)}{\tau(\M)}} = C\eps^2 , \]
where $C$ is as defined in the proposition.
This means that
\[\|m - a\|_\infty \leq  C\eps^2 .\]
Due to the Bottleneck Stability Theorem, the diagrams $\dgm_i(a)$ and $\dgm_i(m)$ must be at bottleneck distance less than $C\eps^2$. 

Furthermore, let $\delta$ denote $\max_{u\in \dgm_i(\A)\cup \partial \Omega} \| u-\gamma(u) \|_\infty$.
The matching $\gamma$ (and in particular the submatchings $\gamma^{(2)}$ and $\gamma^{(3)}$) also induces (through the correspondence detailed above between the points of $\dgm_i(a)$ and a subset of the points of $\dgm_i(\A)$) a matching $\gamma'$ between $\dgm_i(a)$ and $\dgm_i(m)$ such that $\max_{u\in \dgm_i(a)\cup \partial \Omega} \| u-\gamma'(u) \|_\infty \leq \delta$.
Hence the bottleneck distance between  $\dgm_i(a)$ and $\dgm_i(m)$ is at most $\min(\delta,C\eps^2)$. 

Let $\beta : \dgm_i(a)\cup \partial\Omega\to \dgm_i(m)\cup \partial\Omega$ be an optimal matching for the bottleneck distance.
For similar reasons as for $\gamma$, the matching $\beta$ can also be decomposed into two disjoint submatchings
\[\beta^{(2)}:\dgm_i^{(2)}(a)\rightarrow \dgm_i^{(2)}(m)\]
and
\[\beta^{(3)}:\dgm_i^{(3)}(a)\cup \partial\Omega\to \dgm_i^{(3)}(m)\cup \partial\Omega.\]
We can use the two matchings $\beta^{(2)}$ and $\beta^{(3)}$ to define a new optimal matching $\tilde \gamma : \dgm_i(\A)\cup \partial\Omega\to \dgm_i(\M)\cup \partial\Omega$ as follows: 
\begin{itemize}
    \item The points in $ \dgm_i^{(1)}(\A)$ are matched by $\tilde \gamma$ to $\partial\Omega$ as with $\gamma$.
    \item Given $u = (u_1,u_2)\in \dgm_i^{(2)}(\A)$, let $u' = (\tau(\M)/2,u_2)$ be the point of $\dgm_i^{(2)}(a)$  with which $u$ is in bijection. We let $\tilde \gamma$ match $u$ with the point $v = (0,v_2) \in \dgm_i^{(2)}(\M)$ which is in bijection with $\beta^{(2)}(u') = (\tau(\M)/2,v_2) \in \dgm_i^{(2)}(m)$. Then $|u_2 -v_2| \leq \min\left(\delta, C\eps^2\right) $ due to the optimality of $\beta$, and this defines a bijective matching $\tilde \gamma ^{(2)} : \dgm_i^{(2)}(\A) \rightarrow\dgm_i^{(2)}(\M)$.
Note also that $\max_{u\in\dgm_i^{(2)}(\A) }|u_1 -\tilde \gamma^{(2)}(u)_1| = \max_{u\in\dgm_i^{(2)}(\A) }u_1 =\max_{u\in\dgm_i^{(2)}(\A) }|u_1 -\gamma^{(2)}(u)_1|$, hence $\max_{u\in\dgm_i^{(2)}(\A)}\|u-\tilde \gamma^{(2)}(u)\|_\infty \leq \delta$.
\item We have seen that $\dgm_i^{(3)}(a) =  \dgm_i^{(3)}(\A)$ and $\dgm_i^{(3)}(m) =  \dgm_i^{(3)}(\M)$. We simply define the restriction and corestriction $\tilde \gamma^{(3)}:\dgm_i^{(3)}(\A) \cup \partial \Omega \rightarrow \dgm_i^{(3)}(\M) \cup \partial \Omega $ of $\tilde \gamma$ as being equal to $\beta^{(3)}:\dgm_i^{(3)}(a)\cup \partial\Omega\to \dgm_i^{(3)}(m)\cup \partial\Omega$.
The optimality of $\beta$ implies that 
$\max_{u\in\dgm_i^{(3)}(\A)}\|u-\tilde \gamma(u)\|_\infty \leq \min\left(\delta, C\eps^2\right)$.
\end{itemize}
Thus the global matching $\tilde \gamma : \dgm_i(\A)\cup \partial\Omega\to \dgm_i(\M)\cup \partial\Omega$ is well-defined, is optimal for the bottleneck distance, and satisfies the conditions stated in the proposition.
This completes the proof.
\end{proof}

\section{ \v Cech persistence diagrams for subsets of generic submanifolds}
\label{sec:generic}

The improved Bottleneck Stability Theorem (\Cref{prop:regions}) yields information relative to the location of points in the \v Cech PD of $\A$, but not about their numbers. However, both the $\alpha$-total persistence of $\dgm_i(\A)$ and the distance $\OT_p(\dgm_i(\A), \dgm_i(\M))$ for $p<\infty$ crucially depend on the number of points in $\dgm_i(\A)$ having small persistence.  
Unfortunately, no control on, say, the total persistence, can exist without additional assumptions. Indeed, in general, even the $\alpha$-total persistence of the \v Cech PD of the submanifold $\M$ can be infinite.

\begin{example}\label{ex:counterexample_non_morse}
Let $f:x\in \R \mapsto 1+ x^4\sin(1/x)^2$. Consider the $C^2$ curve $\M$ in $\R^2$ defined as the union of the graphs of the functions $f$ and $-f$ on $[-2,2]$. Being $C^2$, the curve has a positive reach \cite{federer1959curvature}.  For $i=1$, the \v Cech PD of $\M$ contains a sequence of points $(1,\ell_n)$ for $n\geq 1$, where $\ell_n =1+\Theta(n^{-4})$.
In particular, as $\sum_{n\geq 1} (\ell_n-1)^{\alpha}=+\infty$ for $\alpha<1/4$, the $\alpha$-total persistence of $\dgm_1(\M)$ is infinite for such a value of $\alpha$. By considering the product $\M^m \subset \R^{2m}$,  one can also build an $m$-dimensional $C^2$ compact submanifold without boundary such that $\dgm_1(\M)$ has an infinite $\alpha$-total persistence for $\alpha<m/4$. 
\end{example}

The existence of such counterexamples is explained by the fact that the distance function $d_\M$ to a set $\M$ is not well-behaved in general, even when the set $\M$ is smooth. In contrast to this bleak general case, Song, Yim \& Monod (in the case of surfaces in $\R^3$) and the authors (in the general case) studied the distance function $d_\M$ to $\M$ when $\M$ is a \textit{generic} submanifold \cite{arnal2023critical, song2023generalized}. Their findings indicate that, although counterexamples such as the one presented in \Cref{ex:counterexample_non_morse} exist, they are extremely uncommon in a sense which can be made precise. To do so, we first need to introduce topological Morse functions.

The theory of persistent homology was historically developed for Morse functions $f: \R^d\to \R$. A Morse function $f$ is a $C^2$ function whose critical points $x$ (points for which $d_xf=0$)  are non-degenerate (meaning that the Hessian of $f$ at $x$ is non-degenerate). The index of the critical point $x$ is equal to the number of negative eigenvalues of the corresponding Hessian. The changes of topology of the sublevel sets of such a function are perfectly understood. First, the isotopy lemma states that two sublevel sets $f^{-1}(-\infty,u_1]$ and $f^{-1}(-\infty,u_2]$ are isotopic if no critical values are found in the interval $[u_1,u_2]$ and if $f^{-1}[u_1,u_2]$ is compact. Second, if $f^{-1}[u_1,u_2]$ is compact and contains the critical points $x_1,\dots,x_K$, then $f^{-1}(-\infty,u_2]$ has the homotopy type of  $f^{-1}(-\infty,u_1]$ with  cells $e_k$ of dimension equal to the index of $x_k$ attached along their boundaries (see e.g. \cite{milnor1963morse} for a much more in-depth treatment).

In such a situation, the PD $\dgm_i(f)$ has a clear interpretation: the coordinates $(u_1,u_2)$ of a point $u\in \dgm_i(f)$ correspond to the critical value of a critical point of index $i$ and $i+1$ respectively. Informally, the corresponding $i$-dimensional topological feature appears with the attachment of a $i$-dimensional cell at value $u_1$, and is ``killed'' by the attachment of a $(i+1)$-dimensional cell at value $u_2$. 

The notion of Morse function extends to continuous functions with the following definition.

\begin{definition}[Topological Morse functions \cite{morse1959topologically}]
    Let $U\subset \R^d$ be an open set and let $f:U\to \R$ be a continuous function. 
    \begin{itemize}
        \item A point $z\in U$ is said to be a topological regular point of $f$ if there is a homeomorphism $\phi:V_1\to V_2$ between open neighborhoods $V_1$ of $0$ in $\R^d$ and $V_2$ of $U$ in $\R^d$ with $\phi(0)=z$ and such that for all $x=(x_1,\dots,x_d)\in V_1$, 
        \begin{equation}
            f\circ \phi(x) = f(z) + x_d.
        \end{equation}
        \item A point $z\in U$ is said to be a topological critical point of $f$ if it is not a topological regular point of $f$.
        \item A point $z\in U$ is said to be a non-degenerate topological critical point of $f$ of index $i$ if there exist an integer $0\leq i \leq d$ and a homeomorphism $\phi:V_1\to U_2$ between open neighborhoods $V_1$ of $0$ in $\R^d$ and $V_2$ of $U$ in $\R^d$ with $\phi(0)=z$ such that for all $x=(x_1,\dots,x_d)\in V_1$, 
        \begin{equation}
            f\circ \phi(x) = f(z) - \sum_{j=1}^i x_j^2 + \sum_{j=i+1}^d x_j^2.
        \end{equation}
        \item The function $f$ is said to be a topological Morse function if all its topological critical points are non-degenerate.
    \end{itemize}
\end{definition}
For topological Morse functions, both the isotopy lemma and the handle attachment lemma stay valid: 
\begin{lemma}[Isotopy Lemma]
Let $f:\R^d\rightarrow \R$ be a proper topological Morse function. 
Let $ a<b$ be such that $f^{-1}[a,b]$ contains no topological critical point. Then $f^{-1}(-\infty,a]$ is a deformation retract of $f^{-1}(-\infty,b]$.
\end{lemma}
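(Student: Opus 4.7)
The plan is to adapt the classical proof of the smooth Isotopy Lemma to the topological setting, replacing the gradient-like vector field with the local normal forms at regular points. By properness of $f$, the set $K \defeq f^{-1}[a,b]$ is compact and, by hypothesis, consists entirely of topological regular points. Hence each $z\in K$ admits an open neighborhood $V_z\subset \R^d$ and a homeomorphism $\phi_z:U_z\to V_z$ with $\phi_z(0)=z$ and $f\circ \phi_z(x_1,\dots,x_d)=f(z)+x_d$. Within $V_z$, the local \emph{descent map} $\Psi_z^s(y) \defeq \phi_z(\phi_z^{-1}(y) - s e_d)$ decreases $f$ by exactly $s$ as long as one stays inside $V_z$; this serves as the topological substitute for the negative gradient flow. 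Compactness of $K$ yields a finite subcover $V_{z_1},\dots,V_{z_N}$ of $K$.

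I will then establish the following local claim: for every $c\in[a,b]$, there exists $\eta(c)>0$ such that $f^{-1}(-\infty,c-\eta(c)]$ is a deformation retract of $f^{-1}(-\infty,c+\eta(c)]$. The compact level set $f^{-1}(c)$ is covered by finitely many charts as above, and by shrinking $\eta(c)$ sufficiently, one can arrange that $f^{-1}[c-\eta(c),c+\eta(c)]$ is contained in a union of ``cylindrical'' subsets of the form $\phi_{z_j}(W_j\times [-\eta(c),\eta(c)])$ with $W_j$ open in $\R^{d-1}$. Within each such cylinder, linearly contracting the last coordinate towards $0$ yields an obvious deformation retraction onto $\phi_{z_j}(W_j\times[-\eta(c),0])$, which is the identity on the lower cap. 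The local claim follows from patching these cylindrical retractions together. Finally, covering $[a,b]$ by finitely many intervals of the form $[c-\eta(c),c+\eta(c)]$ and composing the corresponding deformation retractions yields the full statement.

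The main obstacle is the patching step: unlike in the smooth category, one cannot simply take a partition-of-unity-weighted sum of ``descent vector fields'', since the relevant objects here are only homeomorphisms. A clean resolution is to process the charts sequentially---flow each point inside one chart via $\Psi_{z_j}^{\cdot}$ until it enters the ``safe'' region $f\leq c-\eta(c)$ or exits a fixed compact core of that chart, then continue the flow in any other chart covering the current position---and to verify continuity of the resulting global flow by analyzing it chart by chart on overlaps. This is essentially the argument of Morse's original paper on topologically non-degenerate functions; the verification is elementary but requires careful bookkeeping of the overlaps between the finitely many charts and of the compact cores on which each local descent is guaranteed to be well-defined.
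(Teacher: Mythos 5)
The paper does not actually prove this lemma; it remarks only that the argument is ``roughly the same as for smooth Morse functions'' and defers to \cite[Theorems 4 and 5]{song2023generalized}. Your outline follows exactly that philosophy, and the scaffolding is the right one: compactness of $f^{-1}[a,b]$, local normal forms $f\circ\phi_z(x)=f(z)+x_d$ at topological regular points, local descent maps $\Psi_z^s$, reduction to local retractions between levels $c\pm\eta(c)$, and composition over a finite cover of $[a,b]$.

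The genuine gap is in the patching step, and it is sharper than ``careful bookkeeping.'' The sequential rule you describe — flow in chart $j$ until hitting $\{f\le c-\eta\}$ or leaving a compact core of $V_{z_j}$, then hand off to any other chart containing the current point — does not define a homotopy that is continuous in the starting point. The first exit time from a compact core is not continuous in the initial condition (trajectories can graze $\partial K_j$ tangentially, so nearby starting points may exit at wildly different times), and any deterministic hand-off rule (e.g.\ smallest index) can send nearby points to different charts, after which they descend along entirely unrelated curves. This is precisely the failure mode the smooth proof avoids by averaging a global vector field with a partition of unity, and an averaged \emph{direction} is unavailable here.

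The repair, which stays in the topological category, is to average the scalar descent \emph{amount} rather than the direction. Choose compacts $K_j\subset K_j'\subset V_{z_j}$ with $\{\mathrm{int}\,K_j\}$ covering $f^{-1}(c)$, and continuous $\lambda_j:\R^d\to[0,1]$ with $\lambda_j\equiv 1$ on $K_j'$ and $\mathrm{supp}\,\lambda_j\subset V_{z_j}$. Let the $j$-th elementary homotopy $H_j$ descend $p$ in chart $j$ by the scalar $\lambda_j(p)\max\bigl(f(p)-(c-\eta),0\bigr)$; this is continuous, is the identity outside $\mathrm{supp}\,\lambda_j$ and on $\{f\le c-\eta\}$, and never takes $f$ below $c-\eta$. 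Concatenate $H_1,\dots,H_N$ in time. By uniform continuity of the charts on compacts, each $H_k$ moves any point by at most $\omega(2\eta)$ in ambient distance for a modulus $\omega$ independent of $k$; shrink $\eta$ so that $(N-1)\,\omega(2\eta)<\min_j\mathrm{dist}(K_j,\partial K_j')$. A point starting in $\mathrm{int}\,K_j$ is then still in $K_j'$ when step $j$ is reached, $\lambda_j=1$ there, and that single step lowers $f$ all the way to $c-\eta$; subsequent steps fix it. This yields the local strong deformation retraction of $f^{-1}(-\infty,c+\eta]$ onto $f^{-1}(-\infty,c-\eta]$, after which the rest of your proposal (finite cover of $[a,b]$ and composition) goes through as written.
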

\begin{lemma}[Handle Attachment Lemma]
Let $f:\R^d\rightarrow \R$ be a proper topological Morse function. 
Let $c\in \R$ and $\eps>0$ be such that 
$f^{-1}[c-\eps,c+\eps]$ contains no topological critical point except for $z_1,\ldots,z_k \in f^{-1}(c)$, with $z_j$ of index $i_j$.
Then $f^{-1}(-\infty,c+\eps]$ is homotopically equivalent to $f^{-1}(-\infty,c-\eps]$ with cells $B^{i_1},\ldots,B^{i_k}$ of dimension $i_1,\ldots,i_k$ attached,
    i.e.
    \[f^{-1}(-\infty,c+\eps] \simeq f^{-1}(-\infty,c-\eps]\cup B^{i_1}\cup\ldots \cup B^{i_k}.\]
\end{lemma}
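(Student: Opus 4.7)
My approach would mirror Milnor's classical proof of the handle attachment theorem for smooth Morse functions (\emph{Morse Theory}, Thm.\ 3.2), adapted to the topological setting. The crucial observation is that the definition of a non-degenerate topological critical point already supplies a local chart in which $f$ equals an explicit quadratic form, so the quadratic-form argument from Milnor's proof transfers verbatim, and the only new input needed is the Isotopy Lemma (for topological Morse functions) to handle the region outside the charts.

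First, since the $z_j$ are distinct, the definition of non-degenerate topological critical point gives pairwise disjoint open neighborhoods $V_j$ of $z_j$ and homeomorphisms $\phi_j : \tilde V_j \to V_j$ (with $\tilde V_j$ an open neighborhood of $0$ in $\R^d$ and $\phi_j(0)=z_j$) such that
\[
f \circ \phi_j(x_1,\ldots,x_d) \;=\; c \;-\; x_1^2 - \ldots - x_{i_j}^2 \;+\; x_{i_j+1}^2 + \ldots + x_d^2.
\]
Applying the Isotopy Lemma to the bands $f^{-1}[c-\eps, c-\eps']$ and $f^{-1}[c+\eps', c+\eps]$ (both free of topological critical points by hypothesis), I may shrink $\eps$ to any smaller positive $\eps'$ without changing either sublevel set up to homotopy. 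I therefore fix $\eps$ small enough that the relevant pieces of $f^{-1}[c-\eps,c+\eps]\cap V_j$ sit inside a fixed coordinate ball in $\tilde V_j$, choose an auxiliary $\delta\in (0,\eps)$, and define the standard handle inside each chart,
\[
\tilde H_j \;:=\; \bigl\{\, x \in \tilde V_j \;:\; x_1^2 + \ldots + x_{i_j}^2 \le \eps + \delta,\ \ x_{i_j+1}^2 + \ldots + x_d^2 \le \delta\,\bigr\},
\]
and $H_j := \phi_j(\tilde H_j)$. The $i_j$-dimensional core $\phi_j\bigl(\tilde H_j \cap \{x_{i_j+1}=\ldots=x_d=0\}\bigr)$ is a disk whose boundary sphere lies in $f^{-1}(c-\eps)\subset f^{-1}(-\infty,c-\eps]$, so $H_j$ itself deformation retracts (rel boundary) onto this core, and adjoining $H_j$ to $f^{-1}(-\infty,c-\eps]$ is homotopically the same as attaching a single $i_j$-cell.

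The main work is then to build a deformation retraction of $f^{-1}(-\infty,c+\eps]$ onto $f^{-1}(-\infty,c-\eps]\cup\bigcup_j H_j$. Inside each chart, Milnor's explicit piecewise-defined map (expressed purely in the coordinates $x_1,\ldots,x_d$, with separate regimes for points inside the handle, outside the handle with $f\le c-\eps$, and outside the handle with $f>c-\eps$) gives such a retraction; because it is the identity near $\partial \tilde V_j$ it extends by the identity to $V_j^c$. Outside $\bigcup_j V_j$, by hypothesis $f$ has no topological critical values in $[c-\eps,c+\eps]$, and the Isotopy Lemma produces a deformation pushing $f^{-1}[c-\eps,c+\eps]\setminus\bigcup_j V_j$ down into $f^{-1}(-\infty,c-\eps]$. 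Gluing these on the overlap (where both restrict to the identity) yields the global retraction; composing with the collapse of each $H_j$ onto its core completes the proof. The step I expect to be most delicate is exactly this gluing: the Isotopy Lemma as stated is for proper functions on all of $\R^d$, so I would need a minor refinement (e.g.\ a cutoff version, or enlarging the charts so that Milnor's local retraction already covers the full band $f^{-1}[c-\eps,c+\eps]\cap V_j$) to guarantee compatibility on $\partial V_j$ without having to run an Isotopy Lemma on a domain with boundary.
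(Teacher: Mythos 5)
The paper itself does not prove this lemma; it cites \cite{song2023generalized} and remarks that the proofs are ``roughly the same as for smooth Morse functions.'' Your proposal is therefore aligned in spirit with what the paper intends, and most of your outline is sound: the definition of a non-degenerate topological critical point hands you Milnor's quadratic normal form on the nose, the local handle $H_j$ and its collapse onto the core $i_j$-cell go through verbatim, and the Isotopy Lemma (in the topological-Morse version stated just above) is the right global tool.

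Where you diverge from Milnor --- and where your flagged worry is real --- is in the retraction of $f^{-1}(-\infty,c+\eps]$ onto $f^{-1}(-\infty,c-\eps]\cup\bigcup_j H_j$. Milnor does not build this retraction by gluing a chart-local retraction to an exterior isotopy (which, as you say, would require a version of the Isotopy Lemma on a domain with boundary and compatibility on $\partial V_j$). Instead he introduces an auxiliary function $F\leq f$, agreeing with $f$ outside the charts and lowered inside them by a cutoff so that $F^{-1}(-\infty,c+\eps]=f^{-1}(-\infty,c+\eps]$ while $F^{-1}(-\infty,c-\eps]=f^{-1}(-\infty,c-\eps]\cup\bigcup_j H_j$. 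A single global application of the Isotopy Lemma, applied to $F$ rather than $f$, then yields the retraction with no gluing at all. The only explicit piecewise-defined map in Milnor is the subsequent collapse of each $H_j$ onto its core cell. Adopting that route dissolves your gluing obstacle. The genuinely new point to verify in the topological setting is that $F$, which is a homeomorphic chart composed with a cutoff modification of a quadratic form, is again a proper topological Morse function with no topological critical values in $[c-\eps,c+\eps]$; in the smooth case this is a gradient computation, and in the topological case one checks it chart-by-chart directly from the explicit formula for $F\circ\phi_j$, since $F$ agrees with $f$ outside the charts. With that substitution your proof is complete.
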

Their proofs are roughly the same as for smooth Morse functions --see \cite[Theorems 4 and 5]{song2023generalized} for details.
As a consequence, the description of PDs for Morse functions also stays valid for topological Morse functions.

\begin{figure}
    \centering
\includegraphics[width=0.4\textwidth]{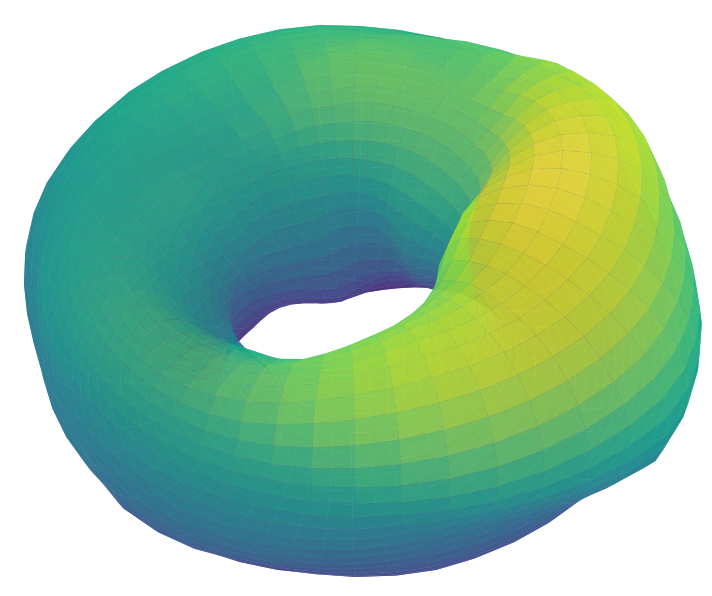}
    \caption{A generic torus.}
    \label{fig:generic}
\end{figure}

It is not true in general that the distance function $d_\A$ to a compact set $\A$ is a topological Morse function, as shown by Example \ref{ex:counterexample_non_morse}. 
Though the Isotopy Lemma for distance functions (\Cref{lem:isotopy_lemma_general}) is an analogue of the Isotopy Lemma for Morse functions, there is no equivalent to the Handle Attachment Lemma to control the changes occurring at critical values, and as a result little can be said regarding the topology of the sublevel sets of $d_\A$ without further assumptions.
However, the distance function $d_\M$ to a compact $C^2$ submanifold $\M\subset \R^d$ turns out to be a topological Morse function in a ``generic'' sense, as was proven by the authors. 

\begin{thm}[Genericity Theorem \cite{arnal2023critical}]
\label{thm:genericity}
    Let $M$ be a compact $C^2$ (abstract) manifold. Then the set of $C^2$ embeddings $i:M\to \R^d$ such that
\begin{enumerate}
\item the distance function $d_{i(M)}:  \R^d\backslash i(M)\to \R$ is a topological Morse function,
 \item for every $z\in \Crit(\M)$, the projections $\sigma_\M(z)$ are the vertices of a non-degenerate simplex of $\R^d$ and $z$ belongs to its relative interior,
    \item the set $\Crit(\M)$ is finite,
    \item for every $z\in \Crit(\M)$ and every $x\in \sigma_\M(z)$, the sphere $S(z,d_\M(z))$ is non-osculating $\M$ at $x$, in the sense that there exist $\delta >0$ and $\alpha>0$ such that for all $y\in \M \cap B(x,\delta)$,
        \begin{equation}\label{eq:P3}
            \|y-z\|^2 \geq   \|x-z\|^2+\alpha\|y-x\|^2,
        \end{equation} 
    \item there exist constants $C>0$ and $\mu_0\in (0,1)$ such that for every $\mu\in [0,\mu_0)$, any point $x$ such that $\|\nabla d_\M(x) \|\leq \mu$ is at distance at most $C\mu$ from $\Crit(\M)$,
\end{enumerate}
is open and dense in the set of $C^2$ embeddings $M\to \R^d$ for the Whitney $C^2$-topology.
\end{thm}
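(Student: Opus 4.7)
The plan is to prove openness and density of the set of embeddings satisfying (1)--(5) separately, with density requiring the bulk of the work.

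For openness, each of the five conditions is stable under sufficiently small $C^2$-perturbations of an embedding $i_0$ satisfying them. Condition (3) provides a finite list of critical points of $d_{i_0(M)}$; by the implicit function theorem applied to the system of equations characterizing such critical points (equidistance from a $k$-tuple of projections and orthogonality to the corresponding tangent spaces, together with $z$ lying in the relative interior of the convex hull), each critical point persists and varies continuously in $z$ under small perturbations of $i$. The simplicial non-degeneracy (2) and the strict inequality (4) are open conditions, hence remain valid, while (5) together with (3) prevents new critical points from appearing elsewhere; these in turn imply that (1) is preserved.

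For density, I would set up a parametric transversality argument in $\Emb^2(M,\R^d)$ endowed with the Whitney $C^2$-topology. For each $k\geq 2$, introduce the incidence variety
\[\cZ_k = \big\{(p_1,\dots,p_k,z,i) \in M^k\times\R^d\times \Emb^2(M,\R^d) : \|z-i(p_j)\| = \|z-i(p_1)\| \text{ and } z-i(p_j)\perp d_{p_j}i(T_{p_j}M)\ \forall j\big\}\]
and its subvariety $\cZ_k^{\mathrm{crit}}$ on which $z$ lies in the relative interior of $\Conv(i(p_1),\dots,i(p_k))$. A jet-transversality argument, perturbing the $0$- and $1$-jets of $i$ near each $p_j$, shows that for a $C^2$-open and dense set of $i$, the projection $\cZ_k^{\mathrm{crit}} \to \Emb^2(M,\R^d)$ has $0$-dimensional fibers in general position, yielding (2) and (3) simultaneously (with the dimension count forcing $k\leq d+1$ and eliminating higher-order coincidences). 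Condition (4), the non-osculating inequality, is a second-order statement that can be achieved by perturbing the $2$-jet of $i$ at each projection point. Once (2)--(4) hold, condition (1) is obtained by constructing a topological Morse chart at each critical point, and (5) then follows from a compactness argument: a sequence $x_n$ with $\|\nabla d_{\M}(x_n)\|\to 0$ staying uniformly away from $\Crit(\M)$ would admit a limit point which, by lower-semicontinuity of the generalized gradient, would itself be a critical point, contradicting finiteness.

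The main obstacle is verifying (1) rigorously --- namely, producing the local topological Morse normal form at each critical point. Because $d_{\M}$ is only $C^0$ near critical points, standard Morse-theoretic tools do not apply, and the local homeomorphism must be constructed by hand. The idea is to combine (2), which via the convex-hull structure of the projections yields an explicit model for $d_{\M}$ along directions parallel to the affine span of the projections, with (4), which via the second fundamental form of $\M$ seen from $z$ produces a non-degenerate quadratic form in the complementary directions. Splicing these two ingredients together (for instance via an explicit flow argument transverse to the level sets) gives a chart realizing the topological Morse normal form of a non-degenerate critical point of appropriate index, completing the density step and hence the theorem.
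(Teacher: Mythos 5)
The paper does not prove \Cref{thm:genericity}; it is imported wholesale from the reference \cite{arnal2023critical} (by the same authors) and used as a black box. There is therefore no ``paper's own proof'' to compare against, and your proposal can only be evaluated on its own terms.

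As a high-level roadmap your plan is reasonable for this type of statement: separate openness from density, treat openness via persistence and stability of the finitely many critical points, and treat density via a jet-transversality argument applied to an incidence variety in $M^k\times\R^d\times\Emb^2(M,\R^d)$. The remark that (1) is the genuine bottleneck---because $d_\M$ is merely continuous near a critical point, so a topological Morse chart must be built by hand from (2) and (4)---correctly identifies where the real work lies. However, that step is left entirely as a sketch (``splicing\ldots via an explicit flow argument''), and it is exactly the non-routine part; without it the density argument is not a proof.

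There is one concrete gap that would cause the argument to fail even if the rest were filled in: your derivation of condition (5) from (1)--(4) by ``compactness and lower semicontinuity of the generalized gradient'' only yields the \emph{qualitative} statement that for every $\rho>0$ there exists $\mu_0>0$ with $\{\|\nabla d_\M\|\leq\mu_0\}\subset\Crit(\M)^\rho$. Condition (5), by contrast, is a \emph{Lipschitz-type} bound: the distance to $\Crit(\M)$ must be controlled linearly by $\mu$, uniformly for $\mu\in[0,\mu_0)$. Soft compactness cannot produce the linear rate, and knowing that $d_\M$ is topologically Morse does not help either, since topological Morse normal forms are only homeomorphic, not bi-Lipschitz. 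One would instead need a direct quantitative estimate on $\|\nabla d_\M\|$ near each critical point, extracted from the explicit non-degeneracy hypotheses (2) and (4) (e.g.\ via the convex-hull description of $\nabla d_\M$ combined with the non-osculating inequality \eqref{eq:P3}), which is a separate argument your proposal does not supply. Additionally, your openness step invokes the implicit function theorem on ``the system of equations characterizing critical points,'' but those equations involve the set-valued projection $\sigma_\M(z)$ and the non-smooth function $d_\A$; one has to first use conditions (2)--(4) to locally reduce to a genuinely smooth finite system before the implicit function theorem is applicable, and this reduction is glossed over.
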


Put another way, given an abstract manifold $M$, ``almost all'' $C^2$ embeddings $\M$ of $M$ into $\R^d$ are such that $d_\M$ is a topological Morse functions, and satisfies other regularity conditions whose usefulness will become apparent in some of the coming proofs. In what follows, we will simply describe a $C^2$ submanifold $\M$ that satisfies Conditions 1-5 from Theorem~\ref{thm:genericity} as \textit{generic}. 
When $\M$ is generic, the topological critical points of $d_\M$ coincide with its differential critical points (see \cite[Tmh 1.8]{arnal2023critical}), and the \v Cech PD $\dgm_i(\M)$ can be related to the critical points of $d_\M$ in the same way as for smooth Morse function. We summarize this fact in the next proposition.

\begin{proposition}
\label{prop:pd_generic_submanifolds}
    Let $\M$ be a generic compact submanifold of $\R^d$.
    \begin{itemize}
        \item The set $\Crit(\M)$ is finite, and so is $\dgm_i(\M)$ for any $i
        \geq 0$. In particular, for all $\alpha>0$,
        \[\Pers_\alpha(\dgm_i(\M))<+\infty.\] 
         \item If $z\in \Crit(\M)$ is of index $i$, the set of its projections $\sigma_\M(z)$ forms a non-degenerate simplex of dimension at most $i$.
        \item 
        For any $i\geq 1$, the multiset of critical values $d_\M(z)$ of critical points $z$ of $\M$ of index $i$ is equal to the multiset
        \begin{equation}
            \{u_1:\ (u_1,u_2)\in \dgm_i(\M), \ u_1 \neq 0\}\cup \{u_2:\ (u_1,u_2)\in \dgm_{i-1}(\M)\}.
        \end{equation}
    \end{itemize}
\end{proposition}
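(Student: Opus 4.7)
The plan is to treat the three bullets as essentially independent consequences of the structure of $d_\M$ provided by \Cref{thm:genericity} together with the Isotopy and Handle Attachment Lemmas for topological Morse functions. The first bullet is then immediate: $\Crit(\M)$ is finite by condition 3 of \Cref{thm:genericity}, and since $d_\M$ is a proper topological Morse function (condition 1), the homotopy type of $d_\M^{-1}(-\infty, t]$ is locally constant away from the finitely many critical values of $d_\M$, with only finitely many cells attached overall by the Handle Attachment Lemma. Hence $\dgm_i(\M)$ consists of finitely many points, and $\Pers_\alpha(\dgm_i(\M))$ is a finite sum for every $\alpha > 0$.

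For the third bullet, I would apply the Handle Attachment Lemma pointwise at each critical value. At a topological critical point $z$ of index $i$ with critical value $c = d_\M(z)$, an $i$-cell is attached when one passes from just below $c$ to just above. The standard persistence pairing argument, which extends verbatim from smooth to topological Morse functions, shows that this attachment either creates a new generator of $H_i$, producing a new point with $u_1 = c$ in $\dgm_i(\M)$, or kills an existing generator of $H_{i-1}$, closing a point with $u_2 = c$ in $\dgm_{i-1}(\M)$. Conversely, every point with $u_1 > 0$ in $\dgm_i(\M)$ and every point in $\dgm_{i-1}(\M)$ arises this way. The exclusion of $u_1 = 0$ accounts precisely for the essential classes already present in the sublevel set $d_\M^{-1}(-\infty,0] = \M$ itself, which are born without the intervention of any critical point of $d_\M$ at positive distance.

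The main obstacle is the second bullet. Non-degeneracy of $\sigma_\M(z)$ is condition 2 of \Cref{thm:genericity}, so only the dimension bound $\dim \sigma_\M(z) \leq i$ needs justification. The geometric intuition is that, writing $\sigma_\M(z) = \{x_0, \dots, x_k\}$ with $z$ in the relative interior of the $k$-simplex they span, $d_\M$ strictly decreases at $z$ along each direction $x_j - z$, so the ``descending'' directions at $z$ span an affine subspace of dimension at least $k$. On the other hand, the definition of a non-degenerate topological critical point of index $i$ provides a local homeomorphism in which $d_\M - d_\M(z)$ takes the form $-\sum_{j \leq i} x_j^2 + \sum_{j > i} x_j^2$, whose sublevel set at the critical value deformation retracts onto an $i$-disk. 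Reconciling these two descriptions to conclude $k \leq i$ is delicate because there is no Hessian to diagonalize and no a priori smooth structure on the descending set. The cleanest plan is to invoke the local structure theorem for critical points of distance functions to generic submanifolds established in \cite{arnal2023critical} (which uses the full strength of conditions 2, 4, and 5 of \Cref{thm:genericity}), rather than redo the full local analysis here.
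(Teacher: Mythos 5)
Your proposal is correct and follows essentially the same route as the paper. For the first bullet, both you and the paper use Condition~3 of the Genericity Theorem for finiteness of $\Crit(\M)$, then the Isotopy and Handle Attachment Lemmas for topological Morse functions to conclude finiteness of the diagrams. For the third bullet, where you write ``the standard persistence pairing argument, which extends verbatim,'' the paper makes this explicit with a Mayer--Vietoris computation showing that $D_{i,b}+D_{i-1,d}$ equals the number of $i$-cells attached at a given critical value, which is precisely the homological content of your claim (one should be a little careful about the word ``verbatim,'' since the smooth-Morse version is sometimes proved via gradient-flow arguments that do not transfer, whereas the cell-attachment/long-exact-sequence argument does). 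For the second bullet, you correctly identify the dimension bound as the only part needing new input and, like the paper, defer it to the local structure theory in \cite{arnal2023critical}; the paper specifically cites Remark~7.3 there, which gives the decomposition $i = j + \dim(\sigma_\M(z))$ with $j\geq 0$ the index of the restriction of $d_\M$ to the stratum $\{\dim\sigma_\M = \dim\sigma_\M(z)\}$, from which $\dim \sigma_\M(z)\leq i$ is immediate.
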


Note that as for the \v Cech persistence diagram of any compact set, $\dgm_0(\M)$ contains only points of the form $(0,u_2)$ for some $u_2>0$.
Note also that the proof below shows that the \v Cech persistence modules of $\M$ are point-wise finite dimensional, and can be decomposed into intervals.

\begin{proof}
The proof is similar to that used in the case of smooth Morse functions, with the Isotopy Lemma and the Handle Attachment Lemma playing the same role; we briefly summarize it for completeness nonetheless.
We do not distinguish between differential and topological critical values and points, as they coincide.

The set $\Crit(\M)$ is finite: this is simply Condition 3. from the Genericity Theorem.
The Isotopy Lemma shows that there is no change in homotopy type between $\M^a$ and $\M^b$ if $0<a<b$ and $[a,b]$ contains no critical value. Similarly, $\M$ has the same homotopy type as $\M^a$ if $a>0$ is small enough. Hence changes in homology in the offsets can only occur at $0$, when the entire submanifold appears in the filtration, and at critical values of $d_\M$, and there can be no birth or death of interval between them.

Let us now consider a critical value $c>0$ and $0<\eps <c$ such that $d_\M^{-1}[c-\eps,c+\eps]$ contains no critical point except for $z_1,\ldots,z_k \in d_\M^{-1}(c)$, where $z_j$ is of index $i_j$.
Then the Handle Attachment Lemma states that $\M^{c+\eps}$ is homotopically equivalent to $\M^{c-\eps}$ with cells $B^{i_1},\ldots,B^{i_k}$ of dimension $i_1,\ldots,i_k$ attached,
i.e.
\[\M^{c+\eps}\simeq \M^{c-\eps}\cup B^{i_1}\cup\ldots \cup B^{i_k}.\]
Let $i\geq 1$, and let $D_{i,b}$ be the dimension of the cokernel of $H_{i}(\M^{c-\eps}) \rightarrow
  H_{i}(\M^{c+\eps} )$ (where the map is induced by the inclusion): it is precisely the number of births of intervals  between $c-\eps$ and $c+\eps$ (hence precisely at $c$) in the persistence module of degree $i$ of the filtration.
  Similarly, the dimension $D_{i-1,d}$ of the kernel of $H_{i-1}(\M^{c-\eps}) \rightarrow
  H_{i-1}(\M^{c+\eps} )$ is the number of deaths of intervals at $c$ in the persistence module of degree $i-1$ of the filtration.
A straightforward application of the Mayer-Vietoris exact sequence yields that $D_{i,b} +D_{i-1,d}$ is exactly equal to the number of $i$-dimensional cells among $B^{i_1}, \ldots, B^{i_k}$, meaning that each $i$-cell corresponds exactly either to the birth of an interval for the homology of degree $i$, or to the death of an interval for the homology of degree $i-1$ (in particular, an $i-1$-cell and an $i$-cell cannot  ``cancel each other out''). 
This proves that  for any $i\geq 1$, the multiset of critical values $d_\M(z)$ of critical points $z$ of $\M$ of index $i$ is equal to the multiset
        \begin{equation}
            \{u_1:\ (u_1,u_2)\in \dgm_i(\M), \ u_1 \neq 0\}\cup \{u_2:\ (u_1,u_2)\in \dgm_{i-1}(\M)\}.
        \end{equation}
        This fact proves  in turn that $\dgm_i(\M)$ is finite for any $i\geq 0$, which immediately implies that $\Pers_\alpha(\dgm_i(\M)) < +\infty$ for all $\alpha>0$.

        Finally, the bound on the dimension of a critical simplex associated to a critical point of index $i$ is a consequence of Remark 7.3 in \cite{arnal2023critical} --we give some additional details further below.
\end{proof}

Considering a well-chosen ellipse and a hyperbole gives a simple illustration of the fact that the index of a critical point ($2$ for the center of the ellipse, $1$ for the ``center'' of the hyperbole) cannot be directly deduced from the dimension of the associated simplex ($1$ in both cases).
Instead, and as explained in Section 7 of \cite{arnal2023critical}, the index of a critical point $z\in \Crit(\M)$ can be further decomposed as follows: the restriction of the distance function $d_\M$ to a small neighborhood of $z$ in $\{x\in \R^d\backslash\M :\ \dim(\sigma_\M(x)) = \dim(\sigma_\M(z))\}$ is a $C^2$ Morse function, and $z$ if a critical point of it. Let $j$ be its index with respect to this restriction; then the index $i$ of $z$ with respect to $d_\M$ satisfies
\[i = j+ \dim(\sigma_\M(z)). \]

When $\M$ is a generic submanifold, it becomes a reasonable task to control the number of points in $\dgm_i(\A)$ where $\A\subset \M$ is an approximation of $\M$ with $d_H(\A,\M)\leq \eps$. The Bottleneck Stability Theorem implies that when $\eps$ is small enough (compared to the smallest persistence of a point in $\dgm_i(\M)$), every point of $\dgm_i(\M)$ is mapped to a point in $\dgm_i(\A)$ by the optimal bottleneck matching, leaving the points of $\dgm_i(\A)$ at $\ell_\infty$-distance to $\partial\Omega$ less than $\eps$ unmatched; those will be mapped to the diagonal $\partial \Omega$. The Improved Bottleneck Stability Theorem~\ref{prop:regions} (see \Cref{fig:regions}) shows that these points are of two kinds: those in Region (1), corresponding to small topological features in the set $\A$ (of size of order $O(\eps)$), and those in Region (3), corresponding to large topological features. There are many points in Region (1) (in fact, our proofs show that when $\A=\A_n$ is a random sample of $n$ points, the number of points in Region (1) is of order $O(n)$). In contrast, the genericity hypothesis allows us to show that the number of points in Region (3) is small under reasonable sampling assumptions.

We say that a point cloud $\A \subset \M$ is \textit{$(\delta, \eps)$-dense} in $\M$ if $d_H(\A,\M)\leq \eps$ and $\min_{x\neq y \in \A}\|x-y\| \geq \delta$.
Such point clouds naturally occur, e.g. as products of the farthest point sampling algorithm \cite{aamari2018stability}.

\begin{thm}
\label{prop:generic_deterministic}
    Let $\M\subset \R^d$ be a generic compact submanifold and $\A\subset \M$ be such that $d_H(\A,\M)\leq \eps$. Let $i\geq 0$ be an integer. There exist $\eps_0=\eps_0(\M)>0$ and $C_0=C_0(\M)$ such that if $\eps\leq \eps_0$, then:
    \begin{enumerate}
        \item For each $u = (u_1,u_2)\in \dgm_i^{(3)}(\A)$ with $\gamma(u) \in \partial \Omega$, one of the finitely many critical values $c$ of $d_\M$ is such that $u_1, u_2 \in  [ c -C_0\eps^2, c+C_0\eps^2]$.
    \end{enumerate}
    Furthermore, assume that $\A$ is a $(\delta,\eps)$-dense set in $\M$ for some $\delta>0$.  Let $a\geq \eps/\delta$. Then, there exist constants $C_1,C_2,C_3,C_4$ depending on $\M$ and $a$ such that:
    \begin{enumerate}
        \item[2.] The persistence diagram $ \dgm_i^{(3)}(\A)$ has at most $C_1$ points.
        \item[3.] For all $p\geq 1$, $\alpha\geq 0$, 
    \begin{equation}
    \begin{split}
        &\OT_p^p(\dgm_i(\A),\dgm_i(\M))\leq C_2 \eps^{p-m}\\
        &\Pers_\alpha(\dgm_i(\A))\leq 
        C_3(C_4^\alpha+\eps^{\alpha-m}).
        \end{split}
    \end{equation}
    \end{enumerate}
\end{thm}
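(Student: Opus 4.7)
The plan is to prove the three statements in order, building on \Cref{prop:regions} and on the facts that for generic $\M$ both $\dgm_i(\M)$ and $\Crit(\M)$ are finite (\Cref{prop:pd_generic_submanifolds}), together with counting arguments that exploit the $(\delta,\eps)$-density of $\A$. For statement 1, I would reuse the setup from the proof of \Cref{prop:regions}: the capped functions $a,m$ satisfying $\|a-m\|_\infty\leq C\eps^2$ (by \Cref{lem:fred_keeps_saying_this}), and the optimal bottleneck matching $\beta$ between $\dgm_i(a)$ and $\dgm_i(m)$ from which $\gamma$ is built. If $\gamma(u)\in\partial\Omega$ for some $u\in\dgm_i^{(3)}(\A)$, the corresponding point of $\dgm_i^{(3)}(a)$ is matched to the diagonal by $\beta$, so optimality forces $u_2-u_1\leq 2C\eps^2$. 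To pin both coordinates within $O(\eps^2)$ of a critical value of $d_\M$, I would argue by contradiction: if $[u_1-2C\eps^2, u_2+2C\eps^2]$ contained no critical value of $d_\M$, then the Isotopy Lemma would force the persistence module of $m$ to be constant on this interval, and by the $C\eps^2$-interleaving with the module of $a$, the latter could not support any birth–death pair inside $[u_1,u_2]$, a contradiction. Finiteness of the critical values of $d_\M$ for generic $\M$ then yields the claim with a suitable $C_0$.

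For statement 2, the Region (3) points matched to $\dgm_i(\M)$ are at most $|\dgm_i^{(3)}(\M)|$, a constant depending only on $\M$. For the spurious ones (matched to the diagonal), statement 1 clusters their values within $O(\eps^2)$ of the finitely many critical values $c_j=d_\M(z_j)$; I would then promote this to a spatial localization. For $y$ a critical point of $d_\A$ with $d_\M(y)\geq\tau(\M)/2$, the Hausdorff closeness of $\A$ to $\M$ together with conditions 2 and 4 of \Cref{thm:genericity} allow one to compare $\sigma_\A(y)$ with $\sigma_\M(y)$ and show that $\|\nabla d_\A(y)-\nabla d_\M(y)\|$ is small; $\nabla d_\A(y)=0$ combined with condition 5 of the Genericity Theorem then places $y$ within $O(\eps^\beta)$ of $\Crit(\M)$ for some $\beta>0$. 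Finally, by $(\delta,\eps)$-density with $\eps/\delta\leq a$, the number of points of $\A$ in any ball of radius $O(\eps)$ is at most $C(\M,a)$, so the number of simplices (and hence of critical points of $d_\A$) clustered at each $z_j$ is bounded by a constant depending only on $\M$ and $a$, giving $|\dgm_i^{(3)}(\A)|\leq C_1$. I expect this quantitative comparison of generalized gradients to be the main technical obstacle.

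For statement 3, the central packing input is a bound on $|\dgm_i^{(1)}(\A)|$: a point of Region (1) comes from a critical point of $d_\A$ of value at most $\eps+\eps^2/\tau(\M)$, equivalently the circumcenter of a simplex of $\A$ of circumradius $O(\eps)\leq O(a\delta)$, whose vertices are pairwise within $O(\eps)$ of each other. By $(\delta,\eps)$-density, each such vertex has at most $C(\M,a)$ neighbors within $O(\eps)$, so each vertex belongs to at most $C(\M,a)^{d+1}$ such simplices, giving $|\dgm_i^{(1)}(\A)|\leq C(\M,a)|\A|\leq C(\M,a)\eps^{-m}$ (since $|\A|\leq C(\M)/\delta^m\leq C(\M,a)\eps^{-m}$). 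Plugging this into the matching $\gamma$ of \Cref{prop:regions}, the Region (1) contribution to $\OT_p^p$ is at most $|\dgm_i^{(1)}(\A)|\eps^p\leq C(\M,a)\eps^{p-m}$, while Regions (2) and (3) have $O(1)$ points matched within $O(\eps)$ and contribute $O(\eps^p)$, which is subleading for $\eps\leq 1$; and for $\eps$ smaller than half the minimum persistence of the finite diagram $\dgm_i(\M)$, all points of $\dgm_i(\M)$ are matched, so $\OT_p^p\leq C_2\eps^{p-m}$. For $\Pers_\alpha$, each Region (1) point contributes $O(\eps^\alpha)$ for a total of $O(\eps^{\alpha-m})$, while the $O(1)$ points of Regions (2) and (3) each contribute $O(R(\M)^\alpha)=O(C_4^\alpha)$ with $C_4=R(\M)$, yielding $\Pers_\alpha\leq C_3(C_4^\alpha+\eps^{\alpha-m})$.
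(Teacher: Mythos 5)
Your argument for statement~1 has a genuine gap. You first correctly obtain $u_2-u_1\leq 2C\eps^2$ from the optimality of $\beta$, but the interleaving step that follows is false: a $C\eps^2$-interleaving between the modules of $a$ and $m$, together with constancy of $m$'s module on $[u_1-2C\eps^2,\,u_2+2C\eps^2]$, does \emph{not} forbid $a$'s module from having a birth–death pair inside $[u_1,u_2]$. Interleaving only controls the bottleneck distance; short bars of $a$ of persistence $\leq 2C\eps^2$ can sit anywhere, including in regions where $m$ is constant (take $m$'s module trivial and $a$'s module a single short interval — they are $C\eps^2$-interleaved). So you cannot conclude that $u_1,u_2$ lie near a critical value of $d_\M$ by abstract persistence arguments alone; this is a genuinely geometric fact about distance functions to generic submanifolds. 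The paper instead invokes \cite[Theorem 1.6]{arnal2023critical} to first place the critical \emph{point} of $d_\A$ within $O(\eps)$ of a critical point of $\Crit(\M)$ (with the corresponding projections also $O(\eps)$-close), and then upgrades this to an $O(\eps^2)$ bound on the critical \emph{values} by a Federer-type estimate using \eqref{eq:federer} and the circumcenter characterization of critical points. Notice your dependency structure is also inverted: you propose to prove value-proximity (statement~1) by interleaving and then ``promote'' it to spatial localization for statement~2, but the natural and workable direction is the opposite — spatial proximity of critical points comes first, critical-value proximity is the corollary. The gradient-comparison + condition~5 argument you sketch for statement~2 is indeed the right idea and is essentially what is packaged in the cited Theorem~1.6; it should be the first step, not a consequence of statement~1.

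Two smaller issues. For statements~2 and~3 you pass from ``few critical points of $d_\A$ near each $z_j$'' (resp.\ ``few small simplices'') to ``few points of $\dgm_i(\A)$'', but since $d_\A$ need not be topologically Morse, a critical point can in principle contribute several births/deaths; the paper handles this via Lemma~\ref{lem:bounded_contribution_to_homology}, which perturbs $\A$ to general position and counts. Your argument for statement~3 is otherwise on the right track and matches the paper's strategy (counting small \v Cech simplices via the $(\delta,\eps)$-density and the constant-ratio bound $\eps/\delta\leq a$, then summing Region (1) contributions of size $O(\eps^p)$ against the $O(1)$ macroscopic points of Regions (2)–(3)).
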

In particular, as long as the ratio $\eps/\delta$ is larger than some constant $a>0$, the $\OT_p$ distance between $\dgm_i(\A)$ and $\dgm_i(\M)$ converges to $0$ for all $p>m$ as $\eps\to 0$, while the $p$-total persistence $\Pers_p(\dgm_i(\A))$ stays bounded.

\begin{example}
    Consider two parallel line segments $\M$ in $\R^2$, and a finite set $\A$ consisting of two parallel grids of step $2\eps$: the set $\A$ is $(2\eps,\eps)$-dense in $\M$.  Then, there are $O(\eps^{-1})$ points in $\dgm_1(\A)$ with birth coordinates $u_1$ equal to $1/2$ and persistence of order $O(\eps^{2})$; they all belong to $\dgm_1^{(3)}(\A)$, whose cardinality is thus not bounded by some $C_0 = C_0(\M)$. This example can be easily modified to make $\M$ a compact $C^2$ $1$-dimensional manifold. This shows that the first conclusion of \Cref{prop:generic_deterministic} cannot hold without a genericity assumption on $\M$.
\end{example}

 \begin{proof}
 \begin{enumerate}[wide, labelwidth=0pt, labelindent=0pt]
     \item   Though the distance function to $\A$ need not be topologically Morse, it remains true (due to the Isotopy Lemma for distance functions) that the changes in topology of the sublevel sets of $d_\A$ occur at its critical values. 
    Hence $u = (u_1,u_2) \in \dgm_i^{(3)}(\A)$ is such that $u_1, u_2$ are critical values of $d_\A$,  with $u_1,u_2\geq \tau(\M)/2$.
    Theorem 1.6 from \cite{arnal2023critical} states that there exists $K_1 = K_1(\M)>0$ such that if $\eps$ is smaller than some $\eps_1$, then each point in $ \Crit(\A)$ at distance more than $\tau(\M)/2$ from $\A$ must be at distance at most $K_1 \eps$ from one of the finitely many points of $\Crit(\M)$.
    Let us consider $z_\A \in \Crit(\A)$ and $z_\M\in \Crit(\M)$ such that $\|z_\A-z_\M\|\leq K_1 \eps$ and $d_\A(z_\A) \geq \tau(\M)/2$; we are going to show that there exists $C_2 = C_2(\M)>0$ such that $|d_\A(z_\A) - d_\M(z_\M)|\leq C_2 \eps^2$.
    Let us first consider $x_\M\in \sigma_\M(z_\M)$; the proof of Theorem 1.6 from \cite{arnal2023critical}  also shows that there exists $x_\A\in \sigma_\A(z_\A)$ such that $\|x_\A-x_\M\|\leq K_2\eps $ for some constant $K_2 = K_2(\M)$.
    As $z_\M-x_\M$ is orthogonal to $T_{x_\M}\M$, Inequality \eqref{eq:federer} states that $\|\pi_{x_\M}^\bot(x_\A-x_\M) \|\leq \frac{\|x_\A-x_\M\|^2}{2\tau(\M)} \leq \frac{K_2^2\eps^2}{2\tau(\M)} $, hence 
    \begin{align*}
       \|z_\M-x_\A\|^2 &= \|z_\M - x_\M \|^2 +2\dotp{z_\M - x_\M , x_\M - x_\A} + \|x_\M - x_\A \|^2 \\
      & \leq d_\M(z_\M)^2 +  \frac{K_2^2\eps^2}{\tau(\M)}d_\M(z_\M) + K_2^2\eps^2 \leq d_\M(z_\M)^2  +\eps^2 K_3,
    \end{align*}
    where $K_3 = K_2^2\left(\frac{R(\M)}{\tau(\M)} + 1 \right)$, $R(\M)$ is the radius of the smallest ball that contains $\M$, and $d_\M(z_\M)\leq R(\M)$ because a critical point must belong to the convex hull of its projections.
    As the same bound applies to each of the projections $x_\A ' \in \sigma_\A(z_\A)$, we find that the closed ball $\overline B\left(z_\M, \sqrt{d_\M(z_\M)^2  +\eps^2K_3}\right)$ contains $\sigma_\A(z_\A)$. Since $z_\A$ is the center of the smallest ball that contains $\sigma_\A(z_\A)$, whose radius is $d_\A(z_\A)$, we find that
    \begin{equation}
        \label{eq:born_sup_dA}
        d_\A(z_\A)^2 \leq d_\M(z_\M)^2  +\eps^2 K_3.
    \end{equation}

    Similarly, the proof of Theorem 1.6 from \cite{arnal2023critical} shows that there exists $y \in \sigma_\M(z_\A) $ such that $\|y-x_\M\|, \|y - x_\A \| \leq O(\eps)$, and the same reasoning as above yields that $ \|z_\A-x_\A\|^2 =  \|z_\A-y\|^2 +O(\eps^2)$ and  $ \|z_\A-x_\M\|^2 =  \|z_\A-y\|^2 +O(\eps^2)$ (with all big $O$ constants depending only on $\M$), hence that $ \|z_\A-x_\M\|^2 \leq   \|z_\A-x_\A\|^2 +\eps^2 K_4= d_\A(z_\A)^2 +\eps^2K_4$ for some $K_4 = K_4(\M)$.
As before, this shows that the closed ball $\overline B\left(z_\A, \sqrt{d_\A(z_\A)^2 +\eps^2 K_4}\right)$ contains $\sigma_\M(z_\M)$, hence that 
  \begin{equation}
        \label{eq:born_sup_dM}
        d_\M(z_\M)^2 \leq d_\A(z_\A)^2  +\eps^2 K_4.
    \end{equation}
This, together with \eqref{eq:born_sup_dA} and the fact that $d_\A(z_\A) \geq \tau(\M)/2$, shows that $|d_\A(z_\A) - d_\M(z_\M)|\leq C_2 \eps^2$ for some $C_2 = C_2(\M)$.
We have shown that any critical value of $d_\A$ greater than $\tau(\M)/2$ is at distance less than $C_2\eps^2$ from a critical value of $d_\M$, which concludes the proof of the first point (by taking $\eps_0$ small enough for $2C_2\eps_0^2$ to be smaller than the minimum distance between two distinct critical values of $d_\M$).

 \item    Let us now prove the bound on the cardinality of $\dgm_i^{(3)}(\A)$. 
    As $\M$ is generic, Theorem 1.6 from \cite{arnal2023critical} states that if $\eps$ is smaller than some $\eps_0 = \eps_0(\M)$, then each point in $\Crit(\A)$ at distance more than $\tau(\M)/2$ from $\A$ must be at distance at most $K_5 \eps$ from one of the finitely many points of $\Crit(\M)$ for some $K_5 = K_5(\M)$.
    Corollary 1.7 from the same article then states that the number of points in $\Crit(\A)$ at distance less than $K_1 \eps$ from a given point of $\Crit(\M)$ is upper bounded by some constant that depends on $\M$ and the ratio $\eps/\delta$, and is decreasing in this ratio; hence it is upper bounded by some constant that depends on $\M$ and $a$. The proof of this corollary also shows that the maximum number of projections on $\A$ of each of these points of $\Crit(\A)$ is also upper bounded by some constant that depends on $\M$ and $a$.
    Hence there exist constants $K_6 = K_6(\M,a)$ and $K_7 = K_7(\M,a)$ such that if $\eps\leq\eps_0$, then there are at most $K_6$ points in $\Crit(\A)$ at distance more than $\tau(\M)/2$ from $\A$, and each has at most $K_7$ projections on $\A$.

    \Cref{lem:bounded_contribution_to_homology}, applied to the interval $[\tau(\M)/2, \infty)$ and the set $\A$, then states that the number of points in $\dgm_i(\A)$ such that at least one of their coordinates is greater than $\tau(\M)/2$ is bounded by $K_6 \binom{K_7}{i+2}$. Hence there are at most $C_0 := K_6 2^{K_7} \geq K_6 \binom{K_7}{i+2}$ points in $ \dgm_i^{(3)}(\A)$ when $\eps\leq\eps_0$.

    \item Let us bound  $\OT_p^p(\dgm_i(\A),\dgm_i(\M))$ and $
        \Pers_\alpha(\dgm_i(\A))$.
    As stated in \Cref{prop:regions}, which applies as  $\eps_0 < \tau(\M)/4$, each point $u =(u_1,u_2)\in\dgm_i^{(1)}(\A)$ is such that its coordinates satisfy $0\leq u_1,u_2 \leq \eps+\eps^2/\tau(\M)\leq 2\eps$.
    In particular, they must correspond to the birth or the death of an interval of the \v Cech persistence module of $\A$ that occurs before filtration time $2\eps$.
    The homology of the offsets $\A^t$ can be computed using the \v Cech simplicial complex of $\A$ (see e.g. \cite{Edelsbrunner_Harer_book}). In particular, each change in the homology of the offsets, hence each birth or death in the \v Cech persistence module of $\A$, is induced by the apparition of some simplex $\sigma$ at the corresponding filtration value in the \v Cech complex, and each such apparition causes at most a single death or birth.
    If a simplex $\sigma$ appears before filtration time  $2\eps$, it is by definition contained in a ball of radius  $2\eps$, hence it is of diameter at most  $4\eps$.
    Let us assume from now on that $\eps_0 \leq \tau(\M)/16$.
    Consider $x\in \A$; then \cite[Proposition 8.7]{aamari2018stability} states that the intersection $\overline B(x,4\eps) \cap \A$  contains at most $K_8 (\eps/\delta)^m \leq K_8 a^m$ points for some constant $K_8 = K_8(\M)$.
    Hence $x$ belongs to at most $2^{K_8 a^m}$ simplices that appear before $\eps$, and there are at most $\#\A\cdot 2^{K_8 a^m}$ such simplices.
    As the cardinality $\#\A$ can be bounded by $K_9/\delta^m$ for some $K_9 = K_9(\M)$, we find that $\#(\dgm_i^{(1)}(\A)) \leq K_{10}/\delta^m $ for some $K_{10} = K_{10}(\M,a)$.

    Furthermore, when $\M$ is generic, Proposition \ref{prop:pd_generic_submanifolds} states that its PD $\dgm_i(\M)$ has a finite number of points.
    Let $\gamma$ be an optimal matching between $\dgm_i(\A)$ and $\dgm_i(\M)$ for the bottleneck distance that satisfies the conclusions of \Cref{prop:regions}. We find that any point $u\in \dgm_i^{(1)}(\A)$ is matched to a point of $\partial \Omega$ at distance at most $\eps$ from $u$.
    Moreover, the number of points in $\dgm_i^{(2)}(\A) \cup \dgm_i^{(3)}(\A)$ is bounded by some constant $K_{11} = K_{11}(\M,a)$, and they are all matched to a point of $\dgm_i(\M)$ or $\partial \Omega$ at distance at most $ \eps$. In particular, these finitely many points are at distance at most $K_{12}=K_{12}(\M)$ from $\partial \Omega$.
    Furthermore, this matching is surjective, in the sense that $\gamma$ matches all points of $\dgm_i(\M)$ to a point of $\dgm_i(\A)$.
    As a result, for any $p\geq 1$, we find that 
    \begin{align*}
    \OT^p_p(\dgm_i(\A),\dgm_i(\M)) &\leq \sum_{u\in \dgm_i^{(1)}(\A)} \|u-\gamma(u) \|_\infty ^p + \sum_{u\in \dgm_i^{(2)}(\A) \cup \dgm_i^{(3)}(\A)} \|u-\gamma(u) \|_\infty ^p  \\
    &\leq  K_{10}\delta^{-m}\eps^p + K_{11}\eps^p \leq K_{10}a^m \eps^{p-m}+ K_{11}\eps^p \leq C_1 \eps^{p-m}
    \end{align*}
    for some $C_1 = C_1(\M,a)$.
    Likewise, for any $\alpha \geq 0$, we have that 
    \begin{align*}
    \Pers_\alpha(\dgm_i(\A)) &= \sum_{u\in \dgm_i^{(1)}(\A)} \pers(u)^\alpha  + \sum_{u\in \dgm_i^{(2)}(\A) \cup \dgm_i^{(3)}(\A)} \pers(u)^\alpha \\
    & \leq   K_{10}\delta^{-m}\eps^\alpha  + K_{11}K_{12}^\alpha \leq K_{10}a^m \eps^{\alpha -m} + K_{11}K_{12}^\alpha \leq C_2(C_3^\alpha+\eps^{\alpha-m}) 
    \end{align*}
    for some $C_2 = C_2(\M,a), C_3 = C_3(\M)$.
    This completes the proof. \qedhere
 \end{enumerate}
\end{proof}

\section{Random samplings of submanifolds}\label{sec:random}
We now turn to the case of random samplings of (non-generic and generic) submanifolds. They tend to adopt configurations that are more regular than what can be expected from e.g. a general $\eps$-dense sampling, yet their randomness gives rise to new technical difficulties.
Let $P$ be a probability measure having a density $f$ (with respect to the volume measure) on a compact submanifold $\M$ of dimension $m\geq 1$. Let $\A=\A_n=\{X_1,\dots,X_n\}$, where $X_1,\dots,X_n$ is an i.i.d. sample from distribution $P$. Let $i\geq 0$ be an integer; 
 we consider the three regions described in \Cref{fig:regions} and in the statement of \Cref{prop:regions}, and write again $\dgm_i^{(1)}(\A_n)$, $\dgm_i^{(2)}(\A_n)$ and $\dgm_i^{(3)}(\A_n)$ for the three corresponding PDs. 
This section is devoted to the study of the probabilistic asymptotic behaviour of these three random PDs, which can be decomposed into two almost independent questions: $\dgm_i^{(1)}(\A_n)$ only depends on small-scale phenomena and can essentially be reduced to the case of a cube, even if $\M$ is non-generic, while $\dgm_i^{(2)}(\A_n)$ and $\dgm_i^{(3)}(\A_n)$ are tightly connected to the macroscopic geometry of the submanifold and can be further controlled using genericity assumptions on $\M$.

Describing the limit behavior of the random PD $\dgm_i^{(1)}(\A_n)$ requires extending the $\OT_p$ metrics between PDs to more general Radon measures. 
Indeed, a PD can equivalently be seen as an integer-valued discrete Radon measure on $\Omega$, by identifying a multiset $a$ with the Radon measure $\sum_{u\in a}\delta_u$. Let  $\cM$ denote the space of Radon measures on $\Omega$, that is the space of Borel measures on $\Omega$ which give finite mass to every compact set $K\subset \Omega$.\footnote{A compact set $K\subset \Omega$ is at \textit{positive distance} from the diagonal. Hence, a measure $\mu\in \cM$ can have an accumulation of mass close to $\partial \Omega$.} The space of Radon measures is endowed with the \textit{vague topology}, where a sequence $(\mu_n)_n$ of measures in $\cM$ is said to converge vaguely to $\mu\in \cM$ if $\int_{\Omega}\phi \dd \mu_n\to \int_{\Omega}\phi \dd\mu$ as $n\to\infty$ for all continuous functions $\phi:\Omega\to \R$ with compact support.

The $\alpha$-total persistence is defined for $\mu\in \cM$ by $\Pers_\alpha(\mu)=\int_{\Omega}\pers(u)^\alpha\dd \mu(u)$. For $p\geq 1$, we let $\cM^p=\{\mu\in \cM:\ \Pers_p(\mu)<+\infty\}$. 
The distance $\OT_p$, defined between PDs in \eqref{eq:OTp_PD}, can be extended to the set $\cM^p$ as follows (see \cite{divol2021understanding} for details):
let us define $\bar \Omega := \{u=(u_1,u_2)\in \R^2:\  u_1\leq u_2 \}$. We call $\pi$ an admissible transport plan between $\nu_1,\nu_2\in \cM$ if it is a Radon measure on $\bar \Omega \times \bar \Omega$ such that for all Borel sets $A,B\subset \Omega$,  \begin{equation}
    \pi(A\times \bar \Omega)=\nu_1(A) \quad \text{ and } \pi(\bar \Omega\times B)=\nu_2(B).
\end{equation}
For $p \in [1, +\infty)$, we define
\begin{equation}
    \OT_p^p(\nu_1,\nu_2)=\inf_{\pi \in \text{Adm}(\nu_1,\nu_2)} \iint \|u-v\|_\infty^p \dd \pi(u,v) \in \R\cup \{+\infty\},
\end{equation}
where  $\text{Adm}(\nu_1,\nu_2)$ is the set of all admissible transport plans between $\nu_1$ and $\nu_2$.
We also define 
\begin{equation}
    \OT_\infty(\nu_1,\nu_2)= \inf_{\pi \in \text{Adm}(\nu_1,\nu_2)} \sup \{\|u-v\|_\infty  \, : \, (u,v) \in \text{Support}(\pi)\}
\end{equation}
For all $p\in [1,\infty]$, the infimum is in fact a minimum.
We call $\OT_p(\nu_1,\nu_2)$ the $p$-Wasserstein distance between $\nu_1$ and $\nu_2$, though it differs from the usual Wasserstein distance, which is defined between measures with equal finite mass, while $\OT_p$ is defined between measures that can have different (and even infinite) masses. Intuitively, $\OT_p$ allows for some of the mass of $\nu_1$ and $\nu_2$ to be transported to the diagonal $\partial \Omega := \{(u,u)\in \R^2\}$, which acts as an infinitely deep landfill.
The $p$-Wasserstein distance is a distance on the space $\cM_p = \{\nu\in \cM:\ \OT_p(\nu,0)<\infty\}$, where $0$ denotes the null measure, and its restriction to the space of PDs coincides with the $\OT_p$ distance defined earlier.

We require the following lemma from \cite{divol2021understanding}:
\begin{lemma}\label{lem:divol_lacombe}
    A sequence of measures $(\nu_n)_{n\geq 1}$ converges with respect to $\OT_p$ to some measure $\nu$ if and only if the sequence  $(\nu_n)_{n\geq 1}$ converges vaguely towards $\nu$ and $\Pers_p(\nu_n)\to\Pers_p(\nu)$ as $n\to \infty$.
\end{lemma}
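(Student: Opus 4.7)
The cornerstone of the proof is the identity $\OT_p(\mu, 0)^p = \Pers_p(\mu)$ for $\mu \in \cM^p$, because the cheapest admissible plan sending $\mu$ to the zero measure orthogonally projects each $u$ onto $\partial \Omega$, incurring an $\ell_\infty$-cost of exactly $\pers(u)$. Combined with the triangle inequality in $(\cM^p, \OT_p)$, namely $|\OT_p(\nu_n,0) - \OT_p(\nu,0)| \leq \OT_p(\nu_n, \nu)$, this immediately gives the $\Pers_p$ half of the forward direction, and the rest of the argument concerns the interplay between vague convergence and the behaviour of mass near or far from the diagonal.

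\textbf{Forward direction, vague convergence.} I would pick $\phi \in C_c(\Omega)$ with support $K$ at distance $r > 0$ from $\partial\Omega$ and extend $\phi$ by $0$ on $\partial\Omega$. If $\pi_n$ is an optimal plan for $\OT_p(\nu_n, \nu)$, then
\[\int\phi\,\dd\nu_n - \int\phi\,\dd\nu = \iint(\phi(u) - \phi(v))\,\dd\pi_n(u,v),\]
and I would split this integral according to whether $\|u-v\|_\infty$ exceeds some threshold $\delta$ or not. Markov's inequality bounds the mass of $\pi_n$ on the ``far'' pairs by $\OT_p(\nu_n,\nu)^p/\delta^p \to 0$, with integrand $\leq 2\|\phi\|_\infty$. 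On the ``close'' pairs, uniform continuity of $\phi$ bounds the integrand by its modulus of continuity $\omega_\phi(\delta)$, while the relevant mass is controlled by $\nu_n(K) + \nu(K) \leq (\Pers_p(\nu_n) + \Pers_p(\nu))/r^p$, uniformly bounded by the first part of the proof. Letting $n \to \infty$ then $\delta \to 0$ gives the vague convergence.

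\textbf{Reverse direction.} Assume vague convergence and $\Pers_p$-convergence. For large $R$, let $K_R = \{u \in \Omega : \pers(u) \geq 1/R,\ \|u\|_\infty \leq R\}$, a compact subset of $\Omega$, and apply the triangle inequality
\[\OT_p(\nu_n, \nu) \leq \OT_p(\nu_n, \nu_n|_{K_R}) + \OT_p(\nu_n|_{K_R}, \nu|_{K_R}) + \OT_p(\nu|_{K_R}, \nu).\]
The outer two terms equal $\Pers_p(\nu_n|_{K_R^c})^{1/p}$ and $\Pers_p(\nu|_{K_R^c})^{1/p}$ respectively, since sending the excluded mass to the diagonal is always admissible. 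The third vanishes as $R \to \infty$ by dominated convergence. For the first, I would approximate $\one_{K_R}$ from below by a continuous compactly supported bump $\psi_R$ and use
\[\limsup_n \Pers_p(\nu_n|_{K_R^c}) \leq \limsup_n \int \pers^p (1-\psi_R)\,\dd\nu_n = \Pers_p(\nu) - \int \pers^p \psi_R\,\dd\nu,\]
whose right-hand side tends to $0$ as $R \to \infty$ (with $\psi_R \uparrow 1$ suitably). The middle term involves two finite measures on a compact set; the restriction $\nu_n|_{K_R} \to \nu|_{K_R}$ becomes weak convergence with convergence of total masses (perturbing $R$ slightly to a continuity point of $\partial K_R$), and standard Wasserstein theory on a compact Polish space — absorbing any small mass mismatch by diagonal transport at per-unit cost $\leq R$ — yields $\OT_p(\nu_n|_{K_R}, \nu|_{K_R}) \to 0$ for each fixed $R$.

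\textbf{Main obstacle.} The hardest step is the uniform-integrability argument above: ensuring that small-persistence mass cannot concentrate invisibly to vague convergence. The hypothesis $\Pers_p(\nu_n) \to \Pers_p(\nu)$ is essential precisely for this reason — without it, arbitrary amounts of mass could accumulate near the diagonal while escaping detection by compactly supported test functions, in which case $\OT_p$-convergence would fail despite vague convergence holding.
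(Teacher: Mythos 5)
The paper does not prove this lemma; it is imported verbatim from \cite{divol2021understanding}, so there is no internal proof to compare against. Your argument is a correct and self-contained reconstruction of the standard proof from that reference. The key ingredients are all in place: the identity $\OT_p(\mu,0)^p = \Pers_p(\mu)$ together with the triangle inequality gives the $\Pers_p$-convergence half of the forward direction; the Markov-plus-uniform-continuity split of $\iint(\phi(u)-\phi(v))\,\dd\pi_n$, combined with the bound $\nu_n(K)\leq \Pers_p(\nu_n)/r^p$ for $K$ at distance $r$ from the diagonal, gives vague convergence; and in the reverse direction the three-term triangle decomposition around the compact exhaustion $K_R$ works because the hypothesis $\Pers_p(\nu_n)\to\Pers_p(\nu)$ is exactly what forces $\limsup_n\Pers_p(\nu_n|_{K_R^c})\to 0$ as $R\to\infty$, i.e.\ no mass escapes near the diagonal. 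Two minor imprecisions worth flagging, neither of which affects the argument: $\OT_p(\nu_n,\nu_n|_{K_R})$ is bounded \emph{above} by $\Pers_p(\nu_n|_{K_R^c})^{1/p}$ via the diagonal-transport plan but need not equal it, and the upper bound is all you use; and in the middle term you should explicitly invoke the Portmanteau-type fact that vague convergence yields $\nu_n(K_R)\to\nu(K_R)$ only for $R$ with $\nu(\partial K_R)=0$, which holds for all but countably many $R$ by a standard disjointness argument on the level sets of $\pers$ and $\|\cdot\|_\infty$ — you gesture at this (``perturbing $R$ slightly to a continuity point'') and that is the right move.
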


For $q>0$, given a function $f:\N \rightarrow \R$ and a sequence of (non-necessarily measurable) real maps $(Y_n)_n$ defined on some probabilistic space, the notation $Y_n = O_{L^q}(f(n)) $ means that $\E^*[|Y_n|^q] = O(f(n)^q)$, where $\E^*$ denotes the outer expectation \cite[p.6]{van1997weak} (and similarly for the little $o$ notation).

\subsection{Region (1)}
Consider the rescaled Radon measure $\mu_{n,i} = \frac{1}{n}\sum_{u\in \dgm_i^{(1)}(\A_n)} \delta_{n^{1/m}u}$ --this rescaling is natural, as the number of points in $a_{n,i}^{(1)}$ is typically of order $n$, while their coordinates are of order $n^{-1/m}$ -- and note that $\mu_{n,i} $ is a random measure, owing to the randomness of the set $\A_n$. Goel, Trinh and Tsunoda studied the vague convergence of the sequence $(\mu_{n,i})_n$ \cite[Remark 4.2]{goel2019strong}. Namely,  assuming that the density $f$ satisfies $\int_\M f^j<\infty$ for all $j\geq 0$, they show that with probability $1$ the sequence $(\mu_{n,i})_n$ converges vaguely to some (non-random) Radon measure $\mu_{f,i}$. The limit measure $\mu_{f,i}$ has support $\{0\}\times \R_{+}$ if $i=0$ and $\Omega$ if $0<i< m$; it is the zero measure if $i\geq m$. We can further describe it as follows:
let $\mu_{\infty,i,m}$ be the limit of the sequence $(\mu_{n,i})$ in the case where the sample $\A_n$ is uniform on the unit cube $[0,1]^m$ (see \cite{divol2019choice}).  Then, for any continuous function $\phi:\Omega\to \R$ with compact support,
\begin{equation}\label{eq:change_of_variable}
\int_{\Omega}\phi(u)\dd \mu_{f,i}(u) = \int_{\Omega} \int_\M f(x)\phi(f(x)^{-1/m}u)\dd x\dd \mu_{\infty,i,m}( u).
\end{equation}
Note that the vague convergence of Radon measures is only defined with respect to compactly supported functions; as such, it is blind to phenomena located increasingly close to the diagonal $\partial \Omega$ as $n$ goes to infinity. In particular, and except in the case of the uniform distribution on the unit cube $[0,1]^m$ (see \cite{divol2019choice}), it was not known whether  $\mu_{n,i}$ converges to $\mu_{f,i}$ for the $\OT_p$ distance as well, nor whether the sequences of total persistence $(\Pers_\alpha(\mu_{n,i}))$ converge.
We close this gap with the following result.

\begin{restatable}[Law of large numbers]{thm}{lawlargenumbers}
\label{thm:law_large_numbers}
  Assume that $P$ has a density $f$ on $\M$ bounded away from $0$ and $\infty$. Let $i\geq 0$ be an integer and let $1\leq p <\infty$. Then $\mu_{f,i}\in \cM_p$ and \[\E[\OT_p^p(\mu_{n,i},\mu_{f,i})]\xrightarrow[n\to\infty]{}  0.\] Furthermore, for all $\alpha>0$,  $\Pers_\alpha(\dgm_i^{(1)}(\A_n))n^{\frac{\alpha}{m}-1} =\Pers_\alpha(\mu_{n,i})=\Pers_\alpha(\mu_{f,i})+o_{L^1}(1)$.
\end{restatable}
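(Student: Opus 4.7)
\noindent\emph{Proof plan.} The starting point is Lemma \ref{lem:divol_lacombe}: convergence of $\mu_{n,i}$ to $\mu_{f,i}$ in $\OT_p$ reduces to vague convergence (already established in \cite{goel2019strong}) together with convergence of $\Pers_p(\mu_{n,i})$ to $\Pers_p(\mu_{f,i})$. The heart of the matter is therefore the $L^1$ law of large numbers for $\Pers_\alpha(\mu_{n,i})$ at arbitrary $\alpha>0$; the finiteness $\mu_{f,i}\in\cM_p$ drops out as a by-product.

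The approach is a localization argument reducing everything to the known cube case of Divol--Polonik \cite{divol2019choice}. Fix a scale $r=r(n)\to 0$ that goes to zero much more slowly than $(\log n/n)^{1/m}$, and cover $\M$ by measurable charts $(V_k)_{k=1}^K$ of diameter $O(r)$ with base points $x_k$; each $V_k$ is identified with a subset of the tangent space $T_{x_k}\M$ up to $O(r^2/\tau(\M))$ distortion via \eqref{eq:federer}. Because $f$ is bounded away from zero, standard sampling bounds yield $d_H(\A_n,\M)=O((\log n/n)^{1/m})$ with overwhelming probability, so \Cref{prop:regions} forces every $u\in\dgm_i^{(1)}(\A_n)$ to satisfy $u_2=O((\log n/n)^{1/m})$, whence its generating simplices lie in a ball of that radius. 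Choosing $r\gg (\log n/n)^{1/m}$, every such feature is then generated inside a single chart $V_k$ with high probability; the few boundary features straddling several charts are controlled by a union bound on the expected number of sample points within $O((\log n/n)^{1/m})$ of a chart boundary, which is $O(Knr^{m-1}(\log n/n)^{1/m})$ and, with $K\asymp r^{-m}$, contributes negligibly to $\Pers_\alpha(\mu_{n,i})$. Inside each chart, $\A_n\cap V_k$ is approximately an i.i.d.\ uniform sample in a Euclidean box of cardinality $\sim nf(x_k)\Vol(V_k)$, so the cube LLN of \cite{divol2019choice} gives the local contribution; summing these and letting $r\to 0$, a change of variables matches the limit with $\Pers_\alpha(\mu_{f,i})$ via \eqref{eq:change_of_variable}.

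To upgrade convergence in mean to $L^1$ convergence, I would use an Efron--Stein-type variance bound: resampling one point $X_j$ changes $\dgm_i^{(1)}(\A_n)$ only through local perturbations within a ball of radius $O((\log n/n)^{1/m})$ around $X_j$, producing an $O(n^{-1}(\log n)^{\alpha/m})$ change in $\Pers_\alpha(\mu_{n,i})$; summing over $j$ yields $\mathrm{Var}(\Pers_\alpha(\mu_{n,i}))=o(1)$, which together with mean convergence gives $L^2$, and hence $L^1$, convergence. For the $\OT_p$ statement, Lemma \ref{lem:divol_lacombe} combined with the $L^1$ convergence of $\Pers_p$ and the known vague convergence yields $\OT_p(\mu_{n,i},\mu_{f,i})\to 0$ in probability; the passage to convergence in expectation then requires uniform integrability of $\OT_p^p(\mu_{n,i},\mu_{f,i})$, which follows from the same moment argument applied at exponent $\alpha=p+1$.

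The main obstacle will be the clean control of the cross-chart contributions --- both for the mean and for the variance estimate --- and in particular the use of the density lower bound $f\geq c>0$ to guarantee that all region-(1) features are of uniformly microscopic scale. Without such a bound, features could appear at macroscopic sizes and the localization step would break down; the bounded-above condition on $f$ similarly enters through the local cardinality bound à la \cite[Proposition 8.7]{aamari2018stability}, which caps the number of simplices generated in each chart.
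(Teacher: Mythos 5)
Your route is genuinely different from the paper's, and the main gap lies in the Efron--Stein step. You assert that resampling $X_j$ changes $\Pers_\alpha(\mu_{n,i})$ by $O(n^{-1}(\log n)^{\alpha/m})$, but this presumes that only boundedly many features are affected. That is not so: the features touched by moving $X_j$ include all $i$- and $(i{+}1)$-simplices of the \v Cech complex supported in a neighborhood of $X_j$ (and of its new position), and the number of such simplices is of order $N^{i+1}$ where $N$ is the random local occupancy of that neighborhood. $N$ is not uniformly bounded, so $\Delta_j$ is a heavy-tailed random variable whose \emph{moments}, not its typical size, must be controlled before one can conclude $\sum_j\E[\Delta_j^2]=o(1)$. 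Making this work would require precisely the kind of local-occupancy moment estimates the paper develops in \Cref{lem:crit_is_close}, \Cref{lem:tail_Rin} and \Cref{lem:control_Kin} --- whose point is that the relevant local scale around a fixed $X_j$ is $R_{jn}\asymp n^{-1/m}$, not the worst-case $d_H(\A_n,\M)\asymp(\log n/n)^{1/m}$ --- and in addition one must split off the small-probability event that $d_H(\A_n,\M)$ is macroscopic, on which your ``uniformly microscopic scale'' claim fails outright. The same issue undercuts the boundary-of-charts estimate: counting sample points in the tube around chart boundaries is not enough, one must count \emph{features}, which grows as a power of the local occupancy. A smaller problem is that the localization freezes $f$ at $f(x_k)$ on each chart, which needs a Lebesgue-density argument since the hypothesis is only that $f$ is bounded away from $0$ and $\infty$, not continuous.

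Conceptually the paper is more economical: it never re-proves a mean law of large numbers. Taking the a.s.\ vague convergence of Goel--Trinh--Tsunoda as a black box, \Cref{lem:first_reduction} and \Cref{lem:second_reduction} reduce everything --- the $L^1$ convergence of $\Pers_\alpha(\mu_{n,i})$, the membership $\mu_{f,i}\in\cM_p$, and then $\E[\OT_p^p(\mu_{n,i},\mu_{f,i})]\to 0$ --- to a single uniform second-moment tail bound $\E[\mu_{n,i}(T_s)^2]\leq C\exp(-cs^m)$ (\Cref{prop:where_the_proof_really_begins}). That tail bound is obtained by the same local analysis around each $X_j$ that your argument is implicitly missing. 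Your chart-localization plus cube-LLN plus Efron--Stein strategy is plausible in outline and would give a more self-contained derivation of the limit constant, but it needs substantially more machinery than you credit to be made rigorous, and that machinery largely reproduces the paper's local moment bounds.
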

Note that the statement becomes clearly wrong when $p=\infty$, as the support of  $\mu_{n,i}$ is bounded for all $n$, while that of $\mu_{f,i}$ is not.
Note also that without the assumption from Theorem~\ref{thm:law_large_numbers} that $f$ be bounded away from $0$, it is easy to find examples where $\int_\M f(x)^{1-\frac{p}{m}} \dd x = \infty$ for $p$ large enough, hence $\Pers_p(\mu_{f,i}) = \infty$ and $\mu_{f,i} \not \in \cM_p$; consider e.g. the segment $[-1/2,1/2]$, $f(x) = x$ and $p\geq 2$ (boundaryless examples can be similarly constructed).

\begin{proof}[Proof of \Cref{thm:law_large_numbers}]

Recall that Goel, Trinh and Tsunoda \cite{goel2019strong} have shown that almost surely, the sequence of Radon measures $(\mu_{n,i})_n$ vaguely converges  to the Radon measure $\mu_{f,i}$. 
 We start by showing that, using this vague convergence and \Cref{lem:divol_lacombe}, it is enough to prove the convergence of the $p$-total persistence.

\begin{lemma}\label{lem:first_reduction}
    Let $(\nu_n)_{n\geq 1}$ be a sequence of random measures in $\cM_p$ that converges vaguely almost surely to a Radon measure $\nu\in \cM_p$, and such that $\E[|\Pers_p(\nu_n)-\Pers_p(\nu)|]\to_{n\to \infty}0$. Then, $\E[\OT_p^p(\nu_n,\nu)]\to_{n\to \infty}0$.
\end{lemma}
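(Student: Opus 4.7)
The plan is to deduce the desired $L^1$-convergence from the almost sure $\OT_p$-convergence that \Cref{lem:divol_lacombe} produces along suitable subsequences, combined with a uniform integrability bound obtained from the triangle inequality in $(\cM_p, \OT_p)$.

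First, I would establish a deterministic domination. Since $\OT_p$ is a genuine metric on $\cM_p$ and $\OT_p(\mu, 0)^p = \Pers_p(\mu)$ for every $\mu \in \cM_p$ (take the transport plan sending all mass of $\mu$ to the diagonal; this is optimal because the $\ell_\infty$-distance from $u$ to $\partial \Omega$ equals $\pers(u)$), the triangle inequality followed by convexity of $x \mapsto x^p$ gives
\[
\OT_p^p(\nu_n, \nu) \;\leq\; 2^{p-1}\bigl(\Pers_p(\nu_n) + \Pers_p(\nu)\bigr).
\]
The hypothesis $\E[|\Pers_p(\nu_n) - \Pers_p(\nu)|] \to 0$ implies that $(\Pers_p(\nu_n))_n$ is uniformly integrable, and the bound above transfers this property to $(\OT_p^p(\nu_n, \nu))_n$.

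Next, I would show that $\OT_p^p(\nu_n, \nu) \to 0$ in probability via the standard subsequence criterion. Given an arbitrary subsequence, the $L^1$-convergence of $\Pers_p(\nu_n)$ allows me to extract a further sub-subsequence $(\nu_{n_{k_j}})_j$ along which $\Pers_p(\nu_{n_{k_j}}) \to \Pers_p(\nu)$ almost surely. Intersecting with the almost sure event on which the original sequence converges vaguely to $\nu$ (which a fortiori holds along $(n_{k_j})_j$), \Cref{lem:divol_lacombe} applied pointwise to each realization yields $\OT_p(\nu_{n_{k_j}}, \nu) \to 0$ almost surely. Hence every subsequence of $(\OT_p^p(\nu_n, \nu))_n$ admits a further sub-subsequence converging to $0$ almost surely, which is exactly convergence in probability.

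Finally, Vitali's convergence theorem asserts that, for nonnegative random variables, convergence in probability combined with uniform integrability is equivalent to $L^1$-convergence; this delivers $\E[\OT_p^p(\nu_n, \nu)] \to 0$. I do not anticipate a substantive obstacle: the only slightly delicate point is the domination in the first step, which rests on the metric structure of $(\cM_p, \OT_p)$ and the identity $\OT_p(\cdot, 0)^p = \Pers_p$ established in \cite{divol2021understanding}.
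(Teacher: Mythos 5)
Your proof is correct and follows essentially the same approach as the paper: show convergence in probability via the subsequence criterion combined with \Cref{lem:divol_lacombe}, then upgrade to $L^1$-convergence using uniform integrability of $\OT_p^p(\nu_n,\nu)$, itself deduced from that of $\Pers_p(\nu_n)$. The only (cosmetic) difference is in the domination step: you obtain $\OT_p^p(\nu_n,\nu)\leq 2^{p-1}(\Pers_p(\nu_n)+\Pers_p(\nu))$ via the triangle inequality for $\OT_p$ followed by convexity of $x\mapsto x^p$, whereas the paper directly exhibits the transport plan sending all mass to $\partial\Omega$, which gives the same domination with constant $1$ and avoids invoking the metric triangle inequality; both suffice for uniform integrability.
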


\begin{proof}
    Let us first show that $(D_n)_{n\geq 1}=(\OT_p^p(\nu_n,\nu))_{n\geq 1}$ converges in probability to $0$. We use the following standard result: if for every subsequence $(Z_{n_k})_{k\geq 1}$ of $(Z_n)_{n\geq 1}$, one can  extract a subsequence $(Z_{n_{k_l}})_{l\geq 1}$ that converges almost surely to $0$, then the sequence $(Z_n)_{n\geq 1}$ converges in probability to $0$. Let  $(D_{n_k})_{k\geq 1}$ be a subsequence of $(D_n)_{n\geq 1}=(\OT_p^p(\nu_n,\nu))_{n\geq 1}$. Then, as $(\Pers_p(\nu_n))_{n\geq 1}$ converges in $L^1$ to $\Pers_p(\nu)$, it also converges in probability. In particular, there exists a subsequence $(n_{k_l})_{l\geq 1}$ such that $(\Pers_p(\nu_{n_{k_l}}))_{n\geq 1}$ converges almost surely to $\Pers_p(\nu)$. When restricting ourselves to this subsequence, we have both vague convergence of the measures and convergence of the $p$-total persistence. Hence, according to \Cref{lem:divol_lacombe}, we have $D_{n_{k_l}}=\OT_p^p(\nu_{n_{k_l}},\nu)\to_{l\to\infty} 0$ almost surely, proving that we actually have that $(D_n)_{n\geq 1}$ converges in probability to $0$. To prove that $\E[D_n]\to_{n\to\infty}0$, it remains to show that the sequence $(D_n)_{n\geq 1}$ is uniformly integrable. By considering the trivial transport plan that sends all probability mass to $\partial \Omega$, we have  for all $n\geq 1$
    \[ D_n\leq \Pers_p(\nu_n)+\Pers_p(\nu).\]
    But the sequence $(\Pers_p(\nu_n))_{n\geq 1}$ is uniformly integrable, as it converges in $L^1$. Hence, so is the sequence $(D_n)_{n\geq 1}$, concluding the proof.
    \end{proof}
    
Using \Cref{lem:first_reduction}, \Cref{thm:law_large_numbers} would follow from the facts  that $\mu_{f,i}\in \cM_p$ and that  $\E[|\Pers_p(\mu_{n,i})-\Pers_p(\mu_{f,i})|]$ converges to $0$.

Recall that $C_c(\Omega)$ is the set of continuous functions $f:\Omega\to \R$ with compact support (i.e. the support is bounded and at positive distance from  $\partial \Omega$). 
For $s\geq 0$, let $T_s=\{(u_1,u_2)\in \Omega:\ u_2\geq s\}$.

\begin{lemma}\label{lem:second_reduction}
Let $\alpha>0$. 
    Let $(\nu_n)_{n\geq 1}$ be a sequence of random measures in $\cM_\alpha$ that converges vaguely almost surely to a Radon measure $\nu\in \cM$. Assume that the sequence of random variables $(\nu_n(\Omega))_{n\geq 1}$ is uniformly integrable and that 
    \[\sup_n \E[\Pers_\alpha(\nu_n)]<+\infty \text{ and } \lim_{s\to +\infty} \limsup_n \E[\int_{T_s}\pers^\alpha(u)\dd\nu_n(u)] =0.\]
    Then, $\nu\in \cM_\alpha$ and $\E[|\Pers_\alpha(\nu_n)-\Pers_\alpha(\nu)|]\to_{n\to \infty}0$. 
\end{lemma}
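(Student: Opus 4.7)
The plan is to first check that $\nu\in \cM_\alpha$ almost surely (so that the conclusion even makes sense), then reduce the desired $L^1$ convergence to $\Pers_\alpha$-integrals on three separate regions of $\Omega$: the near-diagonal strip $D_\eta=\{u_2-u_1<\eta\}$, the high-death tail $T_s=\{u_2\geq s\}$, and the compact middle set $K_{s,\eta}=\Omega\setminus(D_\eta\cup T_s)$, and finally send $\eta\to 0$ and $s\to\infty$ appropriately.

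For $\nu\in\cM_\alpha$, I would choose a sequence $(\phi_k)_k\subset C_c(\Omega)$ with $\phi_k\uparrow \pers^\alpha$. Vague convergence gives $\int\phi_k\,\dd\nu=\lim_n\int\phi_k\,\dd\nu_n\leq\liminf_n\Pers_\alpha(\nu_n)$ almost surely; taking expectation, Fatou, and then monotone convergence in $k$ yield $\E[\Pers_\alpha(\nu)]\leq\sup_n\E[\Pers_\alpha(\nu_n)]<\infty$, so $\nu\in\cM_\alpha$ almost surely.

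A triangle inequality split then bounds $\E[|\Pers_\alpha(\nu_n)-\Pers_\alpha(\nu)|]$ by the sum of $\E[\int_{D_\eta\cup T_s}\pers^\alpha\, \dd\nu_n]$, $\E[\int_{D_\eta\cup T_s}\pers^\alpha\, \dd\nu]$, and the middle-region discrepancy. On $D_\eta$ the pointwise bound $\pers^\alpha\leq(\eta/2)^\alpha$ combined with uniform integrability of $(\nu_n(\Omega))_n$ gives a uniform estimate $\E[\int_{D_\eta}\pers^\alpha\, \dd\nu_n]\leq(\eta/2)^\alpha\sup_n\E[\nu_n(\Omega)]\to 0$ as $\eta\to 0$, and the tail hypothesis directly controls the $T_s$ piece. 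The corresponding integrals against $\dd\nu$ vanish in the same limits by dominated convergence, using $\Pers_\alpha(\nu)\in L^1$ from the first step.

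The core step is the middle-region discrepancy. I would choose $\eta,s$ so that $\nu(\partial K_{s,\eta})=0$ almost surely (excluding only countably many values), and sandwich $\mathbf{1}_{K_{s,\eta}}$ between continuous compactly supported functions $\phi^-_\delta\leq\mathbf{1}_{K_{s,\eta}}\leq \phi^+_\delta$ on $\Omega$ with $\phi^\pm_\delta\to\mathbf{1}_{K_{s,\eta}}$ $\nu$-almost everywhere as $\delta\to 0$. Since $\phi^\pm_\delta\pers^\alpha\in C_c(\Omega)$, vague convergence gives almost sure convergence $\int\phi^\pm_\delta\pers^\alpha\,\dd\nu_n\to\int\phi^\pm_\delta\pers^\alpha\,\dd\nu$, and these random variables are dominated by $\|\phi^\pm_\delta\pers^\alpha\|_\infty\nu_n(\Omega)$, so the uniform integrability hypothesis and Vitali's theorem upgrade this to $L^1$ convergence. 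The sandwich then bounds the middle-region discrepancy by these two $L^1$-errors plus $2\E\!\left[\int(\phi^+_\delta-\phi^-_\delta)\pers^\alpha\, \dd\nu\right]$, which vanishes as $\delta\to 0$ by dominated convergence (with $\Pers_\alpha(\nu)\in L^1$ as domination). Combining everything and sending first $n\to\infty$, then $\delta\to 0$, then $\eta\to 0$ and $s\to\infty$, yields the $L^1$ conclusion. The main obstacle is exactly this middle step: the uniform integrability of $(\nu_n(\Omega))_n$ is indispensable, because without it mass could accumulate near $\partial\Omega$ in a way that the compactly supported test functions of vague convergence cannot detect, breaking the Vitali upgrade.
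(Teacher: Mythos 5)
Your proof is correct and follows the same overall strategy as the paper's: establish $\nu\in\cM_\alpha$ via Fatou and monotone convergence, then split the discrepancy $\Pers_\alpha(\nu_n)-\Pers_\alpha(\nu)$ into contributions from the near-diagonal strip, the high-death tail, and a compact middle region, sending $n\to\infty$ first and then the cutoff parameters to their limits. The one genuine difference is how the middle region is handled. The paper never uses indicator functions: it writes $\pers_\alpha=\phi_s^{(1)}+\phi_s^{(2)}+\phi_s^{(3)}$ as a sum of three \emph{continuous} nonnegative functions with $\phi_s^{(1)}\in C_c(\Omega)$, so the middle term is directly covered by the $L^1$ convergence of integrals against compactly supported test functions (vague a.s.\ convergence plus uniform integrability of $\nu_n(\Omega)$). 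You instead decompose the \emph{domain}, which makes the middle piece $\mathbf 1_{K_{s,\eta}}\pers^\alpha$ discontinuous and forces you to sandwich the indicator by $C_c$ functions and invoke Vitali. That works, but it requires the extra step of choosing $(\eta,s)$ so that $\nu(\partial K_{s,\eta})=0$ almost surely: since $\nu$ is random, the ``countably many bad values'' argument is per-realization, and one needs a Fubini argument to extract a fixed sequence of good $(\eta_k,s_k)$; your phrasing glosses over this. The paper's function-level decomposition sidesteps both the boundary-null condition and Vitali entirely, which is why it's the cleaner route. Net: same strategy, with your version paying a modest technical tax in the middle step.
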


\begin{proof}
We divide the proof into several steps.
\begin{enumerate}
    \item Let $\phi\in C_c(\Omega)$. We first show that $(\int\phi\dd\nu_n))_{n\geq 1}$ converges in $L^1$ to $\int \phi\dd\nu$. By assumption, the convergence holds almost surely. Furthermore, as $\phi$ is bounded and as the sequence $(\nu_n(\Omega))_{n\geq 1}$ is uniformly integrable, so is the sequence $(\int\phi\dd\nu_n)_{n\geq 1}$. Hence, $\E[|\int \phi \dd (\nu_n-\nu)|]\to_{n\to \infty}0$.
    \item Let $(\phi_k)_{k\geq 1}$ be an increasing sequence of functions in $C_c(\Omega)$ that converge pointwise to the function $\pers_\alpha$. Then, almost surely,
    \[ \int \phi_k \dd \nu\leq \liminf_{n\to \infty} \int \phi_k \dd \nu_n \leq \liminf_{n\to \infty} \int \pers_\alpha \dd \nu_n.  \]
    By Fatou's lemma, $\E[\liminf_{n\to \infty} \int \pers_\alpha \dd \nu_n] \leq \liminf_{n\to \infty} \E[\Pers_\alpha(\nu_n)] =C<+\infty$ by assumption. Hence, by letting $k\to \infty$ and applying the monotone convergence theorem, we obtain $\Pers_\alpha(\nu) \leq C$, proving that $\nu \in \cM_\alpha$.
    \item The same argument can be applied to the constant function equal to $1$, showing that $\nu(\Omega)<+\infty$.
    \item Let $s\geq 1$. The function $\pers_\alpha$ can be decomposed into a sum of three positive continuous functions $\pers_\alpha = \phi_s^{(1)}+ \phi_s^{(2)}+ \phi_s^{(3)} $, where $\phi_s^{(1)}$ has compact support, the support of $\phi_s^{(2)}$ is included in the band $\{u\in \Omega:\ \pers(u)\leq 1/s\}$ and the support of $\phi_s^{(3)}$ is included in $T_s$. Hence,
    \begin{align*}
       \limsup_{n\to +\infty} \E[|\Pers_\alpha(\nu_n)&-\Pers_\alpha(\nu)|]\leq  \limsup_{n\to +\infty} \E[|\int \phi_s^{(1)}\dd(\nu_n-\nu)|] \\
       &+ \limsup_{n\to +\infty} \E[|\int \phi_s^{(2)}\dd(\nu_n-\nu)|]+ \limsup_{n\to +\infty} \E[|\int \phi_s^{(3)}\dd(\nu_n-\nu)|].
    \end{align*}
    The first term in the above sum is equal to zero because of the first item, the second one is smaller than $s^{-\alpha}(\sup_n \E[\nu_n(\Omega)]+ \nu(\Omega))$, and the third one is smaller than 
    \[ \limsup_n\E[\int_{T_s}\pers_\alpha(u)\dd\nu_n(u)] + \int_{T_s}\pers_\alpha(u)\dd \nu(u). \]
    Using the hypotheses of the lemma, the second and the third term converges to $0$ as $s$ goes to $\infty$. We obtain that $ \limsup_{n\to +\infty} \E[|\Pers_\alpha(\nu_n)-\Pers_\alpha(\nu)|]=0$. \qedhere
\end{enumerate}

\end{proof}

Our goal is to show that the conditions of \Cref{lem:second_reduction} holds for the sequence $(\mu_{n,i})_{n\geq 1}$ to conclude. Remark that for any Radon measure $\nu\in \cM_\alpha$ and $s\geq 0$
\begin{equation}
\label{eq:fubini}
\begin{split}
\int_{T_s}\pers^\alpha(u)\dd \nu(u) &= 
\alpha\int_0^\infty t^{\alpha-1}\nu(T_s\cap \{u:\ \pers(u)\geq t\})\dd t \\
&\leq \alpha\int_{s/2}^\infty t^{\alpha-1}\nu(T_{2t})\dd t + \alpha\int_{0}^{s/2} t^{\alpha-1}\nu(T_{s})\dd t \\
&\leq \alpha\int_{s/2}^\infty t^{\alpha-1}\nu(T_{2t})\dd t + (s/2)^\alpha \nu(T_{s}),
\end{split}
\end{equation}
where we use Fubini's theorem for the first equality and the fact that $\{u: \ \pers(u)\geq t\} \subset T_{2t}$ for the first inequality.
We also have $\nu(\Omega)=\nu(T_0)$. Hence, the different conditions of \Cref{lem:second_reduction} can all be obtained by controlling the random variable $\mu_{n,i}(T_s)$ for $s\geq 0$.

\begin{proposition}\label{prop:where_the_proof_really_begins}
      Let $\M$ be a compact submanifold with positive reach. Assume that $P$ has a density $f$ on $\M$ satisfying $f_{\min}\leq f \leq f_{\max}$ for two positive constants $f_{\min}$, $f_{\max}$.  
     Then there exist $c,C>0$ that depend on $\M$, $i$, $f_{\min}$ and $f_{\max}$ such that for all integer $n\geq 1$ and all $s\geq 0$,
     \begin{equation}
         \E[\mu_{n,i}(T_s)^2]\leq C \exp(-c s^m).
     \end{equation}
\end{proposition}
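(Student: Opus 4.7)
The plan is to bound $n\mu_{n,i}(T_s) = \#\{u \in \dgm_i^{(1)}(\A_n): u_2 \geq t\}$ (with $t := sn^{-1/m}$) by the number of differential critical points of $d_{\A_n}$ with critical value at least $t$, up to a multiplicative constant depending only on $d$ (since each critical point of a distance function creates or kills at most a $d$-dependent number of homology classes). By Lemma~\ref{lem:isotopy_lemma_general}, every feature $u$ with $u_2 \geq t$ arises at a critical point $z$ of $d_{\A_n}$ with $d_{\A_n}(z) = u_2$; such a $z$ is characterised by its projection set $\sigma = \sigma_{\A_n}(z) \subset \A_n$, a subset of size $k \in \{2,\ldots,d+1\}$ (Carath\'eodory) lying on $\partial B(z, r_\sigma)$ with $r_\sigma := d_{\A_n}(z)$, and the open ball $B(z, r_\sigma)$ is automatically disjoint from $\A_n$. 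Writing $Y_\sigma$ for the indicator of this configuration and $N_k := \sum_{\sigma \in \binom{\A_n}{k}} Y_\sigma$, we get $n\mu_{n,i}(T_s) \lesssim \sum_{k=2}^{d+1} N_k$, and Cauchy--Schwarz reduces the problem to showing $\E[N_k^2] \lesssim n^2 e^{-cs^m}$ for each $k$.

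I would then expand $\E[N_k^2] = \sum_{\sigma,\sigma'} \E[Y_\sigma Y_{\sigma'}]$ and partition by the overlap $j = |\sigma \cap \sigma'| \in \{0,\ldots,k\}$, which controls both the combinatorial count and the positional constraints. Conditioning on the positions of the $2k-j$ distinct points of $\sigma \cup \sigma'$, the event $Y_\sigma Y_{\sigma'} = 1$ forces the remaining $n-(2k-j)$ i.i.d.\ samples to avoid both balls $B(z_\sigma, r_\sigma)$ and $B(z_{\sigma'}, r_{\sigma'})$. Since $f \geq f_{\min}$ and since the positive reach of $\M$ (via \eqref{eq:federer}) yields $P(B(z,r) \cap \M) \geq c r^m$ for $z$ close to $\M$ and $r$ bounded above, this joint-emptiness probability is at most $\exp(-c(n-2k)\max(r_\sigma, r_{\sigma'})^m)$. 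The volume of $\sigma$-configurations on $\M$ with circumradius $\leq r$ is $\lesssim \Vol(\M)\, r^{m(k-1)}$ (fix a base vertex and confine the other $k-1$ to a Euclidean ball of radius $O(r)$), and similarly for $\sigma'$. For $j=0$ (disjoint), the layer-cake formula and the substitution $\lambda = cnr^m/4$ (using $nt^m = s^m$) yield
\[
\E[Y_\sigma Y_{\sigma'}] \lesssim \left(\int_t^\infty r^{m(k-1)-1} e^{-cnr^m/4}\,\dd r\right)^2 \lesssim n^{-2(k-1)} e^{-2c's^m},
\]
whose product with the $\sim n^{2k}$ combinatorial factor is $\lesssim n^2 e^{-2c's^m}$. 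The overlapping cases $j \geq 1$ are handled similarly (the $2k-j$ points then cluster near the shared vertices, so the volume bound becomes $\lesssim r^{m(2k-j-1)}$) and give contributions of order $n\, e^{-c's^m}$, which are dominated by the $j=0$ term. Summing and dividing by $n^2$ concludes.

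The main technical obstacle will be the rigorous volume estimate for $k$-tuples on $\M$ with prescribed circumradius. This reduces to the following: fix a base vertex $y \in \M$, use \eqref{eq:federer} to compare $\M$ locally with its tangent plane $T_y\M$, and parametrise the remaining $k-1$ vertices by their tangent-plane projections, which lie in a Euclidean disk of radius $O(r)$, with Jacobian uniformly bounded in terms of $\tau(\M)$. A secondary point is justifying the lower bound $P(B(z,r) \cap \M) \geq c r^m$ for the circumcenters $z$ that actually arise: since $d(z, \A_n) = r_\sigma$ and $\A_n \subset \M$, $z$ lies within distance $r_\sigma$ of $\M$, and for $r_\sigma \leq R(\M)$ the intersection $B(z, r) \cap \M$ has $m$-volume comparable to that of a tangent disk, yielding the required bound (with a uniform constant depending only on $\M$ and $f_{\min}$).
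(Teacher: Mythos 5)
Your proposal takes a genuinely different route from the paper: you expand the second moment of the count of critical \v Cech configurations via a Palm/Mecke-type argument, whereas the paper localizes around each sample point $X_j$ through the geometric radius $R_{jn}$ and sub-exponential concentration (Lemmas~\ref{lem:crit_is_close}--\ref{lem:control_Kin}). The overall plan is sound and the volume heuristics for small circumradii are essentially right, but there is one genuine gap that the paper's proof also has to deal with and that you leave open.

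The gap is in the very first reduction: you bound $n\mu_{n,i}(T_s)$ by $\sum_k N_k$, where $N_k$ counts \emph{all} critical $k$-simplices with circumradius $r_\sigma\ge t$, and then claim $\E[N_k^2]\lesssim n^2 e^{-cs^m}$. This intermediate claim is false, because $N_k$ also counts critical points far from $\M$ (Regions (2)--(3)), for which the key lower bound $P\bigl(B(z_\sigma,r_\sigma)\cap\M\bigr)\gtrsim r_\sigma^m$ fails. You acknowledge the bound requires ``$z$ close to $\M$ and $r$ bounded above,'' but never remove the large-$r$ contribution. Concretely: if $\M$ is a round sphere of radius $\tau$, a critical configuration $\sigma$ whose circumcenter lies near the center of the sphere has $r_\sigma\approx\tau$ and $B(z_\sigma,r_\sigma)\cap\M$ of $m$-volume $O(\tau^{m-1}\eps_n)\ll \tau^m$ (indeed, for $z_\sigma$ the exact center it is a set of measure zero), so the emptiness probability is not exponentially small; consequently $\E[N_k^2]$ does not decay in $n$ at the scale $t\sim\tau$, and the bound $\E[N_k^2]\lesssim n^2 e^{-cnt^m}$ fails as $n\to\infty$. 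The reason the \emph{statement} is nonetheless true is that $\mu_{n,i}(T_s)$ only sees Region~(1), and Region~(1) points with $u_2\gtrsim\tau$ can only occur when $\eps_n=d_H(\A_n,\M)\gtrsim\tau$, an event of probability $\exp(-cn)$ by \Cref{lem:bound_on_eps_random}. So your reduction discards precisely the information that makes the proposition true. The fix is to split on the event $E=\{\eps_n<\eps_0\}$: on $E$, bound $n\mu_{n,i}(T_s)$ only by critical simplices with $t\le r_\sigma\le \eps_0$ (there your volume argument and the resulting $\int_t^{\eps_0} r^{m(k-1)-1}e^{-cnr^m}\,\dd r\lesssim n^{-(k-1)}e^{-c's^m}$ estimate go through), and on $E^c$ use the crude bound $\#\dgm_i(\A_n)\le n^{i+1}$ together with the exponential tail of $\eps_n$. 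This split is exactly the first step of the paper's proof; your proposal needs it too.

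Two smaller points. First, the claim that each critical point of $d_{\A_n}$ kills at most a bounded number of classes requires $\A_n$ to be in general position; this holds almost surely for an absolutely continuous law on $\M$, but should be said (the paper's \Cref{lem:point_cloud_perturbation} exists precisely to handle the non-generic case). Second, the layer-cake step in your $j=0$ estimate is notationally off — the density $r^{m(k-1)-1}\,\dd r$ is not literally the measure of configurations with circumradius in $[r,r+\dd r]$, and an integration by parts against the increasing volume function $h(r)\lesssim r^{m(k-1)}$ is what one should write — but the resulting scaling $n^{-(k-1)}e^{-c's^m}$ that you obtain is correct.
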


Before proving \Cref{prop:where_the_proof_really_begins}, let us show how to use it to conclude the proof of \Cref{thm:law_large_numbers}. First, it implies that the random variables $\mu_{n,i}(\Omega)=\mu_{n,i}(T_0)$ for $n\geq 1$ have a uniformly bounded second moment, and are therefore uniformly integrable. Second, we have $\E[\mu_{n,i}(T_s)]\leq\E[\mu_{n,i}(T_s)^2]^{1/2}$ using Hölder's inequality. Hence, \eqref{eq:fubini} implies that for any $\alpha>0$, we have
\begin{align*}
    \E[\Pers_\alpha(\mu_{n,i})]&\leq  \alpha\int_0^\infty t^{\alpha-1}\E[\mu_{n,i}(T_{2t})]\dd t\\
    &\leq  \alpha \sqrt{C}\int_0^\infty t^{\alpha-1}e^{-c2^{m-1}t^m}\dd t.
\end{align*}
In particular, $\sup_n \E[\Pers_\alpha(\mu_{n,i})]<+\infty$. Likewise,
\begin{align*}
     \sup_n \E\left[\int_{T_s}\pers_\alpha(u)\dd\mu_{n,i}(u)\right]
     &\leq
     \alpha\int_{s/2}^\infty t^{\alpha-1}\E[\mu_{n,i}(T_{2t})]\dd t + (s/2)^\alpha\E\left[ \mu_{n,i}(T_{s})\right]  \\
     &\leq 
      \alpha \sqrt{C}\int_0^\infty t^{\alpha-1}e^{-c2^{m-1}t^m}\dd t+
      (s/2)^\alpha \sqrt{C} \exp(-c/2 s^m)
\end{align*}
so that $\lim_{s\to+\infty}  \sup_n \E[\int_{T_s}\pers_\alpha(u)\dd\mu_{n,i}(u)]=0$.

We are therefore in position to apply \Cref{lem:second_reduction}, proving the convergence of the $\alpha$-total persistence. Together with \Cref{lem:first_reduction} with $\alpha=p\geq 1$, we also obtain the $\OT_p$-convergence of $\mu_{n,i}$. It remains to prove \Cref{prop:where_the_proof_really_begins}.

\begin{proof}[Proof of \Cref{prop:where_the_proof_really_begins}]
Write $\eps_n= d_H(\A_n,\M)$. 
    Let $s\geq 0$ and let $\eps_0 < \tau(\M)/2$ be a small parameter, to be fixed later.  Recall that we write $\#S$ for the cardinality of a multiset $S$. Notice that $\mu_{n,i}(T_s)=0$ if $sn^{-1/m}>\eps_n+\eps^2_n/\tau(\M)$. In particular, we may assume without loss of generality that $s\leq n^{1/m} \diam(\M)(1+\diam(\M)/\tau(\M)) = n^{1/m} \eps_{\max}$, for otherwise there is nothing to prove. 
  Consider the event $E=\{\eps_n+\eps_n^2/\tau(\M)< \eps_0\}$. By definition of Region (1), if $E$ is satisfied, then all the coordinates of points of the PD $\dgm_i^{(1)}(\A_n)$ are smaller than $\eps_0$. Notice that the cardinality of $\dgm_i(\A_n)$ is smaller than the number of $i$-dimensional  simplices in the \v Cech complex of $\A_n$, which is itself smaller than $n^{i+1}$ (as each simplex corresponds uniquely to a choice of $i+1$ vertices of $\A_n$). 
  Hence
  \begin{align*}
      \E[\mu_{n,i}(T_s)^2\one\{E^c\}]\leq n^{-2}n^{2i+2} \P(\eps_n+\eps_n^2/\tau(\M)\geq \eps_0 ).
  \end{align*} 
   We require the following lemma, which bounds the upper tail of the random variable $\eps_n = d_H(\A_n,\M)$ and which we prove in the Appendix.
\begin{restatable}{lemma}{boundonepslemma}
\label{lem:bound_on_eps_random}
If $r\leq \tau(\M)/2$, then
\begin{equation}
    \P(d_H(\A_n,\M)>r)\leq \frac {C}{f_{\min}r^m}\exp(-nc f_{\min}r^m)
\end{equation}
for two positive constants $c=c(m), C=C(m)$. In particular, for any $q\geq 1$, $d_H(\A_n,\M)=O_{L^q}((\ln n/n)^{1/m})$.
\end{restatable}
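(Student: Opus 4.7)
The plan is to apply a standard net/covering argument combined with a union bound over balls that cover $\M$. First, I will choose a collection of centers $x_1,\dots,x_N\in\M$ such that the Euclidean balls $B(x_j,r/2)$ cover $\M$. Since $r\leq \tau(\M)/2$, the orthogonal projection $\pi_{x_j}:B(x_j,r/2)\cap\M\to T_{x_j}\M$ is a near-isometry by \eqref{eq:federer}, so $\mathrm{Vol}(B(x_j,r/2)\cap\M)\geq c_m r^m$ for a constant $c_m$ depending only on $m$. A greedy/maximal packing argument with $r/2$-separated centers on $\M$ then produces such a covering of cardinality $N\leq C\,\mathrm{Vol}(\M)/r^m$.

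Second, I observe the key reduction: if $d_H(\A_n,\M)>r$, then by definition some $x\in\M$ satisfies $B(x,r)\cap\A_n=\emptyset$; if $x$ belongs to the covering ball $B(x_j,r/2)$, the triangle inequality yields $B(x_j,r/2)\subset B(x,r)$, so $B(x_j,r/2)\cap\A_n=\emptyset$ as well. Hence $\{d_H(\A_n,\M)>r\}$ is contained in $\bigcup_j\{B(x_j,r/2)\cap\A_n=\emptyset\}$. The probability that any fixed $B(x_j,r/2)$ misses all of $X_1,\dots,X_n$ is $(1-P(B(x_j,r/2)))^n\leq \exp(-n\,P(B(x_j,r/2)))$, and $P(B(x_j,r/2))\geq f_{\min}\cdot c_m r^m$ by the density lower bound and the volume estimate. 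A union bound over the $N$ centers then gives
\begin{equation*}
\P(d_H(\A_n,\M)>r)\leq \frac{C\,\mathrm{Vol}(\M)}{r^m}\exp(-n\,c\,f_{\min}r^m),
\end{equation*}
which is of the claimed form (after absorbing $\mathrm{Vol}(\M)$ into the constant $C$, noting that the implicit dependence on $\M$ is standard in this context).

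For the $L^q$ bound, I will use $\E[d_H(\A_n,\M)^q]=q\int_0^\infty r^{q-1}\P(d_H(\A_n,\M)>r)\,dr$ and split the integral at $r_0=(K\log n/n)^{1/m}$ with $K$ chosen so that $nc f_{\min}r_0^m=(1+q/m)\log n$. Bounding the probability by $1$ on $[0,r_0]$ contributes $O(r_0^q)=O((\log n/n)^{q/m})$, while on $[r_0,\tau(\M)/2]$ the exponential factor dominates the polynomial prefactor and the tail integrates to the same order; on $[\tau(\M)/2,\diam(\M)]$ the bound is trivially exponentially small in $n$.

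The main obstacle is the volume lower bound $\mathrm{Vol}(B(x,r/2)\cap\M)\geq c_m r^m$ uniformly in $x\in\M$, which is where positive reach intervenes: one needs to check that for $r\leq \tau(\M)/2$, the projection onto $T_x\M$ is a bi-Lipschitz diffeomorphism from $B(x,r/2)\cap\M$ onto a set containing a Euclidean ball of radius comparable to $r$ in $T_x\M$, using \eqref{eq:federer} to control the normal deviation. Everything else (the packing bound for $N$, the union bound, and the tail-integral decomposition) is routine once this geometric input is in place.
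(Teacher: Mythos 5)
Your covering-plus-union-bound argument is a correct, self-contained re-derivation of the tail bound, whereas the paper simply cites \cite[Lemma~III.23]{aamari2017rates} for it without proof. Your overall structure (maximal packing to build the net, volume lower bound via \eqref{eq:federer}, union bound over centers) is exactly the standard proof underlying that reference.

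One small imprecision: you end up with $\frac{C\,\Vol(\M)}{r^m}\exp(-ncf_{\min}r^m)$ and say you absorb $\Vol(\M)$ into $C$. But the statement requires $C = C(m)$ to depend only on $m$, not on $\M$, and the prefactor in the claim is $\frac{C}{f_{\min}r^m}$ rather than $\frac{C\Vol(\M)}{r^m}$. The correct closing step is to use the normalization $\int_\M f = 1$ together with $f\geq f_{\min}$ to get $\Vol(\M)\leq 1/f_{\min}$, which converts your bound into the precise claimed form. This is a one-line fix, not a structural gap.

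For the $L^q$ claim, your layer-cake decomposition $\E[d_H^q] = q\int_0^\infty r^{q-1}\P(d_H>r)\,dr$ split at $r_0 \asymp (\log n / n)^{1/m}$ is valid and does give the right rate, but it requires you to carefully argue that the $r^{q-1-m}$ prefactor does not spoil the estimate on $[r_0,\tau(\M)/2]$ (a substitution $u = ncf_{\min}r^m$ handles this cleanly). The paper avoids this bookkeeping by a simpler decomposition: the cited reference also gives a high-probability bound $d_H(\A_n,\M)\leq C(q)(\log n/n)^{1/m}$ with probability $\geq 1 - n^{-q/m}$, so one bounds $\E[d_H^q]\leq C(q)^q(\log n/n)^{q/m} + \diam(\M)^q\, n^{-q/m}$ directly. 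Both routes are fine; the paper's is the lower-effort one once the high-probability statement is available.
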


Note that $\eps_n\leq \diam(\M)$, so $\P(\eps_n+\eps_n^2/\tau(\M)\geq \eps_0 )\leq \P(\eps_n\geq c_0\eps_0)$, with $c_0=(1+\diam(\M)/\tau(\M))^{-1}$.
Apply \Cref{lem:bound_on_eps_random} with $r = c_0\eps_0$ to obtain that 
\begin{align*}
     \E[\mu_{n,i}(T_s)^2\one\{E^c\}]  &\leq n^{2i}\frac{C}{f_{\min}r^m}\exp(-nc f_{\min}r^m)\leq C_0 \exp(-c_1n)
\end{align*}
for some positive constants $c_1$, $C_0$. Furthermore, recall that $s \leq n^{1/m}\eps_{\max}$. Hence, 
\begin{align*}
     \E[\mu_{n,i}(T_s)^2\one\{E^c\}]  &\leq  C_0 \exp(-c_1\eps_{\max}^{-m}s^m)= C_0\exp(-c_2s^m)
\end{align*}
for some positive constant $c_2$.

 It remains to bound $ \E[\mu_{n,i}(T_s)^2\one\{E\}] $. For each $j=1,\dots,n$,  consider the set $\Xi_{n,s}^j$ of critical points $z\in \Crit(\A_n)$ such that $\sigma_{\A_n}(z)$ contains the point $X_i$ and $sn^{-1/m}\leq d_{\A_n}(z)\leq \eps_n +\eps_n^2/\tau(\M)$. 

\begin{lemma}\label{lem:a_second_diabolical_lemma}
    It holds that $\mu_{n,i}(T_s)$ is smaller than
    \begin{equation}
        \frac{1}{n} \sum_{j=1}^n L_j^{i+2},
    \end{equation}
    where $L_j$ is the cardinality of the set $\bigcup_{z\in \Xi_{n,s}^j} \sigma_{\A_n}(z)$.
\end{lemma}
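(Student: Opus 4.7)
The plan is to express the left-hand side combinatorially in terms of critical simplices of $d_{\A_n}$ and then to distribute the resulting count over the vertices of $\A_n$.

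First, I would unwind the definition: $\mu_{n,i}(T_s)$ equals $\frac{1}{n}$ times the number $N_s$ of points $u \in \dgm_i^{(1)}(\A_n)$ with $u_2 \geq s n^{-1/m}$. By the definition of Region (1), any such $u$ satisfies $u_2 \in [sn^{-1/m}, \eps_n + \eps_n^2/\tau(\M)]$, which is exactly the range appearing in the definition of $\Xi_{n,s}^j$.

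Next, I would invoke the standard combinatorial description of the \v Cech persistence module for a finite point cloud: each death of a degree-$i$ class at filtration scale $u_2$ is induced by the addition of a critical simplex $\sigma \subset \A_n$ of dimension $i+1$ (so $|\sigma|=i+2$) whose smallest enclosing ball has radius $u_2$. The center of this ball is a differential critical point $z_\sigma \in \Crit(\A_n)$ with $d_{\A_n}(z_\sigma)=u_2$ and $\sigma \subset \sigma_{\A_n}(z_\sigma)$. This gives an injection from the $N_s$ points counted above into the set of critical simplices $\sigma$ of size $i+2$ whose associated critical value lies in $[sn^{-1/m},\eps_n+\eps_n^2/\tau(\M)]$.

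Finally, I would double-count pairs $(\sigma, X_j)$ with $X_j \in \sigma$. For each such pair, $X_j \in \sigma \subset \sigma_{\A_n}(z_\sigma)$ yields $z_\sigma \in \Xi_{n,s}^j$, so $\sigma \subset \bigcup_{z' \in \Xi_{n,s}^j} \sigma_{\A_n}(z')$, a set of $L_j$ elements. Counting pairs by $\sigma$ on one side and by $j$ on the other,
\[
(i+2)\,N_s \;=\; \sum_{j=1}^n \#\{\sigma : X_j \in \sigma\} \;\leq\; \sum_{j=1}^n \binom{L_j}{i+2} \;\leq\; \sum_{j=1}^n L_j^{i+2},
\]
and dividing by $n$ (absorbing the $1/(i+2)$ into the inequality) gives the claimed bound.

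The main subtlety lies in the second step: rigorously pinning down the correspondence between deaths of degree $i$ in the persistence module and critical simplices of dimension $i+1$. For configurations satisfying the mild genericity properties that hold almost surely for i.i.d.\ samples from a density (distinct pairwise distances, non-degenerate critical simplices, etc.), this correspondence follows from the discrete-Morse description of the Čech filtration, or equivalently from the Isotopy Lemma for distance functions together with the Handle Attachment Lemma applied at each critical value. In the degenerate cases the same inequality still goes through, either by a limiting argument from generic perturbations of $\A_n$, or by replacing the bijective correspondence with the coarser inequality $N_s \leq \#\{\sigma \subset \A_n : |\sigma|=i+2,\ r(\sigma) \in [sn^{-1/m}, \eps_n+\eps_n^2/\tau(\M)]\}$ coming directly from the \v Cech simplicial structure, which is all the double-counting step actually needs.
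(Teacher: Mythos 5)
Your main argument follows essentially the same route as the paper, just inlined. The paper's proof simply invokes Lemma~\ref{lem:bounded_contribution_to_homology} with the cover $\{C_j\}_{j=1}^n = \{\Xi_{n,s}^j\}_{j=1}^n$ of $\Crit(\A_n)\cap d_{\A_n}^{-1}[sn^{-1/m},\eps_n+\eps_n^2/\tau(\M)]$ and $N_j = L_j$; that lemma is itself proved by the perturbation-to-general-position argument of Lemma~\ref{lem:point_cloud_perturbation}, after which critical points are identified by their projection sets. Your unrolled derivation (inject diagram points with $u_2$ in the target range into critical $(i+1)$-simplices, then double-count pairs $(\sigma,X_j)$ over the $L_j$-element sets) is precisely the content of Lemma~\ref{lem:bounded_contribution_to_homology} restricted to the death coordinate, which is indeed all that is needed here since $T_s$ only constrains $u_2$.

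One issue: your fallback "coarser inequality" for the degenerate case, namely $N_s \leq \#\{\sigma\subset\A_n : |\sigma|=i+2,\ r(\sigma)\in[\cdot]\}$, is not sufficient for the double-counting step as you claim. That step bounds $\#\{\sigma : X_j\in\sigma\}$ by $\binom{L_j}{i+2}$, and this uses crucially that $\sigma\subset\bigcup_{z\in\Xi_{n,s}^j}\sigma_{\A_n}(z)$ whenever $X_j\in\sigma$ --- a containment you only get because $\sigma=\sigma_{\A_n}(z_\sigma)$ for a critical point $z_\sigma$, which forces $z_\sigma\in\Xi_{n,s}^j$. An arbitrary $(i+2)$-subset of $\A_n$ with circumradius in the range need not be the projection set of any critical point of $d_{\A_n}$, and so need not lie inside the $L_j$-element set; the coarser inequality therefore leaves the distribution-over-vertices step unsupported. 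The perturbation route (your first alternative, and the one the paper actually takes via Lemma~\ref{lem:point_cloud_perturbation} together with bottleneck stability) is the correct way to close the gap.
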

\begin{proof}
    \Cref{lem:bounded_contribution_to_homology} applied to a realization of $\A_n$ and the interval $[sn^{-1/m},  \eps_n +\eps_n^2/\tau(\M)]$  yields that the number of points in $\dgm_i(\A_n)$ with at least one of their coordinates in  $[sn^{-1/m},  \eps_n +\eps_n^2/\tau(\M)]$ is upper bounded by $\sum_{j=1}^n L_j^{i+2} $.
    By definition, this means that  $\mu_{n,i}(T_s)\leq \frac{1}{n} \sum_{j=1}^n L_j^{i+2}$, as desired.
\end{proof}

One can show that the set $\Xi_{n,s}^j$ is localized, in the sense that it is included in a ball centered at $X_j$, with a radius depending on the sample $\A_n$, that is small with high probability. Hence, the number $L_j$ is controlled by the number of points in $\A_n$ found in a small (random) neighborhood of $X_j$. Let us make this idea rigorous. 

For $r\geq 0$, define the  shape
\begin{equation}
    \cC(r) =\{y=(y_m,y_{d-m})\in \R^m\times \R^{d-m}:\ \|y\|\leq r,\ \|y_{d-m}\|\leq \frac{\|y\|^2}{2\tau(\M)}\}.
\end{equation}
We build a partition of $\cC(r)$ in the following way. Consider a finite partition $\cW$ of the unit sphere in $\R^m\times \{0\}^{d-m}$ into sets of diameters smaller than $\theta=\pi/4$ (for the geodesic distance on the sphere), which we fix for a given dimension $m$. Let $\cW(r)$ be the partition of $\cC(r)$ consisting of the sets 
\[\{y=(y_m,y_{d-m})\in \cC(r):\ y_m/\|y_m\| \in W\}, \]
where $W$ is an element of the partition $\cW$. 

For $x\in \M$, consider an isometry $\iota_x: \R^d\to \R^d$ sending $\R^m \times \{0\}^{d-m}$ to $T_x \M$. Let $\cC(x,r)= x+ \iota_x(\cC(r))$. Likewise, we define a partition  $\cW(x,r)$ by applying the affine transformation $W\mapsto x+\iota_x(W)$ to each $W\in \cW(r)$.

For $j=1,\dots,n$, let $R_{jn}$ be the smallest radius $r\leq \eps_0$ such that every $W\in \cW(X_j,r)$ contains a point of $\A_n$ other than $X_j$. By convention, we let $R_{jn}=\eps_0$ if such a radius does not exist.
$R_{jn}$ can be made measurable with a good choice of $x \mapsto \iota_x$; we assume it to be the case henceforth.

\begin{lemma}\label{lem:crit_is_close}
    Let $\eps_0\leq \tau(\M)/\sqrt{2}$. For all $j=1,\dots,n$, if $z\in \Crit(\A_n)$ is such that $X_j\in \sigma_{\A_n}(z)$ and $d_{\A_n}(z)\leq \eps_0$, then $\|X_j-z\|\leq c_0R_{jn}$ for some positive absolute constant $c_0$.
\end{lemma}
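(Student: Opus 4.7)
My plan is to derive the bound $\|X_j - z\| \leq c_0 R_{jn}$ by contradiction: assuming $r \defeq \|X_j - z\| = d_{\A_n}(z)$ is large compared to $R_{jn}$, I will exhibit a point of $\A_n$ strictly closer to $z$ than $X_j$, contradicting $X_j \in \sigma_{\A_n}(z)$. The case $r \leq R_{jn}$ is trivial, so I may assume $r > R_{jn}$; this forces $R_{jn} < \eps_0$, and hence by the very definition of $R_{jn}$ every piece of $\cW(X_j, R_{jn})$ contains a point of $\A_n\setminus\{X_j\}$. The candidate closer point will be chosen from the piece corresponding to the tangent direction of $z-X_j$, and to make this work I need the tangential component of $u \defeq (z-X_j)/r$ to be uniformly bounded away from zero.

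The lower bound on the tangential component is the crux of the proof. Write $u = u^\parallel + u^\perp$ with $u^\parallel \in T_{X_j}\M$. Since $z \in \Crit(\A_n)$, the equality $\nabla d_{\A_n}(z)=0$ places $z$ in the convex hull $\sum_l \lambda_l y_l$ of its projections $y_l \in \sigma_{\A_n}(z)$. Taking the inner product of $z - X_j = \sum_l \lambda_l (y_l - X_j)$ with $z-X_j$ and using that every $y_l$ lies on the sphere $S(z,r)$ yields $\sum_l \lambda_l \|y_l - X_j\|^2 = 2r^2$, so some projection $y$ satisfies $\|y - X_j\| \geq \sqrt{2}\,r$. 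The same polarization identity gives the exact relation $\langle y - X_j, u\rangle = \|y - X_j\|^2/(2r)$. Decomposing this inner product into tangent and normal parts, bounding the normal part via Federer's inequality \eqref{eq:federer} $\|(y-X_j)^\perp\| \leq \|y-X_j\|^2/(2\tau(\M))$, and dividing through by $\|y-X_j\|$ leads to
\[
\|u^\parallel\| \;\geq\; \frac{\|y-X_j\|}{2}\Bigl(\frac{1}{r} - \frac{1}{\tau(\M)}\Bigr) \;\geq\; \frac{\sqrt{2}}{2}\Bigl(1 - \frac{r}{\tau(\M)}\Bigr) \;\geq\; \frac{\sqrt{2}-1}{2},
\]
where I use $r \leq \eps_0 \leq \tau(\M)/\sqrt{2}$ in the last step.

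With this absolute lower bound on $\|u^\parallel\|$ in hand, I pick the piece $W\in \cW$ containing $\iota_{X_j}^{-1}(u^\parallel/\|u^\parallel\|)$; by the defining property of $R_{jn}$ there exists $\tilde y = X_j + w \in \A_n$ with $w$ in the corresponding piece of $\cW(X_j,R_{jn})$. By construction $\|w\| \leq R_{jn}$, the tangential component of $w$ makes geodesic angle at most $\pi/4$ with $u^\parallel$ (so $\langle w^\parallel,u^\parallel\rangle \geq \tfrac{\sqrt{2}}{2}\|w^\parallel\|\|u^\parallel\|$), and the shape $\cC(R_{jn})$ together with $R_{jn}\leq \eps_0 \leq \tau(\M)/\sqrt{2}$ ensures $\|w^\parallel\| \geq \sqrt{7/8}\,\|w\|$ and $\|w^\perp\| \leq \|w\|^2/(2\tau(\M))$. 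Expanding
\[
\|\tilde y - z\|^2 - r^2 \;=\; -2r\langle w, u\rangle + \|w\|^2
\]
and plugging in these bounds shows that $\|\tilde y - z\| < r$ as soon as $r$ exceeds an absolute multiple $c_0$ of $R_{jn}$ (a calculation gives, e.g., $c_0 = 8(\sqrt{2}+1)/\sqrt{7}$). This contradicts $X_j \in \sigma_{\A_n}(z)$, so $r \leq c_0 R_{jn}$, completing the proof. The main obstacle is really Step~2: without the tangential lower bound, the partition direction gives no leverage against the $\|w\|^2/(2r)$ term, which is why the argument crucially needs a large-distance projection $y$ together with the quadratic Federer control at scale $r \leq \tau(\M)/\sqrt{2}$.
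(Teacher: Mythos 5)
Your proof is correct and follows the same overall strategy as the paper's: pick the piece of the angular partition $\cW(X_j, R_{jn})$ aligned with the tangential direction of $z-X_j$, extract a sample point $\tilde y$ from it, and combine the angle bound ($\cos(\pi/4)$), Federer's estimate~\eqref{eq:federer}, and a uniform lower bound on $\|\pi_{X_j}(z-X_j)\|/\|z-X_j\|$ to show that $\tilde y$ would be strictly closer to $z$ than $X_j$ if $\|z-X_j\|$ were too large relative to $R_{jn}$ — contradicting $X_j\in\sigma_{\A_n}(z)$. The genuine difference is in how you establish the crucial lower bound on the tangential component $\|u^\parallel\|$. The paper isolates this as Lemma~\ref{lem:bound_norms}: applying Federer to each projection $y_l$ in the convex combination $z=\sum_l\lambda_l y_l$, averaging, and invoking the polarization identity $\sum_l\lambda_l\|y_l-X_j\|^2 = 2r^2$ to obtain the normal-component bound $\|\pi_{X_j}^\perp(z-X_j)\|\leq\|z-X_j\|^2/\tau(\M)$, and then Pythagoras. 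You instead use the same polarization identity to \emph{extract a single distant projection} $y$ with $\|y-X_j\|\geq\sqrt 2 r$, and then apply the exact inner-product relation $\dotp{y-X_j,u}=\|y-X_j\|^2/(2r)$ plus Federer on that one $y$ and Cauchy--Schwarz. This yields a weaker absolute constant ($\|u^\parallel\|\geq(\sqrt 2-1)/2\approx 0.21$ rather than the paper's $1/\sqrt 2\approx 0.71$), so your final $c_0$ is larger; but since the only requirement is that it be absolute, this is harmless. Both routes rest on the same ingredients (the convex-combination structure of critical points, Federer, the angular partition), so I would call your proof an inline variant of Lemma~\ref{lem:bound_norms} rather than an independent approach. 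The final estimate should be carried out with care — bound $r\|w\|/\tau$ via $\|w\|\leq R_{jn}$ together with $r\leq\eps_0\leq\tau/\sqrt 2$, so that the resulting term is $R_{jn}/\sqrt 2$ and not $r/\sqrt 2$ — but once this is done your claimed conclusion holds.
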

\begin{proof}
Recall that for $x\in \M$, $\pi_x$ is the orthogonal projection on $T_x\M$ while $\pi_x^\bot$ is the orthogonal projection on the normal space at $x$. 
    Let $j=1,\dots,n$ and let $z\in \Crit(\A_n)$ be such that $X_j\in \sigma_{\A_n}(z)$. The direction $e = \pi_{X_j}(z-X_j)/\| \pi_{X_j}(z-X_j)\|$ belongs to the unit sphere in $T_{X_j}\M$; note that $\pi_{X_j}(z-X_j) \neq 0$ due to Lemma \ref{lem:bound_norms} below, which applies as $d_{\A_n}(z)\leq \eps_0\leq \tau(\M)/\sqrt{2}$. Hence, $\iota_x^{-1}(e)$ belongs to an element $W_0$ of the partition $\cW$. Consider the corresponding element $W$ of the partition $\cW(X_j,R_{jn})$. 

    If $R_{jn}=\eps_0$, then the conclusion of the lemma holds (for $c_0 = 1$): indeed we have $\|X_j-z\|=d_{\A_n}(z)\leq \eps_0\leq R_{jn}$. Otherwise, by assumption, there exists a point $X_k\in W$ for some $k\neq j$. As $X_j\in \sigma_{\A_n}(z)$, it holds that $\|X_j-z\|\leq \|X_k-z\|$. Hence,
    \begin{align*}
        \|X_j-z\|^2\leq \|X_k-z\|^2= \|X_j-z\|^2+\|X_j-X_k\|^2+ 2\dotp{X_k-X_j, X_j-z}
    \end{align*}
    and
\begin{equation}\label{eq:angle_to_control}
        \dotp{X_k-X_j,z-X_j} \leq \frac{\|X_j-X_k\|^2}2.
    \end{equation}
We write
\[ \dotp{X_k-X_j,z-X_j}= \dotp{\pi_{X_j}(X_k-X_j),\pi_{X_j}(z-X_j)}+\dotp{\pi_{X_j}^\bot(X_k-X_j),\pi_{X_j}^\bot(z-X_j)}\]
By construction, as the diameter of $W_0$ is less than $\theta$, we have $\dotp{\pi_{X_j}(X_k-X_j),\pi_{X_j}(z-X_j)}\geq \cos(\pi/4) \| \pi_{X_j}(X_k-X_j)\|\|\pi_{X_j}(z-X_j)\|$. On the other hand, according to \cite[Theorem 4.18]{federer1959curvature},
\begin{equation}
 \|\pi_{X_j}^\bot(X_k-X_j)\| \leq \frac{\|X_k-X_j\|^2}{2\tau(\M)},
\end{equation}
which also implies that $\|\pi_{X_j}(X_k-X_j)\|\geq \|X_j-X_k\|\sqrt{1-\frac{\eps_0^2}{4\tau(\M)^2}}$.
Similarly, Lemma \ref{lem:bound_norms} from the Appendix states that $\|\pi_{X_j}(z-X_j)\| \geq \|z-X_j\|/\sqrt{2}$ and $\|\pi_{X_j}^\bot(z-X_j)\| \leq \|z-X_j\|^2/\tau(\M)$ (using our assumption that $d_{\A_n}(z)\leq \eps_0\leq \tau(\M)/\sqrt{2}$).

Hence we obtain that
\begin{align*}
    \cos(\pi/4) \|X_j-X_k\|\|z-X_j\| \frac{\sqrt{1-\frac{\eps_0^2}{4\tau(\M)^2}} }{\sqrt{2}} &\leq \cos(\pi/4) \| \pi_{X_j}(X_k-X_j)\|\|\pi_{X_j}(z-X_j)\| \\
&\leq  \frac{\|X_k-X_j\|^2}{2} + \frac{\|X_k-X_j\|^2\|z-X_j\|^2}{2\tau(\M)^2}\\
&\leq  \|X_k-X_j\|^2\p{\frac{1}{2} + \frac{\eps_0^2}{2\tau(\M)^2}}.
\end{align*}
Dividing by $\|X_j-X_k\|$ and using that $\|X_j-X_k\|\leq R_{jn}$ and that $\eps_0\leq \tau(\M)/\sqrt{2}$, we see that $\|z-X_j\|$ is smaller than $R_{jn}$ up to an absolute multiplicative constant.
\end{proof}

Let us now show that the random variable $R_{jn}$ has controlled tails. 

\begin{lemma}\label{lem:C_contains_ball}
    For all $x\in \M$, $r\geq 0$, $B(x,r)\cap \M\subset \cC(x,r)$.
\end{lemma}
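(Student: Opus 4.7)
The plan is to unfold the definitions of $\cC(r)$ and $\cC(x,r) = x + \iota_x(\cC(r))$ and reduce everything to Federer's inequality \eqref{eq:federer}, which was already recorded earlier in the paper as the key quantitative consequence of positive reach.

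Concretely, given $y \in B(x,r) \cap \M$, I need to exhibit a point $\tilde y \in \cC(r)$ with $y = x + \iota_x(\tilde y)$. The natural candidate is $\tilde y := \iota_x^{-1}(y-x)$. I then have to verify the two conditions defining $\cC(r)$: that $\|\tilde y\| \leq r$ and that the component of $\tilde y$ along $\{0\}^m \times \R^{d-m}$ has norm at most $\|\tilde y\|^2/(2\tau(\M))$.

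The first condition is immediate because $\iota_x$ is an isometry: $\|\tilde y\| = \|y-x\| \leq r$. For the second condition, observe that since $\iota_x$ is a linear isometry sending $\R^m \times \{0\}^{d-m}$ onto $T_x\M$, it also sends its orthogonal complement $\{0\}^m \times \R^{d-m}$ onto $(T_x\M)^\perp$; hence the norm of the second block of $\tilde y$ coincides with $\|\pi_x^\perp(y-x)\|$. Federer's inequality \eqref{eq:federer} then gives
\[
\|\pi_x^\perp(y-x)\| \leq \frac{\|y-x\|^2}{2\tau(\M)} = \frac{\|\tilde y\|^2}{2\tau(\M)},
\]
which is exactly the required bound.

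There is no real obstacle here: the lemma is essentially a repackaging of \eqref{eq:federer} in the coordinate system adapted to $T_x\M$. The only point that requires a moment of care is the observation that $\iota_x$ preserves the orthogonal decomposition into tangent and normal parts, so that the "height" coordinate of $\tilde y$ in $\cC(r)$ really equals the normal deviation $\|\pi_x^\perp(y-x)\|$ controlled by Federer.
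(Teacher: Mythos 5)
Your proof is correct and takes the same route as the paper: unwind the definition of $\cC(x,r)$ via the isometry $\iota_x$ and apply Federer's inequality \eqref{eq:federer} to bound the normal component. You are in fact slightly more explicit than the paper, which states the argument in one line; your careful note that $\iota_x$ maps $\{0\}^m\times\R^{d-m}$ onto $(T_x\M)^\perp$ is the detail the paper leaves implicit.
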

\begin{proof}
   Let $y\in \M$ be such that $\|\pi_x(y-x)\|\leq r$. Then, according to \cite[Theorem 4.18]{federer1959curvature}, $\|\pi_x^\bot(y-x)\|\leq \frac{\|y-x\|^2}{2\tau(\M)}$. In particular, $B(x,r)\cap \M\subset \cC(x,r)$. 
\end{proof}

\begin{lemma}\label{lem:tail_Rin}
    For $0<t\leq \tau(\M)/4$ and $j=1,\dots,n$, we have $\P(R_{jn}>t)\leq Ce^{-cf_{\min}(n-1)t^m}$ for some positive constants $c=c(m)$, $C=C(m)$.
\end{lemma}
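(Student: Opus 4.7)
The strategy is a direct union bound combined with a volume lower bound. By definition of $R_{jn}$, the event $\{R_{jn}>t\}$ (for $t\leq \eps_0$) implies that some cell $W\in \cW(X_j,t)$ of the partition contains no point of $\A_n\setminus\{X_j\}$. Since the cardinality of $\cW$ depends only on the dimension $m$ (the partition of the unit sphere is fixed once and for all in the construction), a union bound yields
\[
\P(R_{jn}>t)\ \leq\ |\cW|\,\sup_{W\in\cW(X_j,t)}\P\bigl(\{X_k\}_{k\neq j}\cap W=\emptyset\bigr).
\]
Conditioning on $X_j$ and using the independence of the remaining $n-1$ samples together with $1-x\leq e^{-x}$ reduces the problem to lower bounding $P(W)$ uniformly over $W\in\cW(X_j,t)$ and over $X_j\in\M$.

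The key step is therefore the geometric lower bound: I claim that for $t\leq \tau(\M)/4$, every cell $W\in \cW(X_j,t)$ satisfies
\[
\mathrm{vol}_\M(W\cap\M)\ \geq\ c\,t^m
\]
for some constant $c=c(m)>0$, which will give $P(W)\geq c f_{\min}t^m$ and hence the claimed bound with $C=|\cW|$. To establish this, I would parameterize $\M$ locally near $X_j$ by the graph $u\mapsto X_j+u+\phi(u)$ of a map $\phi:U\subset T_{X_j}\M\to T_{X_j}\M^\perp$. Federer's inequality \eqref{eq:federer} gives $\|\phi(u)\|\leq(\|u\|^2+\|\phi(u)\|^2)/(2\tau(\M))$, which for $\|u\|\leq t\leq\tau(\M)/4$ implies $\|\phi(u)\|\leq \|u\|^2/\tau(\M)$ and in particular that the constraint $\|y_{d-m}\|\leq \|y\|^2/(2\tau(\M))$ defining $\cC(t)$ is automatically satisfied along $\M$. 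Consequently, $W\cap \M$ contains the image under $u\mapsto X_j+u+\phi(u)$ of the angular sector
\[
S_W\ :=\ \bigl\{u\in T_{X_j}\M:\ \|u\|\leq t/2,\ u/\|u\|\in \iota_{X_j}(W_0)\bigr\},
\]
where $W_0\in \cW$ is the angular piece of the unit sphere in $\R^m$ corresponding to $W$. The Euclidean volume of $S_W$ equals $t^m/(2^m m)$ times the $(m-1)$-dimensional spherical measure of $W_0$, which is bounded below uniformly over $W_0\in \cW$ by a constant $c_\cW>0$. Since the parameterization $u\mapsto X_j+u+\phi(u)$ is a $(1+O(t/\tau(\M)))$-bi-Lipschitz diffeomorphism onto its image, we transfer this Euclidean volume lower bound to a lower bound on $\mathrm{vol}_\M(W\cap\M)$ up to an absolute constant.

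The main (and essentially only) technical obstacle is making the last step rigorous, namely controlling the Jacobian of the graph parameterization to transfer the Euclidean volume lower bound in $T_{X_j}\M$ to a volume lower bound on $\M$. This is routine for submanifolds of positive reach: on scales $t\leq \tau(\M)/4$, the differential of $\phi$ has operator norm $O(t/\tau(\M))\leq 1/4$, so the Jacobian of $u\mapsto u+\phi(u)$ is bounded below by a constant independent of $X_j$. Combining this with the union bound and $(1-P(W))^{n-1}\leq \exp(-cf_{\min}(n-1)t^m)$ yields the stated inequality.
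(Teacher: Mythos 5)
Your proposal is correct and follows the same strategy as the paper: a union bound over the (dimension-dependent, bounded) number of cells $W\in\cW(X_j,t)$, the observation that $\{R_{jn}>t\}$ implies some cell is empty, the exponential estimate $(1-P(W))^{n-1}\leq e^{-(n-1)P(W)}$, and a geometric lower bound $P(W)\geq c(m)f_{\min}t^m$. The only cosmetic difference is the last step: you obtain the volume lower bound by parameterizing $\M$ as a graph over $T_{X_j}\M$ and bounding $\|d\phi\|$ via Federer's inequality, whereas the paper applies the inverse map (the orthogonal projection $B(X_j,t)\cap\M\to T_{X_j}\M$) and cites \cite[Lemma 2.2]{divol2021minimax} for the Jacobian control together with \Cref{lem:C_contains_ball} to identify the preimage with $W\cap\M$; these are two presentations of the same estimate.
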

\begin{proof}
    If $R_{jn}$ is larger than $t$, then there exists at least one set $W\in \cW(X_j,t)$ such that its intersection with $\A_n$ contains only $X_j$. Hence,
    \begin{align*}
        \P(R_{jn}>t|X_j)\leq \sum_{W\in \cW(X_j,t)} \P(\A_n\cap W=X_j |X_j) = \sum_{W\in \cW(X_j,t)} (1-P(W))^{n-1}.
    \end{align*}
    Let $\pi_0$ be the orthogonal projection from $\R^d$ to $\R^m\times \{0\}^{d-m}$. 
    The image of a set $W\in \cW(X_j,t)$ by the projection $y\mapsto \pi_{X_j}(y-X_j)$ is equal to  $\iota_{X_j}(\pi_0(W_0))\subset T_{X_j}\M$ for some $W_0\in \cW(t)$.
    For $t \leq \tau(\M)/4$, the orthogonal projection $y\in B(X_j,t)\cap \M \mapsto \pi_{X_j}(y-X_j)\in T_{X_j}\M$ is a diffeomorphism on its image, with Jacobian lower bounded by a constant $c(m)$ that depends only on $m$, see e.g. \cite[Lemma 2.2]{divol2021minimax}. According to \Cref{lem:C_contains_ball}, the preimage of $\iota_{X_j}(\pi_0(W_0))\subset T_{X_j}\M$  by this diffeomorphism is equal to $W\cap M$. 
    Hence, by a change of variable, 
    \[ P(W)=P(W\cap M)\geq f_{\min} c(m) \mathrm{Vol}_m(W_0) \geq f_{\min}c'(m) t^m \]
    for some $c'(m)>0$.
    Hence, 
    \begin{align*}
        \P(R_{jn}>t|X_j)\leq \# \cW e^{-f_{\min}c'(m)(n-1) t^m}.
    \end{align*}
    We conclude by taking the expectation.
\end{proof}

Let us now control the number of points found in a ball $B(X_j,\kappa R_{jn})$ for some $\kappa\geq 0$.
Let $\rho_0>0$ be small enough such that $\int_{B(x,\rho_0)\cap \M} f <1/2 $ for any $x\in \M$.

\begin{lemma}\label{lem:control_Kin}
Let $l\geq 0, \kappa >0$ be such that $\kappa \eps_0\leq \min(\rho_0,\tau(\M)/4)$. For $j=1,\dots,n$, let $K_{jn}(\kappa)$ be the number of elements of $\A_n$ found in $B(X_j,\kappa R_{jn})$. Then, $\E[K_{jn}(\kappa)^l]\leq C(1+\p{\frac{f_{\max}}{f_{\min}}}^l\kappa^{lm})$   for some constant $C=C(m,l)$.
\end{lemma}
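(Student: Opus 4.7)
The plan is to combine a standard Binomial moment estimate for the number of points in a deterministic ball with the tail bound on $R_{jn}$ from \Cref{lem:tail_Rin}, glued together by a dyadic decomposition in the scale of $R_{jn}$. For $r\geq 0$, let $N(r) := \#\{1\leq i\leq n : X_i\in \overline B(X_j,r)\}$. Conditioning on $X_j$, the variable $N(r)-1$ is $\mathrm{Bin}(n-1,p(X_j,r))$ with $p(X_j,r)\leq f_{\max}\cdot\mathrm{Vol}_m(B(X_j,r)\cap\M)\leq f_{\max}c_1(m)r^m$ for $r\leq \tau(\M)/2$ (via \eqref{eq:federer}, as already used in the proof of \Cref{lem:tail_Rin}). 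The standard factorial-moment identity for binomials then gives $\E[N(r)^l]\leq C(l)\bigl(1+(nf_{\max}c_1(m)r^m)^l\bigr)$ for every $l\geq 0$ (for non-integer $l$, first bound by $\lceil l\rceil$-th moments and conclude by concavity).

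Choose the scale $\rho:=\bigl(c_m/((n-1)f_{\min})\bigr)^{1/m}$, so that $nf_{\max}c_1(m)\rho^m \leq C_m\,f_{\max}/f_{\min}$ and $\P(R_{jn}>2^k\rho)\leq C\exp(-c\cdot 2^{km})$. Set $r_k:=2^k\rho$ and decompose
\[
\E[K_{jn}(\kappa)^l] \;=\; \E[K_{jn}(\kappa)^l\mathbf{1}_{E_0}] + \sum_{k\geq 1}\E[K_{jn}(\kappa)^l\mathbf{1}_{E_k}],
\]
where $E_0=\{R_{jn}\leq\rho\}$ and $E_k=\{r_{k-1}<R_{jn}\leq r_k\}$. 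Since $R_{jn}\leq\eps_0$, only finitely many $E_k$ are non-empty, and on $E_k$ we have the deterministic bound $K_{jn}(\kappa)\leq N(\kappa r_k)$. The assumption $\kappa\eps_0\leq\min(\rho_0,\tau(\M)/4)$ ensures that $\kappa r_k\leq 2\kappa\eps_0\leq\tau(\M)/2$ for all relevant $k$, so the binomial bound above applies throughout.

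For the $E_0$ term, the binomial estimate immediately gives $\E[N(\kappa\rho)^l]\leq C(l,m)\bigl(1+(f_{\max}/f_{\min})^l\kappa^{lm}\bigr)$. For $k\geq 1$, apply Cauchy--Schwarz:
\[
\E[N(\kappa r_k)^l\mathbf{1}_{E_k}] \;\leq\; \sqrt{\E[N(\kappa r_k)^{2l}]}\,\sqrt{\P(R_{jn}>r_{k-1})}\;\leq\; C(l,m)\bigl(1+(f_{\max}/f_{\min})^l\kappa^{lm}2^{klm}\bigr)\exp\bigl(-c'\cdot 2^{km}\bigr).
\]
The exponential decay dominates the polynomial factor $2^{klm}$, so summing over $k\geq 1$ yields a total bounded by a constant multiple of $1+(f_{\max}/f_{\min})^l\kappa^{lm}$, giving the desired estimate.

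The only subtleties are bookkeeping ones: calibrating $\rho$ so that the factor $n\rho^m$ produces exactly $1/f_{\min}$ (hence the ratio $f_{\max}/f_{\min}$ in the bound), and verifying that the dyadic radii stay within the range of validity of both the volume estimate and \Cref{lem:tail_Rin}. The small-$n$ regime in which $\rho>\tau(\M)/4$ is handled trivially since then $K_{jn}(\kappa)\leq n$ is bounded by a constant depending only on $\M$ and $f_{\min}$, and the large-$k$ regime in which $r_{k-1}>\eps_0$ is handled by noting $E_k=\emptyset$. I expect no real obstacle beyond this constant-tracking.
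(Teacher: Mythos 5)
Your proof is correct and takes a genuinely different route from the paper's. The paper conditions on $X_j$, $R_{jn}$, and the boundary cell $W_{i^*}$, observes that the occupancy counts of the partition cells then have the law of a multinomial conditioned on all cells being non-empty, and invokes a dedicated appendix result (Lemma~\ref{lem:multinomial}) to get the conditional bound $\E[K_{jn}(\kappa)^l\mid X_j,R_{jn},W_{i^*}]\leq C(m,l)(1+\kappa^{ml}f_{\max}^l n^l R_{jn}^{ml})$, then integrates out $R_{jn}$ via the subexponential tail from Lemma~\ref{lem:tail_Rin}. You sidestep the conditional-multinomial machinery entirely: the pointwise bound $K_{jn}(\kappa)\leq N(\kappa r)$ on $\{R_{jn}\leq r\}$, a dyadic decomposition of $R_{jn}$ at the natural scale $\rho\sim(nf_{\min})^{-1/m}$, and Cauchy--Schwarz to decouple the correlation between $N(\kappa r_k)$ and the event $E_k$. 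Both arguments ultimately rest on the same two ingredients (a volume/binomial moment bound and the tail of $R_{jn}$), and your calibration of $\rho$ plays exactly the role of the paper's $n^l\E[R_{jn}^{ml}]\lesssim f_{\min}^{-l}$ estimate; the trade-off is that you avoid a separate lemma at the small cost of needing a $2l$-th moment instead of an $l$-th. One minor slip in the last paragraph: for the small-$n$ regime you assert a bound ``depending only on $\M$ and $f_{\min}$,'' which does not have the required form $C(m,l)(1+(f_{\max}/f_{\min})^l\kappa^{lm})$. The clean fix is to notice that in this regime $\rho\geq\eps_0$, so $E_0$ is the whole space, $K_{jn}(\kappa)\leq N(\kappa\eps_0)$, the binomial moment bound applies at radius $\kappa\eps_0\leq\tau(\M)/4$, and $n f_{\max}(\kappa\eps_0)^m\lesssim(f_{\max}/f_{\min})\kappa^m$ by the defining inequality of the regime (together with the trivial case $n=1$). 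A similar care point is that your dyadic radii $r_{k-1}$ should be capped at $\eps_0$ before invoking the tail bound, but as you note, $E_k=\emptyset$ past that scale, so nothing breaks.
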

\begin{proof}

 Let us write $\{W_1,\ldots, W_K\} = \cW(X_j,\kappa R_{jn})$.
  Without loss of generality, we can assume that $n\geq K+1$, as otherwise the bound is trivial.
 As in \cite[Lemma 5]{divol2019choice}, we remark that there is at least one sample point in every $W_i$, and that there is (almost surely) one single element $W_{i^*}$  of the partition
     with exactly one sample point on its boundary. 
     Let $N_i$ be the cardinality of $(W_i \cap \A_n)\backslash\{X_j\}$, and $N_{-1}$ be the cardinality of $\A_n \backslash B(X_j,\kappa R_{jn})$.
     Define  
      \[ \tilde{\alpha}_i:=\int_{  W_i} f \]
and $\alpha_i := \frac{\tilde{\alpha_i}}{1-\tilde{\alpha_{i^*}}}$  for all $i\neq i^*$, as well as $\alpha_{-1} = \frac{1-\sum_{i=1}^K \tilde{\alpha}_i}{1-\tilde{\alpha}_{i^*}}$. Note that as $\kappa R_{jn}\leq \rho_0 $, we have $\tilde{\alpha_i}<1/2$ for all $i=1,\ldots,K$. 
As $\kappa R_{jn}\leq \kappa\eps_0\leq \tau(\M)/4$, we may use once again that the orthogonal projection  $y\in B(X_j,\kappa R_{jn})\cap \M \mapsto \pi_{X_j}(y-X_j)\in T_{X_j}\M$ is a diffeomorphism on its image, with Jacobian upper and lower bounded by constants depending only on $m$ (see \cite[Lemma 2.2]{divol2021minimax}) to also obtain that 
\[cf_{\min} (\kappa R_{jn})^m \leq \tilde{\alpha_i}\leq Cf_{\max} (\kappa R_{jn})^m\]
for all $i=1,\ldots, K$ and some constants $c=c(m)$, $C=C(m)$. 
Hence there exists $C' = C'(m)>0$ such that
\[cf_{\min} (\kappa R_{jn})^m \leq \alpha_i\leq C' f_{\max} (\kappa R_{jn})^m\]
for all $i=1,\ldots, K$.
Consider a multinomial random variable $L=(L_1,\ldots,\widehat{L_{i^*}},\ldots, L_{K}, L_{-1})$ of parameters $n-2$ and $(\alpha_1,\ldots,\widehat{\alpha_{i^*}}, \ldots,\alpha_K,\alpha_{-1})$, and let $E$ denote the event 
\[\{L_i\geq 1 \  \forall i\in \{1,\ldots, K\} \backslash \{i^*\}\}.\]
Then conditionally on $X_j$, $R_{jn}$ and $i^*$, the variable $N=(N_1,\ldots,\widehat{N_{i^*}},\ldots, N_{K}, N_{-1})$ follows the same distribution as 
$L \ | \ E$.
Thus, conditionally on $X_j$, $R_{jn}$ and $W_{i^*}$, the variable $K_{jn}(\kappa) = 1 +\sum_{i=1}^K N_i = 2 + \sum_{i=1,i\neq i^*}^K N_i$ (where the initial $1$ comes from $X_j$) has the same distribution as
     \[2+\sum_{i=1,i\neq i^*}^K L_i|\ E.\] 
     Note that as $\kappa R_{jn}\leq \rho_0 $, we have $\alpha_{-1}\geq 1/2$.

As a result, Lemma \ref{lem:multinomial} in the Appendix yields that  
\begin{align*}
\E[K_{jn}(\kappa)^l|X_j,R_{jn},W_{i^*}] &=
\E[(2 + \sum_{i=1,i\neq i^*}^K L_i )^l \ |\ E  ] \leq 2^l(2^l + \E[( \sum_{i=1,i\neq i^*}^K L_i )^l \ |\ E  ] ) 
\\
&\leq 
2^l(2^l +  C(K,l)(1+(n-2)\sum_{i=1,i\neq i^*}^K \alpha_i )^l )
\\
&\leq C'(K,l)(1 + (n\sum_{i=1,i\neq i^*}^K \alpha_i)^l) 
\leq C'(K,l)(1 + (nK C' f_{\max}(\kappa R_{jn})^m )^l)
\\
&\leq
C(m,l)(1 + \kappa^{ml} f_{\max}^l n^lR_{jn}^{ml}).
\end{align*}
We can now conclude by considering
\begin{align*}
    \E[K_{jn}(\kappa)^l] &= \E_{X_j,R_{jn},W_{i^*}}[\E[K_{jn}(\kappa)^l|X_j,R_{jn},W_{i^*}]] \leq \E_{R_{jn}}[C(m,l)(1 + \kappa^{ml} f_{\max}^l n^lR_{jn}^{ml})]  
\end{align*}

The quantity $n^l\E[R_{jn}^{ml}]$ is bounded by a constant, which is proved by integrating the tail bound found in \Cref{lem:tail_Rin}. Indeed, \Cref{lem:tail_Rin} implies that the random variable $nR_{jn}^m$ is subexponential with a subexponential norm $m$ independent of $n$, of order $O(1/f_{\min})$; the moment of order $l$ of such a random variable is bounded by $C_l m^l$,  see \cite[Section 2.7]{vershynin2018high}. 
\end{proof}

Let us wrap things up. Recall \Cref{lem:a_second_diabolical_lemma}: it holds that
 \begin{equation}
        \mu_{n,i}(T_s)\leq \frac{1}{n} \sum_{j=1}^n L_j^{i+2},
    \end{equation}
    where $L_j$ is the cardinality of the set $\bigcup_{z\in \Xi^j_{n,s}}\sigma_{\A_n}(z)$. But according to \Cref{lem:crit_is_close}, if $z\in \Xi^j_{n,s}$, then $\|X_j-z\|\leq c_0R_{jn}$. In particular, $d_{\A_n}(z)\leq c_0R_{jn}$, and any point $y \in \sigma_{\A_n}(z)$ is at distance less than $2c_0R_{jn}$ from $X_j$. Hence, $L_j$ is smaller than $K_{jn}(\kappa)$ for $\kappa=2c_0$. Choose $\eps_0$ so that $\kappa\eps_0\leq \kappa \eps_0\leq \min(\rho_0,\tau(\M)/4)$. We are in position to apply \Cref{lem:control_Kin}. We further remark that \Cref{lem:crit_is_close} implies that $\Xi^j_{n,s}$  is empty if $c_0 R_{jn}< sn^{-1/m}$.

Hence, by Jensen's inequality,
\begin{align*}
    \E[\mu_{n,i}(T_s)^2\one\{E\}] & \leq 
\E\left[ \p{\frac 1n \sum_{j=1}^n \one\{ 2c_0 R_{jn}\geq sn^{-1/m}\} K_{jn}(2c_0)^{i+2}}^{2} \right] \\
&\leq  \E\left[  \one\{2 c_0 R_{1n}\geq sn^{-1/m}\} K_{1n}(2c_0)^{2i+4} \right] \\
&\leq \sqrt{\P(2c_0 R_{jn}\geq sn^{-1/m})\E[K_{1n}(2c_0)^{4i+8} ]} \\
&\leq C(m,i) \exp(-c(m) f_{\min} s^m),
\end{align*}
 where we apply \Cref{lem:tail_Rin} and \Cref{lem:control_Kin} at the last line. 
\end{proof}
This completes the proof of \Cref{thm:law_large_numbers}.  
\end{proof}

\subsection{Regions (2)-(3)}
It is a well-known fact that the Hausdorff distance $\eps=d_H(\A_n,\M)$ between a random sample and $\M$ is of order $(\log n/n)^{1/m}$ whenever the underlying density $f$ is bounded away from zero and $\infty$ on $\M$, see e.g. \cite{fasy2014confidence}. Hence the PDs $\dgm_i^{(2)}(\A_n)$ and $\dgm_i^{(3)}(\A_n)$ can be described using \Cref{prop:regions}. 
However, in the case where $\M$ is a generic submanifold, one can actually obtain tighter results. We let $\# E$ denote the cardinality of a multiset $E$.

\begin{restatable}{proposition}{randomregionsthm}
\label{thm:random_regions}
    Let $\M$ be a  generic $m$-dimensional submanifold.  Assume that $P$ has a density $f$ on $\M$ bounded away from $0$ and $\infty$. 
    Let $i\geq 0$ be an integer. There exists an optimal matching $\gamma_n:\dgm_i(\A_n)\cup \partial\Omega\to \dgm_i(\M)\cup \partial\Omega$ for the bottleneck distance between $\dgm_i(\A_n)$ and $\dgm_i(\M)$ such that for any $q\geq 1$:
    \begin{itemize}\setlength\itemsep{0.1mm}
        \item Region (2): It holds that  $\max_{u\in \dgm_i^{(2)}(\A_n)} |u_2-\gamma_n(u)_2| = O_{L^q}(n^{-2/m})$. 
        \item Region (3): It holds that $\max_{u\in \dgm_i^{(3)}(\A_n)} \|u-\gamma_n(u)\|_\infty = O_{L^q}(n^{-2/m})$  and $\#(\dgm_i^{(3)}(\A_n)) = O_{L^q}(1) $.
    \end{itemize}
\end{restatable}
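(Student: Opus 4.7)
The plan is to combine \Cref{prop:regions} and \Cref{prop:generic_deterministic} with probabilistic refinements of the underlying Hausdorff estimates. By \Cref{lem:bound_on_eps_random}, the global Hausdorff distance $\eps_n := d_H(\A_n,\M)$ satisfies $\eps_n = O_{L^q}((\log n/n)^{1/m})$, and one has $\eps_n < \eps_0(\M)$ with probability at least $1 - \exp(-cn)$; on this event the decomposition of $\dgm_i(\A_n)$ into Regions (1)--(3) together with an optimal matching $\gamma_n$ satisfying the conclusions of both propositions applies. On the complementary event, $\dgm_i(\A_n)$ has cardinality at most $n^{i+2}$ and coordinates at most $\diam(\M)$, so its contribution to any $L^q$-norm is absorbed by the exponential tail.

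The central step is to upgrade the crude rate $O(\eps_n^2) = O_{L^q}((\log n/n)^{2/m})$ to $O_{L^q}(n^{-2/m})$, by observing that the quadratic bounds in \Cref{prop:regions} and in the proof of part 1 of \Cref{prop:generic_deterministic} involve only \emph{local} Hausdorff distances at the finite set of projections of $\Crit(\M)$. Since $\M$ is generic, $\Crit(\M)$ is finite (Condition 3 of \Cref{thm:genericity}) and each critical simplex $\sigma_\M(z_\M)$ has bounded cardinality (Condition 2). For any fixed $x \in \M$, the pointwise tail $\P(d(x, \A_n) > r) \leq \exp(-c n f_{\min} r^m)$ gives $d(x, \A_n) = O_{L^q}(n^{-1/m})$ \emph{without} the logarithmic factor; a union bound over the finite set $S := \bigcup_{z_\M \in \Crit(\M)} \sigma_\M(z_\M)$ then yields $\eta_n := \max_{x \in S} d(x, \A_n) = O_{L^q}(n^{-1/m})$. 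Rerunning the computation of the proof of part 1 of \Cref{prop:generic_deterministic} with $\eta_n$ in place of $\eps_n$ produces $|d_{\A_n}(z_\A) - d_\M(z_\M)| = O_{L^q}(\eta_n^2) = O_{L^q}(n^{-2/m})$ for each matched pair of critical points. This yields the Region (2) bound, the bound for Region (3) points matched to $\dgm_i^{(3)}(\M)$, and the bound for unmatched Region (3) points (whose two coordinates both lie within $O_{L^q}(n^{-2/m})$ of the same critical value of $d_\M$, hence within $O_{L^q}(n^{-2/m})$ of the diagonal).

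The main obstacle is the cardinality bound $\#(\dgm_i^{(3)}(\A_n)) = O_{L^q}(1)$. A direct use of Corollary~1.7 of \cite{arnal2023critical} as in part 2 of \Cref{prop:generic_deterministic} fails, because for iid samples the effective ratio $\eps_n / \delta_n$ between global Hausdorff distance and minimum sample spacing is unbounded. Instead, one should localise around each $z_\M \in \Crit(\M)$: by Theorem~1.6 of \cite{arnal2023critical}, any critical point of $d_{\A_n}$ in Region (3) lies within $O(\eps_n)$ of some $z_\M$ and has one projection near each vertex of $\sigma_\M(z_\M)$. The non-osculation Condition~4 of \Cref{thm:genericity}, combined with tail estimates on the number of sample points in small local neighbourhoods (in the spirit of \Cref{lem:control_Kin}), should then bound in $L^q$ the number of distinct configurations of nearby sample points giving rise to such critical points. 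Since $\Crit(\M)$ is finite, a union bound completes the proof. This last step is the most delicate: it requires a genuinely probabilistic counterpart to the deterministic counting argument rather than a direct translation.
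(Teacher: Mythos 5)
Your proposal correctly identifies the central insight — that the improvement from $(\log n/n)^{2/m}$ to $n^{-2/m}$ comes from replacing the global Hausdorff distance $\eps_n$ by a quantity localized near the finitely many critical projections $\Pi_\M := \bigcup_{z'\in\Crit(\M)}\sigma_\M(z')$, exploiting that the distance to $\A_n$ from a \emph{fixed} point scales as $n^{-1/m}$ without logarithm. However, there is a genuine gap in the implementation: the quantity you define, $\eta_n := \max_{x\in S} d(x,\A_n)$ with $S=\Pi_\M$, is the local distance \emph{at the fixed points themselves}, but rerunning the argument of part~1 of \Cref{prop:generic_deterministic} (i.e.\ Theorem~1.6 of \cite{arnal2023critical}) requires control of the density of $\A_n$ on a \emph{neighborhood} of $\Pi_\M$, because the projections $\sigma_{\A_n}(z_\A)$ and the $\M$-projections $\pi_\M(z_\A)$ of the sample critical points land near, not exactly on, $\Pi_\M$. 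The relevant neighborhood radius is itself a random quantity depending on how far the sample critical projections stray. The paper resolves this by introducing the function $E(r) = \min(\sup\{d_{\A_n}(y): y\in\M\cap\Pi_\M^r\},\eps_0)$ (your $\eta_n$ is $E(0)$ in this notation) together with the \emph{self-referential} bound $\eta_n^{\text{paper}}\leq K_4\,E(\eta_n^{\text{paper}})$ (where $\eta_n^{\text{paper}}:=\sup\{d_{\Pi_\M}(\pi_\M(z)): z\in\Crit(\A_n),\,d_{\A_n}(z)\geq\tau(\M)/2\}$, a quantity quite different from yours); the tail of $\eta_n^{\text{paper}}$ is then controlled by a bootstrapping/union-bound argument along a geometric sequence $t_k = a^k t$. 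This fixed-point structure is essential and cannot be replaced by a pointwise estimate at $\Pi_\M$ alone.

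The cardinality bound for Region~(3) is the other gap, which you acknowledge. The paper's route is cleaner than what you sketch: using \Cref{lem:bounded_contribution_to_homology} one bounds $\#(\dgm_i^{(3)}(\A_n))\leq N(\rho_n)^{i+2}$, where $N(r)$ counts sample points in $\Pi_\M^r$ and $\rho_n := \sup\{d_{\Pi_\M}(x): x\in\Pi_{\A_n}\}$ is shown (again via the bootstrap) to satisfy $n\rho_n^m\one\{\eps_n\leq\eps_0\}$ subexponential; then $N(t)$ is binomial for fixed $t$, and a conditioning argument with Bernstein's inequality closes the estimate. Your appeal to \Cref{lem:control_Kin} is slightly off target, since that lemma counts sample points near a \emph{sample} point $X_j$ using the partition machinery, whereas here one needs counts near the \emph{deterministic} finite set $\Pi_\M$ at a \emph{random} radius $\rho_n$. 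The genericity Condition~4 (non-osculation) does enter, but indirectly, through the proof of Theorem~1.6 and Corollary~1.7 in \cite{arnal2023critical}, rather than via a direct packing argument as your sketch suggests.
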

We remark that the same bounds can be obtained almost surely (e.g. ``a.s. there exists $C>0$ such that $\max_{u\in \dgm_i^{(3)}(\A_n)} \|u-\gamma_n(u)\|_\infty \leq Cn^{-2/m}$''), rather than in expectation, using similar arguments.

 \Cref{thm:random_regions} yields two distinct improvements upon  direct applications of \Cref{prop:regions} and \Cref{prop:generic_deterministic} to the random case. First, we obtain  bounds of order $n^{-2/m}$ instead of bounds of order $ (\log n/n)^{2/m}$. Second, the random sample $\A_n$ is in general only $\delta$-sparse for $\delta$ of order $n^{-2/m}$. Hence, $\A_n$ is $(\delta,\eps)$-dense in $\M$, but with a diverging ratio $\eps/\delta$. Therefore, \Cref{prop:generic_deterministic} cannot be applied to control  the total number of points in $\dgm_i(\A_n)$ in Region (3).

\begin{proof}
Let $\eps_n = d_H(\A_n,\M)$. Let  $\Pi_\M := \{ x\in \sigma_\M(z) : \ z\in \Crit(\M)\}$ be the (finite) set of projections of critical points $z\in \Crit(\M)$. 
The proof of \Cref{prop:generic_deterministic} relied on the use of Theorem 1.6 in \cite{arnal2023critical}. This theorem states roughly that both critical points $z\in \Crit(\A_n)$ far from $\M$ and their projections $x\in \sigma_{\A_n}(z)$ are stable with respect to the Hausdorff distance, meaning that every such point $z$ is at distance $O(\eps_n)$ from a critical point $z'\in \Crit(\M)$, with $x$ being at distance $O(\eps_n)$ from a point $x'\in \Pi_\M$. Thus, the number of critical points of $\A_n$ located close to a given $z'\in \Crit(\M)$ is crudely upper bounded by the number of subsets which can be formed by selecting elements in neighborhoods of size $O(\eps_n)$ around $x'\in \sigma_{\M}(z')$. We used the same idea to bound the cardinality of $\dgm_i^{(3)}(\A)$ in \Cref{prop:generic_deterministic}. 

In a random setting, the distance $\eps_n$ is of order $(\ln n/n)^{1/m}$, as suggested by \Cref{lem:bound_on_eps_random}, while the number of points found in a ball of radius $r$ is typically of order $nr^m$, yielding a logarithmic number of elements in a neighborhood of size $O(\eps_n)$ around a given point $x\in \M$. Hence, our earlier strategy is not tight enough to bound in expectation the cardinality of $\dgm_i^{(3)}(\A)$ by a constant.

We improve upon this strategy with the following  intuition: the maximal distance between a point $x$ of $\M$ and the point cloud $\A_n$ does not really matter in \cite[Theorem 1.6]{arnal2023critical}, but only the density of the point cloud $\A_n$ \textit{around} the (finitely many) projections $x\in \Pi_\M$.  Although $\eps_n$ is of order $(\ln n/n)^{1/m}$, the distance between a \textit{fixed} point $x\in \M$ and $\A_n$ is known to be of order $n^{-1/m}$ (see \eqref{eq:subexp_distance}).  
 This remark  explains how we can intuitively replace neighborhoods of radii $(\ln n/n)^{1/m}$ by radii of size $n^{-1/m}$ in the previous arguments.

We now make these ideas rigorous. 
Let $\Crit_>(\A_n)=\{z\in \Crit(\A_n): d_{\A_n}(z)\geq \tau(\M)/2 \}$ denote the set of critical points of $\A_n$ ``far'' from $\M$ and let  $ \Pi_{\A_n}:=  \{ x\in \sigma_{\A_n}(z) : \ z\in \Crit_>(\A_n)\}$ be the corresponding set of projections.  We let $\eps_0$ be a small constant to be fixed later. 
Theorem 1.6 from \cite{arnal2023critical} states that, thanks to the genericity of $\M$, there exist $ K_1, K_2>0$ such that for $\eps_0$ small enough, if $\eps_n<\eps_0$:
\begin{itemize}
    \item there exists a map $\phi:\Crit_>(\A_n) \rightarrow \Crit(\M)$ 
    such that $d(z,\phi(z)) \leq K_1 \eps_n$ for all $z\in\Crit_>(\A_n) $, and
    \item there exists a map $\psi : \Pi_{\A_n} \rightarrow \Pi_\M$
    such that $d(x,\psi(x))\leq K_2 \eps_n$ for all $x\in \Pi_{\A_n}$.
\end{itemize}
Furthermore, the map $\psi$ is such that for each $z\in \Crit_>(\A_n)$ and each $x' \in \sigma_\M(\phi(z)) $, there exists $x\in \sigma_{\A_n}(z)$ such that $\psi(x) = x'$ (if $\eps_0$ is chosen small enough).
Indeed, due to the genericity of $\M$, each critical point $z'\in \Crit(\M)$ belongs to the relative interior of $\sigma_\M(z')$ (as stated in \Cref{thm:genericity}).
In particular, it cannot belong to the convex hull of any strict subset of $\sigma_\M(z')$.
By continuity of the convex hull for the Hausdorff distance, as soon as $z\in \Crit_>(\A_n)$ is close enough to $\phi(z)$ and its projections $x\in \sigma_{\A_n}(z)$ are close enough to $\sigma_\M(\phi(z))$ (i.e. as soon as $\eps_n$ is small enough), the point $z$ cannot belong to the convex hull of any subset $S \subset \sigma_{\A_n}(z)$ that does not contain for each $x' \in \sigma_\M(\phi(z))$ at least one point $x$ such that $\psi(x) = x'$.
As $z$ must belong to the convex hull of $\sigma_{\A_n}(z)$, we conclude that each $x' \in \sigma_\M(\phi(z))$  has at least one preimage by $\psi$ in $\sigma_{\A_n}(z)$.
As $ \Crit(\M)$ is finite, taking the minimum distance such that this property holds for $z'$ over all $z' \in \Crit(\M) $ proves the claim.

We introduce the random function defined as
\begin{equation}
   \forall r\geq 0,\  E(r):= \min(\sup\{d_{\A_n}(y) : \ y \in  \M\cap \Pi_\M^r \}, \eps_0) =: \min(F(r),\eps_0),
\end{equation} 
where  $\Pi_\M^r$ is as usual the $r$-offset of $\Pi_\M$. This random variable measures the density of the point cloud $\A_n$ in neighborhoods of size $r$ of points $x'\in \Pi_\M$.

Let  
\begin{equation}
    \rho_n:= \sup \{d_{\Pi_\M}(x): x\in \Pi_{\A_n}\}
\end{equation}
 give (when $\eps_n <\eps_0$ for $\eps_0$ small enough) the largest distance between a projection $x\in \Pi_{\A_n}$ and the corresponding projection $\psi(x)\in \Pi_\M$. We also define 
 \begin{equation}
      \eta_n := \sup\{d_{\Pi_\M}(x):\ x=\pi_\M(z),\ z\in \Crit_>(\A_n)\}.
 \end{equation}
 We require the following controls on the random variables $\rho_n$ and $\eta_n$.
 \begin{lemma}\label{eq:redoing_thm_1.6}
 There exist positive constants $\eps_0=\eps_0(\M)$, $K_3=K_3(\M)$, $K_4=K_4(\M)$ and $K_5=K_5(\M)$ such that
     if $\eps_n<\eps_0$, it holds that for any $z\in \Crit_>(\A_n)$
     \begin{align}
         &|d_{\A_n}(z) - d_\M(\phi(z))|\leq K_3E(\eta_n)^2, \label{eq:thm1.6_1}\\
         &\eta_n \leq K_4 E(\eta_n),\label{eq:thm1.6_2} \\
         &\rho_n \leq K_5 E(\eta_n).\label{eq:thm1.6_3}
     \end{align}
 \end{lemma}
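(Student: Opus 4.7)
The strategy is to retrace the proof of Theorem 1.6 in \cite{arnal2023critical}, replacing every invocation of the global Hausdorff bound $\eps_n$ by the local density estimate $E(\eta_n)$. The justification is simple: every time that proof uses $\eps_n$ to produce a sample point $\tilde x \in \A_n$ close to a projection $x' \in \sigma_\M(z')$ of some $z' \in \Crit(\M)$, the point $x'$ lies in $\Pi_\M$, hence in $\M \cap \Pi_\M^{\eta_n}$. By definition of $F$ we therefore have $d_{\A_n}(x') \leq F(\eta_n)$, and if $\eps_0$ is chosen small enough that $F(\eta_n) \leq \eps_0$ in the regime of interest, this equals $E(\eta_n)$. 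Thus we may always find $\tilde x \in \A_n$ with $\|\tilde x - x'\| \leq E(\eta_n)$.

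We begin by applying the coarse Theorem 1.6 in \cite{arnal2023critical} to produce the maps $\phi$ and $\psi$ together with the surjectivity property recalled in the preamble: for each $x' \in \sigma_\M(\phi(z))$ there exists $x \in \sigma_{\A_n}(z)$ with $\psi(x) = x'$. To prove \eqref{eq:thm1.6_3}, fix $z \in \Crit_>(\A_n)$ and $x' \in \sigma_\M(\phi(z))$, pick $\tilde x \in \A_n$ as above, and expand $\|\tilde x - z\|^2$ as in the proof of \Cref{lem:fred_keeps_saying_this}, using \eqref{eq:federer} at $x'$ and the fact that $z - x'$ is (almost) normal to $T_{x'}\M$. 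This yields $\|\tilde x - z\|^2 \leq d_\M(\phi(z))^2 + O(E(\eta_n)^2)$ after absorbing the cross term coming from $\|z - \phi(z)\| = O(\eps_n)$. The stability-of-projections argument of \cite{arnal2023critical}, powered by the non-degeneracy of the simplex $\sigma_\M(\phi(z))$ (Condition 2 of \Cref{thm:genericity}), then forces the actual projection $x \in \sigma_{\A_n}(z)$ such that $\psi(x) = x'$ to satisfy $\|x - x'\| = O(E(\eta_n))$. Taking the supremum over $x$ and $z$ gives \eqref{eq:thm1.6_3}.

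Inequality \eqref{eq:thm1.6_1} is then obtained by running the smallest-enclosing-ball computation used in part 1 of the proof of \Cref{prop:generic_deterministic} almost verbatim, simply substituting the refined estimate $\|x - x'\| = O(E(\eta_n))$ for the coarse bound $O(\eps_n)$: the two-sided enclosing-ball inequalities \eqref{eq:born_sup_dA}-\eqref{eq:born_sup_dM} upgrade to $|d_{\A_n}(z)^2 - d_\M(\phi(z))^2| = O(E(\eta_n)^2)$, and dividing by $d_{\A_n}(z) + d_\M(\phi(z)) \geq \tau(\M)$ yields \eqref{eq:thm1.6_1}. For \eqref{eq:thm1.6_2}, Condition 2 of \Cref{thm:genericity} places $\phi(z)$ in the relative interior of the non-degenerate simplex with vertices $\sigma_\M(\phi(z)) \subset \Pi_\M$: a perturbation argument then shows that any $\M$-projection of $z$ must lie within $O(\|z - \phi(z)\|)$ of one of these vertices. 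Finally, $\|z - \phi(z)\|$ itself can be improved from $O(\eps_n)$ to $O(E(\eta_n))$ by rerunning the Federer-style computation of the previous paragraph in the reverse direction; substituting this into the perturbation estimate gives \eqref{eq:thm1.6_2}.

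The main obstacle will lie in the second step, namely carefully checking that the stability-of-projections argument in \cite[Theorem 1.6]{arnal2023critical} — where non-degeneracy of $\sigma_\M(\phi(z))$ is used to rule out $\sigma_{\A_n}(z)$ clustering near a strict subset of $\sigma_\M(\phi(z))$ — still goes through uniformly with $\eps_n$ replaced by $E(\eta_n)$. A secondary subtlety is that \eqref{eq:thm1.6_2} is implicit, with $\eta_n$ appearing on both sides; at this stage it only provides a fixed-point relation, and extracting a quantitative bound on $\eta_n$ must wait for the probabilistic control of $E$ supplied in the proof of \Cref{thm:random_regions}.
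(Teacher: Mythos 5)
Your overall strategy — rereading the proof of Theorem 1.6 in \cite{arnal2023critical} while substituting the global Hausdorff bound $\eps_n$ by the local quantity $E(\eta_n)$ — is the right one and matches the paper. However, the internal ordering of your argument creates a genuine gap.

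The problem is that you prove \eqref{eq:thm1.6_3} and \eqref{eq:thm1.6_1} \emph{before} upgrading the bound on $\|z-\phi(z)\|$ from $O(\eps_n)$ to $O(E(\eta_n))$, and your expansion of $\|\tilde x - z\|^2$ uses the weak bound. Concretely, writing $x'-z = (x'-\phi(z)) + (\phi(z)-z)$ in the cross term, only the first piece benefits from the normality/Federer control; the second piece contributes a term of order $\|\tilde x - x'\|\cdot\|\phi(z)-z\| = O(E(\eta_n)\,\eps_n)$, which is \emph{not} $O(E(\eta_n)^2)$: in the random regime $\eps_n$ is of order $(\log n/n)^{1/m}$ while $E(\eta_n)$ is of order $n^{-1/m}$, so $\eps_n \gg E(\eta_n)$. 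Thus the ``absorbing'' step you invoke fails, and the claimed inequality $\|\tilde x - z\|^2 \leq d_\M(\phi(z))^2 + O(E(\eta_n)^2)$ does not follow from the crude $\|z-\phi(z)\|=O(\eps_n)$.

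The paper avoids this by proving \eqref{eq:thm1.6_2} first. It reruns Lemma 5.1 of \cite{arnal2023critical} (with $r = \tau(\M)/2$, $R = \diam(\M)$) with $\eps_n$ replaced by $E(\eta_n)$, producing a $\mu$-critical point $z'$ of $\M$ at distance $\leq E(\eta_n)$ from $z$ with $\mu \leq L_1 E(\eta_n)$; then Condition~5 of the Genericity Theorem (quantitative stability of critical points) places $z'$ at distance $L_2\mu$ from an actual critical point, which for $\eps_0$ small enough must be $\phi(z)$. This directly gives $\|z-\phi(z)\| \leq (1+L_1 L_2) E(\eta_n)$ \emph{before} any Federer expansion, and then the Lipschitz property of $\pi_\M$ near $\phi(z)$ (the argument at the bottom of p.~19 of \cite{arnal2023critical}) yields \eqref{eq:thm1.6_2}. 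Your proposed alternative route to the improved $\|z-\phi(z)\|$ bound — ``rerunning the Federer-style computation of the previous paragraph in the reverse direction'' — is too vague: that computation bounds $|d_{\A_n}(z) - d_\M(\phi(z))|$, which is not the same as bounding the distance $\|z-\phi(z)\|$ between the two points. A secondary point: you attribute the stability of projections entirely to Condition~2 (non-degenerate simplex in the relative interior), whereas the paper invokes Condition~5 and the Lipschitz property of the projection (which in turn relies on the non-osculating Condition~4); Condition~2 by itself does not obviously control how far $\pi_\M(z)$ can drift when $z$ is perturbed away from $\phi(z)$.
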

\begin{proof}
    The proof mostly follows from a careful read of the proof of Theorem 1.6 in \cite{arnal2023critical}. First, remark that in the proof of Lemma 5.1 in \cite{arnal2023critical} (applied with $r=\tau(\M)/2$, $R=\diam(\M)$), the Hausdorff distance $\eps =d_H(\A_n,\M)$ can be replaced by $E(\eta_n)$.
    Hence, if $z\in \Crit_>(\A_n)$, there exists a $\mu$-critical point $z'$ of $\M$ at distance less than $E(\eta_n)$, with $\mu\leq L_1(\M)E(\eta_n)$. By genericity, for a choice of $\eps_0$ small enough, this point $z'$ is at distance $L_2(\M)\mu$ from a critical point $z_0\in \Crit(\M)$.  For $\eps_0$ small enough, this point $z_0$ is necessarily equal to $\phi(z)$, with $\| z-\phi(z)\|\leq (1+L_2L_1) E(\eta_n)$. Due to the Lipschitz property of the projection onto $\M$ around $z_0$  (see the arguments found at the bottom of p. 19 in \cite{arnal2023critical}), any point $x \in \pi_\M(z)$ is such that $\|x-x'\|\leq L_3(\M)\|z-\phi(z)\|$ for some $x'\in \sigma_\M(\phi(z))$. By taking the supremum over all such points $x$, we obtain that
    \[ \eta_n \leq L_3(1+L_2L_1)E(\eta_n), \]
    proving \eqref{eq:thm1.6_2}.

    One can also check that $\eps_n$ can be replaced by $E(\eta_n)$ in the end of the proof of Theorem 1.6 in \cite{arnal2023critical}, so that \eqref{eq:thm1.6_3} holds.

     The proof of \eqref{eq:thm1.6_1} is essentially the same as that of the first point of \Cref{prop:generic_deterministic}--we omit it. 
\end{proof}

Consider a (random) optimal matching $\gamma_n : \dgm_i(\A_n) \cup \partial \Omega \rightarrow \dgm_i(\M) \cup \partial \Omega $ such that $\max_{u\in \dgm_i^{(2)}(\A_n)} |u_2-\gamma_n(u)_2| \leq C \eps^2_n$ and $\max_{u\in \dgm_i^{(3)}(\A_n)} \|u-\gamma_n(u)\|_\infty \leq C \eps^2_n$ whose existence is guaranteed by \Cref{prop:regions} (for some $C = C(\M)>0$). For $r\geq 0$, we let $N(r)$ be the number of points of $\A_n$ found in $\Pi_\M^r$ (i.e. at distance less than $r$ from a point in $\Pi_\M$). 

\begin{lemma}\label{lem:smarter_random_bottleneck}
    There exists $\eps_0=\eps_0(\M)$  such that if $\eps_n\leq \eps_0$, then 
    \begin{align*}
         &\max_{u\in \dgm_i^{(2)}(\A_n)} |u_2-\gamma_n(u)_2| \leq K_3 E(\eta_n)^2,\\
         &\max_{u\in \dgm_i^{(3)}(\A_n)} \|u-\gamma_n(u)\|_\infty \leq K_3 E(\eta_n)^2.
    \end{align*}
    Furthermore, $\#(\dgm_i^{(3)}(\A_n))$ is smaller than $N(\rho_n)^{i+2}$.
\end{lemma}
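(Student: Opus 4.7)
The plan is to mirror the structure of the proof of \Cref{prop:regions}, substituting the pointwise Hausdorff-based bound of \Cref{lem:fred_keeps_saying_this} by the sharper critical-point-based bound \eqref{eq:thm1.6_1} from \Cref{eq:redoing_thm_1.6}. Concretely, I first invoke \Cref{prop:regions} with $\eps = \eps_n$ (valid once $\eps_0 < \tau(\M)/4$), obtaining the partition into the three regions and an optimal matching $\gamma_n^0$ with $\|u - \gamma_n^0(u)\|_\infty \leq C \eps_n^2$ on regions (2) and (3). The task is then to upgrade this error bound from $O(\eps_n^2)$ to $K_3 E(\eta_n)^2$ by exploiting the critical-point correspondence $\phi$.

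By the Isotopy Lemma for distance functions, any nonzero coordinate of a point $u \in \dgm_i^{(2)}(\A_n) \cup \dgm_i^{(3)}(\A_n)$ lying in $[\tau(\M)/2, R(\M)]$ is a critical value of $d_{\A_n}$ of the form $d_{\A_n}(z)$ for some $z \in \Crit_>(\A_n)$. Bound \eqref{eq:thm1.6_1} places it within $K_3 E(\eta_n)^2$ of $d_\M(\phi(z))$, which is itself a coordinate of some point of $\dgm_i(\M) \cup \dgm_{i-1}(\M)$. Since $\Crit(\M)$ is finite by genericity, there is a minimal gap $\delta_0 = \delta_0(\M) > 0$ between distinct critical values of $d_\M$. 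Shrinking $\eps_0$ so that $\max(C\eps_0^2, K_3\eps_0^2) < \delta_0/4$, and noting that $E(\eta_n) \leq \eps_0$ by construction, the matching $\gamma_n^0$ must then send each such $u$ to the unique point of $\dgm_i(\M)$ whose coordinates are the critical values of $d_\M$ nearest to $u_1, u_2$ (or to $\partial \Omega$ if no such point is available). Either case yields the claimed bounds.

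The main obstacle is producing this refinement cleanly while preserving the optimality of the matching, which I plan to handle by an adjustment analogous to the $\tilde \gamma$ construction at the end of the proof of \Cref{prop:regions}: compare truncations of $d_{\A_n}$ and $d_\M$ using only their values at the critical points, for which \eqref{eq:thm1.6_1} does apply, rather than demanding the global pointwise bound which only holds with constant $\eps_n^2$. For the cardinality bound, I apply \Cref{lem:bounded_contribution_to_homology} to the interval $[\tau(\M)/2, \infty)$ and the point cloud $\A_n$: every point of $\dgm_i^{(3)}(\A_n)$ has both coordinates in this interval, so their count is bounded by the number of $(i{+}1)$-simplices that can arise as projections of points in $\Crit_>(\A_n)$. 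Such projections lie in $\Pi_{\A_n}$, which by \eqref{eq:thm1.6_3} is contained in $\Pi_\M^{\rho_n}$; hence all admissible vertices belong to $\A_n \cap \Pi_\M^{\rho_n}$, a set of cardinality at most $N(\rho_n)$, yielding at most $\binom{N(\rho_n)}{i+2} \leq N(\rho_n)^{i+2}$ contributions.
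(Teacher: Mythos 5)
Your core argument matches the paper's: the coordinates of points in Regions (2)--(3) are critical values of $d_{\A_n}$ coming from $\Crit_>(\A_n)$, the bound \eqref{eq:thm1.6_1} places each within $K_3 E(\eta_n)^2$ of a critical value of $d_\M$, and once $\eps_0$ is taken small enough relative to the minimal gap $\delta_0$ between critical values of $d_\M$, the $O(\eps_n^2)$ matching $\gamma_n$ from \Cref{prop:regions} can only pair $u_j$ with the nearest critical value of $d_\M$ (or send $u$ to $\partial\Omega$, in which case both $u_1,u_2$ are $K_3E(\eta_n)^2$-close to the \emph{same} critical value by the separation condition, so $(u_2-u_1)/2 \le K_3E(\eta_n)^2$). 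The cardinality bound via \Cref{lem:bounded_contribution_to_homology} with the single block $\Crit_>(\A_n)$ and $\Pi_{\A_n}\subset \A_n\cap\Pi_\M^{\rho_n}$ is also exactly what the paper does (note that $\Pi_{\A_n}\subset\Pi_\M^{\rho_n}$ follows directly from the definition of $\rho_n$, not from \eqref{eq:thm1.6_3}).

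However, the ``main obstacle'' paragraph identifies a problem that does not exist and proposes a fix that would not work. There is no need to adjust the matching or to ``preserve optimality'': the paper simply fixes the optimal matching $\gamma_n$ supplied by \Cref{prop:regions} and shows that this very matching already satisfies the sharper $K_3E(\eta_n)^2$ bound, precisely by the critical-value identification argument you gave in your second paragraph. The proposed alternative of ``comparing truncations of $d_{\A_n}$ and $d_\M$ using only their values at the critical points'' is not a valid route: the Bottleneck Stability argument behind the $\tilde\gamma$ construction in \Cref{prop:regions} requires a genuine $\|\cdot\|_\infty$ bound between the truncated functions, and such a global bound is only available at the $O(\eps_n^2)$ scale (indeed, your own proof hinges on the fact that the sharper $E(\eta_n)^2$ estimate is available \emph{only} at critical points). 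Fortunately, that paragraph is superfluous---if you drop it and simply flesh out the two cases ($\gamma_n(u)\in\dgm_i(\M)$ vs.\ $\gamma_n(u)\in\partial\Omega$), what remains is the paper's proof.
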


\begin{proof}
Let us assume from now on that $\eps_0$ is small enough that $2(K_3+C)\eps_0^2$ is smaller than the smallest difference between two distinct critical values of $d_\M$.

Consider a point $(u_1,u_2) \in \dgm_i^{(3)}(\A_n)$ that is mapped by $\gamma_n$ to the diagonal. The coordinates $u_1$ and $u_2$ differ by at most $C\eps_n^2$, hence they are critical values of $d_{\A_n}$ that correspond to critical points that are mapped by $\phi$ to two critical points of $\M$ that have the same critical value (due to $2(K_3+C)\eps^2$ being smaller than the smallest difference between two distinct critical values of $d_\M$). 
But as stated in Equation \eqref{eq:thm1.6_1} of \Cref{eq:redoing_thm_1.6}, these critical values must then be $ K_3  E(\eta_n)^2 $-close to that of the two critical points of $\M$, hence $\|u-\gamma_n(u)\|_\infty = d(u,\partial \Omega) = (u_2-u_1)/2 \leq K_3  E(\eta_n)^2 $, as desired.

Likewise, consider a point $(u_1,u_2)$ of $\dgm_i^{(3)}(\A_n)$ that is mapped by $\gamma_n$ to a point $(v_1,v_2)$ of $\dgm_i^{(3)}(\M)$. The birth coordinates $u_1$ and $v_1$ differ by at most $C\eps^2$, hence the associated critical values must correspond to a critical point $z$ of $\A_n$ and a critical point $z'$ of $\M$ such that $\phi(z)$ has the same filtration value as $z'$ (again due to $2(K_3+C)\eps^2$ being smaller than the smallest difference between two distinct critical values of $d_\M$).
But as above, we have $|u_1-v_1| = |d_{\A_n}(z) - d_\M(z')| = |d_{\A_n}(z) - d_\M(\phi(z))|\leq K_3  E(\eta_n)^2$. As the same applies to $u_2$ and $v_2$, we find that $ \|u-\gamma_n(u)\|_\infty \leq K_3 E E(\eta_n)^2$ and $\max_{u\in \dgm_i^{(3)}(\A_n)} \|u-\gamma_n(u)\|_\infty \leq K_3  E(\eta_n)^2$.
The same reasoning shows that $\max_{u\in \dgm_i^{(2)}(\A_n)} |u_2-\gamma_n(u)_2| \leq K_3  E(\eta_n)^2$.

Finally, and by definition, $\Pi_{\A_n}=  \{ x\in \sigma_{\A_n}(z) : \ z\in \Crit(\A_n),\ d_{\A_n}(z)\geq \tau(\M)/2 \}$ is included in $\A_n \cap \Pi_\M^{\rho_n}$, hence $\#(\Pi_{\A_n}) \leq N(\rho_n)$.
\Cref{lem:bounded_contribution_to_homology} applied to $\A_n$ and the interval $[\tau(\M)/2,+\infty)$ then yields that the number of points in $\dgm_i^{(3)}(\A_n)$ is smaller than $N(\rho_n)^{i+2}$.
\end{proof}

\begin{lemma}\label{lem:bounding_in_Lq}
  Let $\eps_0$ be the parameter defined in \Cref{lem:smarter_random_bottleneck}. It holds that for all $q\geq 1$, $E(\eta_n)\one\{\eps_n\leq \eps_0\}=O_{L^q}(n^{-1/m})$ and $N(\rho_n)\one\{\eps_n\leq \eps_0\}=O_{L^q}(1)$.
\end{lemma}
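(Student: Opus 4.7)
The plan is to reduce both bounds to control of the single quantity $D := \max_{p \in \Pi_\M} d_{\A_n}(p)$, the maximum nearest-neighbor distance to the (finite) projection set. First, I would show $D = O_{L^q}(n^{-1/m})$ with exponential tails: since $\Pi_\M$ is finite by genericity of $\M$ (\Cref{prop:pd_generic_submanifolds}), and for each $p \in \Pi_\M$ the tangent-space volume lower bound gives $P(B(p,t) \cap \M) \geq c f_{\min} t^m$ for $t$ small (as exploited in the proof of \Cref{lem:tail_Rin}), a union bound over $\Pi_\M$ yields $P(D > t) \leq K\exp(-c n f_{\min} t^m)$.

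Second, I would transfer this tail to $E(\eta_n)$ via the self-referential bound. The $1$-Lipschitz property of $d_{\A_n}$ immediately gives $F(r) \leq D + r$ for every $r$, since any $y \in \M \cap \Pi_\M^r$ satisfies $d_{\A_n}(y) \leq d_{\A_n}(p) + \|y - p\| \leq D + r$ for $p \in \Pi_\M$ the closest point. Combined with $\eta_n \leq K_4 E(\eta_n)$ from \Cref{eq:redoing_thm_1.6}, on the event $\{\eps_n \leq \eps_0\}$ this yields $\eta_n \leq K_4(D + \eta_n)$. Bootstrapping from the trivial starting bound $\eta_n \leq K_4 \eps_0$ (valid because $E(\eta_n) \leq \eps_0$) produces the geometric series $\eta_n \leq K_4 D/(1 - K_4)$, provided the constants are arranged so that $K_4 < 1$. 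Consequently both $E(\eta_n) \leq D + \eta_n$ and $\rho_n \leq K_5 E(\eta_n)$ are $O_{L^q}(n^{-1/m})$ and inherit the exponential tail of $D$.

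Third, for $N(\rho_n)$ I would use the finiteness of $\Pi_\M$ to write $N(\rho_n) \leq \sum_{p \in \Pi_\M} \#(\A_n \cap B(p, \rho_n))$ and control the upper tail by splitting at a $t$-dependent deterministic threshold $t_0 = t_0(t)$:
\[ P(N(\rho_n) > t) \leq P(\rho_n > t_0) + P(N(t_0) > t). \]
Setting $n t_0^m = \lambda t$ for a sufficiently small constant $\lambda > 0$ makes the first term $\leq K \exp(-c \lambda t)$ (by the exponential tail of $\rho_n$ from Paragraph~2), while the second is an upper tail for a Binomial random variable with mean at most $K' \lambda t \leq t/e^2$, which Chernoff's inequality bounds by $e^{-t}$. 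Combining, $P(N(\rho_n) > t) \leq C e^{-c'' t}$, and a layer-cake integration yields $E[N(\rho_n)^q] = O(1)$ for every $q \geq 1$.

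The main technical subtlety is ensuring the bootstrap in the second paragraph actually closes, i.e.\ that the constant $K_4$ in \Cref{eq:redoing_thm_1.6} can be taken strictly less than $1$. This is not stated explicitly and may require re-examining the stability constants of \cite{arnal2023critical} (shrinking $\eps_0$ if needed), or replacing the naive iteration by a direct argument bounding $\|z - \phi(z)\|$ by $O(D)$ via Hausdorff continuity of the smallest-enclosing-ball map applied to $\sigma_{\A_n}(z) \leftrightarrow \sigma_\M(\phi(z))$, each element $X \in \sigma_{\A_n}(z)$ being at distance $O(D)$ from its counterpart $x' \in \sigma_\M(\phi(z))$ by local density of $\A_n$ around $\Pi_\M$.
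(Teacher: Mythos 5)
Your decomposition is based on a genuinely useful observation: the $1$-Lipschitzness of $d_{\A_n}$ yields $F(r)\leq D+r$ where $D:=\max_{p\in\Pi_\M}d_{\A_n}(p)$, and $D$ has an easy subexponential tail of the right scale. This is also implicitly used in the paper (where it appears as the $1$-Lipschitz property of $E$). However, the heart of your argument is the bootstrap $\eta_n\leq K_4(D+\eta_n)\Rightarrow\eta_n\leq K_4D/(1-K_4)$, which requires $K_4<1$. This is a genuine gap, not just a technicality: inspecting the proof of \Cref{eq:redoing_thm_1.6}, $K_4=L_3(1+L_2L_1)$ where $L_1,L_2,L_3$ are Lipschitz/stability constants of $\M$ that cannot be shrunk by taking $\eps_0$ smaller (reducing $\eps_0$ only enlarges the domain of validity of the estimates, it does not improve the constants). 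There is no reason for $K_4<1$; indeed for a manifold with moderate curvature near its critical points one would expect $L_3>1$.

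Your proposed fallback (a ``direct argument'' bounding $\|z-\phi(z)\|$ by $O(D)$) is circular as stated. Knowing that $\A_n$ is dense near $\Pi_\M$ tells you there exist points of $\A_n$ within distance $O(D)$ of each $x'\in\sigma_\M(\phi(z))$; it does not tell you that the projections $\sigma_{\A_n}(z)$ of the critical point $z$ are these points. To conclude that $\sigma_{\A_n}(z)\subset\Pi_\M^{O(D)}$ one already needs $z$ close to $\phi(z)$, i.e.\ a bound on $\eta_n$ in terms of the local density of $\A_n$ in a neighborhood of radius $\approx\eta_n$---which is the self-referential bound you were trying to avoid.

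The paper resolves this without requiring $K_4<1$: it covers $(t,\infty)$ by a geometric sequence $t_k=a^kt$, uses the $1$-Lipschitzness of $E$ to show that if $\eta_n\in[t_k,t_{k+1}]$ then $E(t_k)\geq\lambda t_k$ or $E(t_{k+1})\geq\lambda t_{k+1}$ for a suitable $\lambda=\lambda(K_4,a)>0$ (here $\lambda$ can be taken much smaller than $1/K_4$, so no contraction is needed), and then bounds $\P(r\leq\lambda E(r))\leq C_\lambda e^{-c_\lambda nr^m}$ by a covering argument. Summing the geometric series yields the subexponential tail for $n\eta_n^m\one\{\eps_n\leq\eps_0\}$. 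Your treatment of $N(\rho_n)$ (splitting at a deterministic threshold, binomial Chernoff bound) is essentially the same as the paper's and would be fine once the tail for $\rho_n$ is established.
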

Assume for a moment that \Cref{lem:bounding_in_Lq} holds. Then, we write (for a choice of $\eps_0$ small enough)
\begin{align}
    \max_{u\in \dgm_i^{(2)}(\A_n)} |u_2-\gamma_n(u)_2| &\leq K_3 E(\eta_n)^2\one\{\eps_n\leq \eps_0\} + R(\M)\one\{\eps_n>\eps_0\} \nonumber\\
    &\leq  K_3 E(\eta_n)^2\one\{\eps_n\leq \eps_0\} + R(\M)\one\{\eps_n>\eps_0\}. \label{eq:bound_by_two_subexp}
\end{align}
where we recall that $R(\M)$ is the radius of the smallest enclosing ball of $\M$. Because of \Cref{lem:bound_on_eps_random}, the random variable $\one\{\eps_n>\eps_0\}$ is a $O_{L^q}(n^{-2/m})$ for all $q\geq 1$. But then, because of \Cref{lem:bounding_in_Lq}, we obtain that the right-hand side in \eqref{eq:bound_by_two_subexp} is a $O_{L^q}(n^{-2/m})$ for all $q\geq 1$. Likewise, we obtain that $\max_{u\in \dgm_i^{(3)}(\A_n)} \|u-\gamma_n(u)\|_\infty = O_{L^q}(n^{-2/m})$ for all $q\geq 1$. At last, we have
\begin{align*}
    \#(\dgm_i^{(3)}(\A_n))\leq N(\rho_n)^{i+2}\one\{\eps_n\leq \eps_0\} + n^{i+2}\one\{\eps_n>\eps_0\}.
\end{align*}
The first term is a $O_{L^q}(1)$ for all $q\geq 1$ because of \Cref{lem:bounding_in_Lq}, while the second one is a $O_{L^q}(1)$ for all $q\geq 1$ because of \Cref{lem:bound_on_eps_random}. This concludes the proof of \Cref{thm:random_regions}. Let us now prove \Cref{lem:bounding_in_Lq}.
\end{proof}

\begin{proof}[Proof of \Cref{lem:bounding_in_Lq}]
Let $t>0$. 
The key observation to obtain \Cref{lem:bounding_in_Lq} is that \eqref{eq:thm1.6_2} implies that if $\eta_n>t$, then there exists $r>t$ with $r\leq K_4E(r)$. 

\begin{lemma}
For all $\lambda>0$, there exist positive constants $C_\lambda$, $c_\lambda$ (depending on $\lambda$, $\M$ and $f_{\min}$) such that for all $r>0$, $\P(r\leq \lambda E(r)) \leq C_\lambda\exp(-c_\lambda nr^{m})$.
\end{lemma}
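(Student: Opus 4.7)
The plan is to reduce the event $\{r \leq \lambda E(r)\}$ to the existence of an empty small ball in a well-chosen finite cover, and then apply a union bound together with a density lower bound and the standard inequality $(1-x)^n \leq e^{-nx}$.

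First I would dispose of the large-$r$ regime. Since by definition $E(r) \leq \eps_0$, we have $r \leq \lambda E(r) \leq \lambda \eps_0$, so if $r > \lambda \eps_0$ the event is empty and the bound is trivial upon choosing $C_\lambda$ large. Thus I may assume $r \leq \lambda \eps_0$, in which case $r \leq \lambda E(r)$ forces $F(r) \geq r/\lambda$, meaning there exists $y \in \M \cap \Pi_\M^r$ with $d_{\A_n}(y) \geq r/\lambda$.

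Next I would build the cover. Since $\Pi_\M$ is finite (by the genericity of $\M$, Theorem \ref{thm:genericity}), the set $\M \cap \Pi_\M^r$ is a finite union of pieces $\M \cap \overline B(x', r)$ for $x' \in \Pi_\M$. Using that for $r/(2\lambda) \leq \tau(\M)/4$ the orthogonal projection from $\M \cap \overline B(y', r/(2\lambda))$ to $T_{y'}\M$ is a diffeomorphism onto its image with Jacobian bounded above and below by constants depending only on $m$ (as used in the proof of Lemma \ref{lem:control_Kin}), a volume argument provides a covering of $\M \cap \Pi_\M^r$ by $N \leq N_\lambda := N_\lambda(\M)$ closed balls $\overline B(y_k, r/(2\lambda))$ with centers $y_k \in \M$. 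If $F(r) \geq r/\lambda$, then picking the $y_k$ closest to $y$ and applying the triangle inequality gives $d_{\A_n}(y_k) \geq r/\lambda - r/(2\lambda) = r/(2\lambda)$, so $\A_n \cap \overline B(y_k, r/(2\lambda)) \cap \M = \emptyset$ for at least one $k$.

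Finally I would bound the probability of emptiness. The same diffeomorphism estimate, together with $f \geq f_{\min}$, yields $P(\overline B(y_k, r/(2\lambda)) \cap \M) \geq c(m,\lambda) f_{\min} r^m$ for each $k$, as long as $r \leq \lambda \min(\eps_0, \tau(\M)/2)$ (which we may assume by shrinking $\eps_0$). A union bound then gives
\[
\P(r \leq \lambda E(r)) \leq \sum_{k=1}^{N_\lambda} \bigl(1 - c(m,\lambda) f_{\min} r^m\bigr)^n \leq N_\lambda \exp\bigl(- c(m,\lambda) f_{\min} n r^m\bigr),
\]
which is the desired inequality upon setting $C_\lambda = N_\lambda$ and $c_\lambda = c(m,\lambda) f_{\min}$. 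The only mildly delicate point is making sure all geometric constants (the covering number and the volume lower bound) genuinely depend only on $m$, $\M$ and $\lambda$; this is handled by restricting attention to radii $r/(2\lambda)$ below a fixed fraction of $\tau(\M)$, which costs nothing since we can absorb the complementary regime into the constant $C_\lambda$ via the trivial bound used in the first step.
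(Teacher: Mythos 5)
Your proof is correct and takes essentially the same approach as the paper: reduce to $F(r)\geq r/\lambda$, cover $\M\cap\Pi_\M^r$ by finitely many balls of radius $r/(2\lambda)$, argue that one of them must miss $\A_n$, and finish with a union bound plus the density lower bound and $(1-x)^n\leq e^{-nx}$. The only cosmetic difference is that the paper invokes Lemma~\ref{lem:control_balls} directly to bound $\P(d_{\A_n}(x_k)\geq r/(2\lambda))$ (which internally handles the regime $r/(2\lambda)>\tau(\M)/4$), whereas you re-derive the volume estimate inline and instead dispose of the large-$r$ regime explicitly at the start.
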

\begin{proof}
Remark that if $r\leq \lambda E(r)$, then $r\leq \lambda F(r)$. 
The set $\M\cap \Pi_\M^r$ is the union of a finite number of balls of radius $r$. Hence, it can be covered by $C_\lambda = C_\lambda(\M)$ open balls of radius $r/(2\lambda)$, with centers $x_1,\dots,x_{C_\lambda}\in \M$. Note that if all these balls intersect $\A_n$, then for all $y\in \M\cap \Pi_\M^r$, $d_{\A_n}(y)< r/\lambda$. Hence, if $\lambda F(r)\geq r$, then the intersection of one of these balls with $\A_n$ is empty. 
Hence, according to \Cref{lem:control_balls},
\[    \P(r\leq \lambda F(r)) \leq \sum_{k=1}^{C_\lambda} \P(d_{\A_n}(x_k)\geq r/(2\lambda))\leq  C_\lambda  \exp(-C(\M)f_{\min}n(r/\lambda)^m). \qedhere
\]
\end{proof}

Remark that the fonction $E$ is nondecreasing and $1$-Lipschitz continuous: we have for $r<s$, $E(r)\leq E(s)\leq E(r)+(r-s)$. Fix $t>0$ and consider the sequence $t_k = a^k t$ for some $a>1$ to fix. Assume that $\eta_n>t$ and that $\eps_n\leq \eps_0$. Then $\eta_n$ is between two values $t_k<t_{k+1}$. But then, according to \Cref{lem:smarter_random_bottleneck},
\begin{align*}
        &\frac{E(t_k)}{t_k} \geq \frac{E(\eta_n)- (\eta_n-t_k)}{t_k} \geq  \frac{1}{a}\frac{E(\eta_n)}{\eta_n}\frac{t_{k+1}}{t_k} - (a-1)\geq \frac{1}{K_4}-(a-1)\\
        &\frac{E(t_{k+1})}{t_{k+1}} \geq \frac{1}{a}\frac{E(\eta_n)}{t_k}\geq \frac 1a\frac{E(\eta_n)}{\eta_n} \geq \frac{1}{K_4a}.
\end{align*}
Choose $a>1$ such that $\frac{1}{K_4}-(a-1)>0$, and let 
\[ \lambda= \min \Big(\frac{1}{K_4}-(a-1),\frac{1}{K_4a}\Big)>0\] We have proven that if $\eps_n\leq \eps_0$ and $\eta_n>t$, then there exists $k\geq 0$ with $E(t_k)\geq \lambda t_k$.

Hence,
\begin{align*}
    \P(\eta_n>t, \eps_n\leq \eps_0)\leq C_\lambda\sum_{k\geq 0} \exp(-c_\lambda a^{km} n t^{m}).
\end{align*}
A standard comparison between this sum and an integral shows that this sum is at most of order 
\begin{equation}
    K_{6}(nt^{m})^{-1} \exp(-K_{7} nt^{m}).
\end{equation}
for two positive constants $K_{6}$, $K_{7}$ depending on $\M$ and $f_{\min}$. But when $nt^m\leq 1$, we can simply use the bound $\P(\eta_n>t, \eps_n\leq \eps_0)\leq 1$. Hence,
\begin{equation}
    \P(\eta_n>t, \eps_n\leq \eps_0)\leq \min(1,K_{6}(nt^{m})^{-1} \exp(-K_{7} nt^{m}))\leq K_{8}\exp(-K_{9} nt^{m}))
\end{equation}
for two other constants  $K_{8}$, $K_{9}$.

To summarize, we have shown that the random variable $n\eta_n^m\one\{\eps_n\leq \eps_0\}$ is subexponential, with subexponential norm depending only on $\M$ and $f_{\min}$, see e.g. \cite[Section 2.7]{vershynin2018high}. We will now simply say that a random variable is subexponential to indicate that it is subexponential with a norm depending only on $\M$ and $f_{\min}$.

\Cref{lem:control_balls} shows that the random variable $nd_{\A_n}(x)^m$ for a \textit{fixed} $x\in \M$ is also subexponential. Thus, so is $nE(0)^m=n\max_{x\in \Pi_\M}d_{\A_n}(x)^m$, as a maximum of a finite number of subexponential random variables.  
 As $nE(\eta_n)^m\leq n(\eta_n + E(0))^m\leq 2^{m-1}(n\eta_n^m+nE(0)^m)$, the random variable $nE(\eta_n)^m\one\{\eps_n\leq \eps_0\}$ is also subexponential. In particular, we have $E(\eta_n)\one\{\eps_n\leq \eps_0\} = O_{L^q}(n^{-1/m})$ for all $q\geq 1$.

 It remains to bound $N(\rho_n)$. First, because of \eqref{eq:thm1.6_3}, the random variable $n\rho_n^m\one\{\eps_n\leq \eps_0\}$ is subexponential. Also, for a fixed $t$, $N(t)$ follows a binomial distribution of parameter $n$ and $P(\Pi_\M^t)$. As long as $t\leq \tau(\M)/4$, a ball of radius $t$ is of mass smaller than $C_mf_{\max}t^m$ (see \Cref{lem:control_balls}). Let $0\leq k\leq n$ be an integer. For any $t\leq \tau(\M)/4$, we bound
\begin{align*}
    \P(N(\rho_n)\geq  k, \eps_n\leq \eps_0) &\leq \P(N(t)\geq k, \rho_n\leq t) + \P(\rho_n>t, \eps_n\leq \eps_0) \\
    &\leq \P(N(t)\geq k) + 2\exp(-K_{10}nt^m),
\end{align*}
where $K_{10}$ is proportional to the subexponential norm of $n\rho_n^m\one\{\eps_n\leq \eps_0\}$, see \cite[Section 2.7]{vershynin2018high}. 
Let $t = (k/n)^{1/m} \min(\tau(\M)/4, 1/(\#\Pi_\M\cdot 2C_m f_{\max})^{1/m})$. This choice of $t$ ensures that $t\leq \tau(\M)/4$ and that 
$\E[N(t)]\leq n(\#\Pi_\M) C_mf_{\max}t^m\leq k/2$. Then, by Bernstein's inequality \cite[Theorem 2.8.4]{vershynin2018high},
\begin{align*}
    \P(N(t)\geq k)&\leq \P(N(t)-\E[N(t)]\geq k/2) \leq \exp\p{-\frac{k^2/8}{n(\#\Pi_\M)C_mf_{\max}t^m+k/6}}\\
    &\leq \exp\p{-K_{11}k}
\end{align*}

for some constant $K_{11}$ depending on $m$, $\M$ and $f_{\max}$. 
We have proven that for all $k\geq 0$,
\begin{align*}
    \P(N(\rho_n)\geq k, \eps_n\leq \eps_0) &\leq 2\exp\p{-K_{11}k} + 2\exp(-K_{10}nt^m)\\
    &\leq 2\exp\p{-K_{11}k} + 2\exp(-K_{12}k)
\end{align*}
for some constant $K_{12}$ depending on $m$, $\M$, $f_{\min}$ and $f_{\max}$. 
Hence, the random variable $N(\rho_n)\one\{\eps_n\leq \eps_0\}$ is subexponential, with a subexponential norm depending on $\M$, $f_{\min}$ and $f_{\max}$. This implies in particular that $N(\rho_n)\one\{\eps_n\leq \eps_0\}=O_{L^q}(1)$ for all $q\geq 1$.
\end{proof}

\subsection{Consequences for the Wasserstein convergence of persistence diagrams}
As a simple consequence of \Cref{thm:law_large_numbers} and \Cref{thm:random_regions}, we obtain that for $i<m$, the $p$-Wasserstein convergence of $(\dgm_i(\A_n))$ to $\dgm_i(\M)$ holds if and only if $p>m$, as well as precise asymptotics for the total persistence of $(\dgm_i(\A_n))$.

\begin{restatable}{cor}{wassersteinconvergencecorollary}
\label{cor:Wasserstein_convergence}
Let $p\geq 1$ and let $0\leq i<d$ be an integer.  
    Under the same assumptions as in \Cref{thm:random_regions}, the following holds: 
    \begin{itemize}  \setlength\itemsep{0.1mm}
     \item If $p>m$, then $ \E[ \OT^p_p( \dgm_i(\A_n), \dgm_i(\M))] \to 0$ as $n\to \infty$.
        \item If $p=m$, 
        $ \E[ \OT^p_p( \dgm_i(\A_n),\dgm_i(\M))] \to \Pers_p(\mu_{\infty,i,m}) \Vol(\M)$ as $n\to \infty$, where $\Vol(\M)$ is the volume of $\M$.
        \item If $p<m$ and $i<m$, then $\E[ \OT_p^p(\dgm_i(\A_n), \dgm_i(\M))] \to +\infty$ as $n\to \infty$.
    \end{itemize}
   Furthermore, for all $\alpha>0$, $\Pers_\alpha(\dgm_i(\A_n))$ is equal to
   \begin{align*}
        \Pers_\alpha(\dgm_i(\M)) + n^{1-\frac \alpha m}\Pers_\alpha(\mu_{\infty,i,m}) \int_\M f(x)^{1-\frac \alpha m} \dd x 
      + o_{L^1}(n^{1-\frac \alpha m})+O_{L^1}\Big(\p{\frac{\log n}n}^{\frac 1 m}\Big).
   \end{align*}
\end{restatable}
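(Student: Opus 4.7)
The plan is to combine Theorem~\ref{thm:law_large_numbers} (for Region~(1)) with Theorem~\ref{thm:random_regions} (for Regions~(2)--(3)), via the decomposition $\dgm_i(\A_n) = \dgm_i^{(1)}(\A_n) \cup \dgm_i^{(2)}(\A_n) \cup \dgm_i^{(3)}(\A_n)$. The cornerstone is the exact scaling identity $\Pers_\alpha(\dgm_i^{(1)}(\A_n)) = n^{1-\alpha/m}\Pers_\alpha(\mu_{n,i})$; combined with the $L^1$ convergence $\Pers_\alpha(\mu_{n,i}) \to \Pers_\alpha(\mu_{f,i})$ of Theorem~\ref{thm:law_large_numbers} and the change of variables~\eqref{eq:change_of_variable} applied to $\phi = \pers^\alpha$ (extended beyond compactly supported test functions by truncation, since $\mu_{f,i} \in \cM_\alpha$), this yields
\[ \Pers_\alpha(\mu_{f,i}) = \Pers_\alpha(\mu_{\infty,i,m}) \int_\M f(x)^{1-\alpha/m}\dd x, \]
and hence $\Pers_\alpha(\dgm_i^{(1)}(\A_n)) = n^{1-\alpha/m}\Pers_\alpha(\mu_{\infty,i,m})\int_\M f(x)^{1-\alpha/m}\dd x + o_{L^1}(n^{1-\alpha/m})$.

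For the contribution of Regions~(2) and~(3), I will use the optimal matching $\gamma_n$ provided by Theorem~\ref{thm:random_regions}, which is surjective onto $\dgm_i(\M)$. Each $u \in \dgm_i^{(2)}(\A_n)$ satisfies $u_1 \leq \eps_n = O_{L^q}((\log n/n)^{1/m})$ (Lemma~\ref{lem:bound_on_eps_random}) and $|u_2 - \gamma_n(u)_2| = O_{L^q}(n^{-2/m})$, while $\pers(\gamma_n(u)) \geq \wfs(\M)/2$ is bounded away from $0$; a mean-value argument then gives $|\pers(u)^\alpha - \pers(\gamma_n(u))^\alpha| = O_{L^1}((\log n/n)^{1/m})$. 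An analogous and stronger bound of order $n^{-2/m}$ holds for each of the $O_{L^q}(1)$ points of $\dgm_i^{(3)}(\A_n)$. Summing,
\[ \Pers_\alpha(\dgm_i^{(2)}(\A_n) \cup \dgm_i^{(3)}(\A_n)) = \Pers_\alpha(\dgm_i(\M)) + O_{L^1}((\log n/n)^{1/m}), \]
which combined with the Region~(1) estimate yields the announced expansion of $\Pers_\alpha(\dgm_i(\A_n))$.

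For the Wasserstein statement, using $\gamma_n$ as a (not necessarily $\OT_p$-optimal) matching provides the upper bound
\[ \OT_p^p(\dgm_i(\A_n),\dgm_i(\M)) \leq \Pers_p(\dgm_i^{(1)}(\A_n)) + O_{L^1}((\log n/n)^{p/m}) + O_{L^1}(n^{-2p/m}). \]
When $p > m$, $\Pers_p(\dgm_i^{(1)}(\A_n)) = n^{1-p/m}\Pers_p(\mu_{n,i}) \to 0$ in $L^1$, so $\E[\OT_p^p]\to 0$. When $p < m$ and $i < m$, the limiting measure $\mu_{f,i}$ is nontrivial, so $\Pers_p(\mu_{f,i}) > 0$ and the same scaling identity shows $\E[\Pers_p(\dgm_i(\A_n))]\to +\infty$; Jensen's inequality applied to the triangle inequality $\OT_p(\dgm_i(\A_n),\dgm_i(\M)) \geq \Pers_p(\dgm_i(\A_n))^{1/p} - \Pers_p(\dgm_i(\M))^{1/p}$ then forces $\E[\OT_p^p] \to +\infty$.

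The delicate case is $p=m$. The upper bound above, noting that $\int_\M f(x)^{0}\dd x = \Vol(\M)$, gives $\limsup \E[\OT_m^m] \leq \Pers_m(\mu_{\infty,i,m})\Vol(\M)$. For the matching lower bound, I will use that any admissible transport plan can send at most mass $N := \#\dgm_i(\M)$ into $\dgm_i(\M)$, yielding
\[ \OT_m^m(\dgm_i(\A_n),\dgm_i(\M)) \geq \Pers_m(\dgm_i(\A_n)) - T_N(\dgm_i(\A_n)), \]
where $T_N(a)$ denotes the sum of $\pers^m$ over the $N$ largest-persistence points of $a$. Since for $\eps_n$ small every point of $\dgm_i^{(2)}(\A_n) \cup \dgm_i^{(3)}(\A_n)$ has persistence at least $(\tau(\M) - 2\eps_n)/2$, strictly larger than the maximum persistence $\eps_n + \eps_n^2/\tau(\M)$ in Region~(1), the top $N$ points are contained in $\dgm_i^{(2)}(\A_n) \cup \dgm_i^{(3)}(\A_n)$ augmented by at most $N$ points of Region~(1) of persistence $O_{L^q}(\eps_n)$; hence $T_N \leq \Pers_m(\dgm_i^{(2)}(\A_n) \cup \dgm_i^{(3)}(\A_n)) + O_{L^1}(\log n/n) = \Pers_m(\dgm_i(\M)) + o_{L^1}(1)$. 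Substituting the asymptotic of $\E[\Pers_m(\dgm_i(\A_n))]$ matches the upper bound. The main obstacle is precisely this $p=m$ lower bound: the naive triangle inequality gives only $(\Pers_m(\dgm_i(\A_n))^{1/m} - \Pers_m(\dgm_i(\M))^{1/m})^m$, which is strictly smaller than the actual limit, and exploiting the finiteness of $\dgm_i(\M)$ to cap the matching savings is what makes the bound sharp.
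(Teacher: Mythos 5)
Your proof is correct in substance, and the overall decomposition (Region~(1) via Theorem~\ref{thm:law_large_numbers}, Regions~(2)--(3) via Theorem~\ref{thm:random_regions}) matches the paper's. The genuine divergence is in how you obtain the lower bounds for $p\leq m$. The paper exploits a simple structural fact about the $\OT_p$-optimal transport plan: on the event $E=\{\eps_n<\tau(\M)/2\}$, every point of $\dgm_i^{(1)}(\A_n)$ has both coordinates $O(\eps_n)$, while every point of $\dgm_i(\M)$ has death coordinate $\geq\wfs(\M)\geq\tau(\M)$, so the diagonal is strictly cheaper than any point of $\dgm_i(\M)$; hence the cost paid by any admissible plan on Region~(1) alone is already $\geq\Pers_p(\dgm_i^{(1)}(\A_n))$, which gives $\OT_p^p\geq\Pers_p(\dgm_i^{(1)}(\A_n))$ directly. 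This single inequality settles both the $p<m$ divergence and the $p=m$ lower bound, with the $E^c$ event handled by the crude polynomial bound on $\#\dgm_i(\A_n)$ together with \Cref{lem:bound_on_eps_random}. Your $T_N$-capping argument for $p=m$ is a legitimate and rather elegant alternative -- capping the possible savings by the total persistence of the top $N=\#\dgm_i(\M)$ points is a clean way to turn the finiteness of $\dgm_i(\M)$ into a quantitative lower bound -- but it is more work than needed, and your $p<m$ route (triangle inequality plus Jensen) needs a further step to convert $\E[\Pers_p(\dgm_i(\A_n))]\to\infty$ into $\E[\Pers_p(\dgm_i(\A_n))^{1/p}]\to\infty$ (e.g.\ via convergence in probability of $\Pers_p(\mu_{n,i})$), which your sketch glosses over. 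Two minor slips: (i) the claim that every point of $\dgm_i^{(2)}(\A_n)\cup\dgm_i^{(3)}(\A_n)$ has persistence $\geq(\tau(\M)-2\eps_n)/2$ is false for Region~(3), whose points have both coordinates $\geq\tau(\M)-\eps_n^2/\tau(\M)$ but may be arbitrarily close to the diagonal; your subsequent bound $T_N\leq\Pers_m(\dgm_i^{(2)}\cup\dgm_i^{(3)})+N\cdot O(\eps_n^m)$ still holds, just for a slightly different reason (Region~(1) points, not Region~(3) points, are the only ones that can ``leak'' into the top~$N$, each contributing $O(\eps_n^m)$). (ii) You should make explicit that the estimates from Theorems~\ref{prop:regions} and~\ref{thm:random_regions} only apply on $E$; the complementary event contributes $o(1)$ via the superpolynomial tail of $\P(E^c)$.
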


As noted earlier, both $\dgm_i(\A_n)$ and $\dgm_i(\M)$ are trivial if $i\geq d$.

This corollary, whose proof can be found in the Appendix,  gives a precise answer  to the questions raised in the introduction. First, when $\A_n$ is a random subset of a $m$-dimensional generic manifold in $\R^d$, the $\alpha$-total persistence of $\dgm_i(\A_n)$ is not only bounded for $\alpha>d$ (as was shown by Cohen-Steiner \& al. \cite{Cohen-Steiner_Edelsbrunner_Harer_Mileyko_2010}), but for all $\alpha \geq m$. Moreover, the sequence $\dgm_i(\A_n)$ converges for the $\OT_p$ distance if $p> m$. A curious phenomenon can be observed in the case $p=m$: the sequence does not converge to $\dgm_i(\M)$ as one would expect, but its distance to the power $p$ to $\dgm_i(\M)$ converges to some constant--in that case, the cost to the power $p$ of matching all the points in Region (1) to the diagonal $\partial \Omega$ neither converges to $0$ nor diverges, but is asymptotically equal to this constant.

Using these asymptotics for the total persistence, we obtain regularity guarantees for a large family of feature maps, called \textit{linear feature maps}, which includes feature maps introduced in \cite{chazal2014stochastic, adams2017persistence, reininghaus2015stable, chen2015statistical, kusano2016persistence,  som2018perturbation, carriere2020perslay}. Let $(V,\|\cdot\|)$ be a normed vector space, and let $\phi:\Omega\to V$ be a continuous bounded map. For $\alpha\geq 0$, the linear feature map $\Phi_\alpha$ associated to $\phi$ and defined on the space $\cD_f$ of PDs having a finite number of points is defined for all $a\in \cD_f$  by $\Phi_\alpha(a)=\sum_{u\in a}\pers(u)^\alpha \phi(u) \in V$.

\begin{restatable}{cor}{featuremapsregularitycorollary}
\label{cor:feature_maps_regularity}
 Let $\alpha\geq 1$ and let $0\leq i<d$ be an integer.  
    Under the same assumptions as in \Cref{thm:random_regions}, it holds that $\Phi_\alpha(\dgm_i(\A_n))$ converges in probability to $\Phi_\alpha(\dgm_i(\M))$ whenever $\alpha>m$.
\end{restatable}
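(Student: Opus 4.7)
The plan is to fix $\alpha > m$, take the optimal bottleneck matching $\gamma_n : \dgm_i(\A_n) \cup \partial\Omega \to \dgm_i(\M) \cup \partial\Omega$ furnished by Theorem~\ref{thm:random_regions} (which respects the decomposition of $\dgm_i(\A_n)$ into the three regions from Proposition~\ref{prop:regions}), and to bound the quantity $\|\Phi_\alpha(\dgm_i(\A_n)) - \Phi_\alpha(\dgm_i(\M))\|$ one matched pair at a time. Extending the map $u \mapsto \pers(u)^\alpha \phi(u)$ by $0$ on $\partial\Omega$, the triangle inequality gives
\[
\|\Phi_\alpha(\dgm_i(\A_n)) - \Phi_\alpha(\dgm_i(\M))\|
\leq \sum_{u \in \dgm_i(\A_n)\cup\partial\Omega} \|\pers(u)^\alpha \phi(u) - \pers(\gamma_n(u))^\alpha \phi(\gamma_n(u))\|,
\]
and the strategy is to show that each of four subsums---Region (1), Regions (2) and (3) matched to $\dgm_i(\M)$, Region (3) matched to $\partial\Omega$, and points of $\dgm_i(\M)$ left unmatched---tends to $0$ in probability.

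First I would handle Region (1): by Proposition~\ref{prop:regions}, every $u \in \dgm_i^{(1)}(\A_n)$ is matched to the diagonal, so its contribution is at most $\|\phi\|_\infty \Pers_\alpha(\dgm_i^{(1)}(\A_n))$. Theorem~\ref{thm:law_large_numbers} yields $\Pers_\alpha(\dgm_i^{(1)}(\A_n)) = n^{1-\alpha/m}(\Pers_\alpha(\mu_{f,i}) + o_{L^1}(1))$, which tends to $0$ in $L^1$ precisely because $\alpha > m$. This is the only place where the strict inequality is required.

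Next I would treat Regions (2) and (3). Both $u$ and $\gamma_n(u)$ stay in a compact set $K \subset \Omega$ (say second coordinate in $[\tau(\M)/2, R(\M)]$, first coordinate in $[0, R(\M)]$), on which both $\phi$ and $u \mapsto \pers(u)^\alpha$ are bounded and uniformly continuous, and hence so is their product. Theorem~\ref{thm:random_regions} then guarantees bounded cardinalities ($|\dgm_i^{(2)}(\A_n)|$ is deterministically bounded by Proposition~\ref{prop:regions}, and $|\dgm_i^{(3)}(\A_n)| = O_{L^q}(1)$) together with $\max_u \|u - \gamma_n(u)\|_\infty = O_{L^q}(n^{-2/m})$ for matched pairs. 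Hence the sums over Regions (2) and (3) matched to $\dgm_i(\M)$ tend to $0$ in probability. For $u \in \dgm_i^{(3)}(\A_n)$ matched to $\partial\Omega$, $\pers(u)^\alpha \leq \|u-\gamma_n(u)\|_\infty^\alpha = O_{L^q}(n^{-2\alpha/m})$; combined with $\|\phi\|_\infty < \infty$ and the bounded cardinality, this contribution also vanishes.

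Finally, by Proposition~\ref{prop:pd_generic_submanifolds}, $\dgm_i(\M)$ is finite and $\pers_{\min} := \min_{v\in \dgm_i(\M)} \pers(v) > 0$; by bottleneck stability, any $v \in \dgm_i(\M)$ matched to $\partial\Omega$ must satisfy $\pers(v) \leq \eps_n := d_H(\A_n,\M)$, while Lemma~\ref{lem:bound_on_eps_random} ensures $\eps_n \to 0$ in probability. So with probability tending to $1$ no such unmatched point exists and this last contribution is $0$. The main obstacle is purely the parallel bookkeeping of these four cases---once the matching of Theorem~\ref{thm:random_regions} and the total-persistence asymptotics of Theorem~\ref{thm:law_large_numbers} are invoked, the convergence in probability of each piece is immediate.
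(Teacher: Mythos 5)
Your proof is correct, but it takes a genuinely different route from the paper's. The paper's own argument is two lines: it invokes a general continuity result from \cite{divol2021understanding} (Proposition~5.1 there), namely that $\Phi_\alpha$ is continuous with respect to $\OT_\alpha$, and then simply combines this with the $\OT_\alpha$-convergence in probability of $\dgm_i(\A_n)$ to $\dgm_i(\M)$ already established in \Cref{cor:Wasserstein_convergence}. You instead re-derive the convergence ``by hand,'' splitting the difference $\Phi_\alpha(\dgm_i(\A_n))-\Phi_\alpha(\dgm_i(\M))$ along the region decomposition of \Cref{prop:regions} and the refined matching of \Cref{thm:random_regions}, and handling each of the four resulting sums separately via \Cref{thm:law_large_numbers}, the $O_{L^q}(n^{-2/m})$ bounds and the cardinality bounds. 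Both proofs are valid, and they buy slightly different things: the paper's approach is more economical and modular (it isolates exactly two black boxes: $\OT_\alpha$-continuity and $\OT_\alpha$-convergence, the latter being the only place genericity and the $\alpha>m$ hypothesis enter), whereas yours is more self-contained (no appeal to the external continuity result), pinpoints explicitly that $\alpha>m$ is needed only to kill the Region~(1) contribution, and in fact yields stronger information than asked for, since the Region~(2)/(3) contributions vanish at rate $n^{-2/m}$ rather than merely in probability. Two minor points worth spelling out if you were to write this up: you need to condition on the event $\{\eps_n < \wfs(\M)/2\}$, say, before asserting that matched pairs in Regions~(2)–(3) live in a fixed compact subset of $\Omega$ (this event has probability $\to 1$ by \Cref{lem:bound_on_eps_random}); and you implicitly use that raising a random variable that is $O_{L^q}(n^{-2/m})$ for all $q\ge 1$ to the power $\alpha$ yields $O_{L^q}(n^{-2\alpha/m})$ for all $q\ge 1$, which is correct but worth a sentence.
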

The proof is given in the Appendix.
Remark that other weighting schemes are possible. For instance, \cite{kusano2018kernel} argued for using linear feature maps of the form \[\Phi_\alpha(a)=\sum_{u\in a}\arctan(\pers(u)^\alpha) \phi(u).\] Similar results would hold for such feature maps, as the map $u\mapsto \arctan(\pers(u)^\alpha) \phi(u)/\pers(u)^\alpha$ is  continuous and bounded whenever $\phi$ is.

\section{Numerical experiments}\label{sec:experiments}

\begin{figure}
    \centering
    \includegraphics[width=0.22\textwidth]{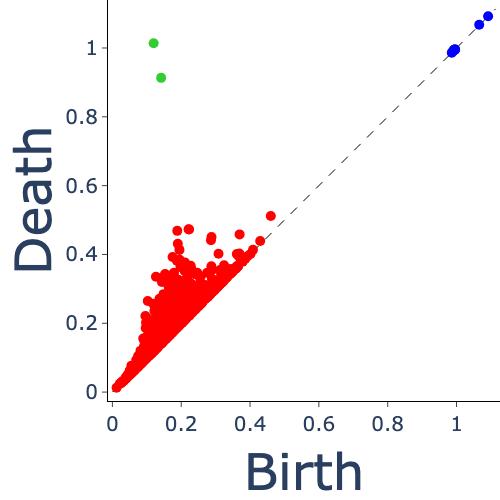}
    \hfill
    \raisebox{0.14\height}{\includegraphics[width=0.76\textwidth]{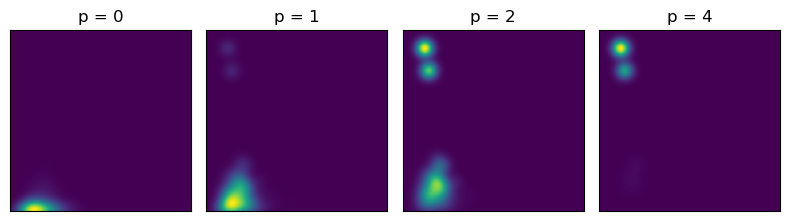}}
    \caption{Left:  the \v Cech PD $\dgm_1(\A_n)$ of a sample of $n=10^4$ points sampled on a generic torus, with points in Regions (1), (2) and (3) highlighted in different colors. Right: the persistence images of $\dgm_1(\A_n)$ with weight $\pers^p$ for different values of $p$. }
    \label{fig:generic_torus}
\end{figure}

We illustrate our results with synthetic experiments. 
We create a generic submanifold  of dimension $m$ by applying a random diffeomorphism $\Psi$ to a given $m$-dimensional submanifold $\M_0$ (e.g.  a torus).
We then draw a sample of $n$ i.i.d. observations sampled according to the pushforward $P$ of the uniform distribution on $\M$ by $\Psi$.
\begin{figure}[htb]
    \centering
    \includegraphics[width=\textwidth]{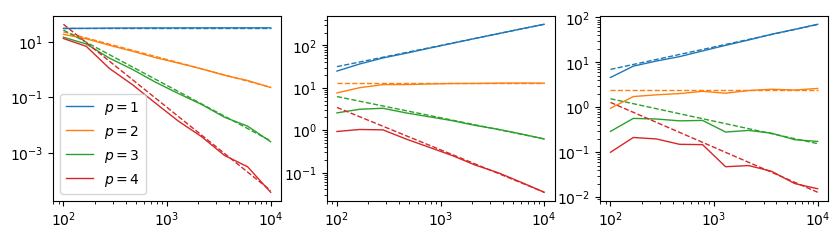}
    \caption{Plot in log-log scale of $\Pers_p(\dgm_i^{(1)}(\A_n))$ as a function of $n$ for points sampled on a  circle, $i=0$ (left), points sampled on a  torus, $i=0$ (center), points sampled on a  torus, $i=1$ (right). Dashed lines have  slopes equal to $1-p/m$.}
    \label{fig:loglog}
\end{figure}
\medskip

\textbf{Continuity of feature maps.} As a first experiment, we test the continuity of a feature map, the persistence image \cite{adams2017persistence}. In \Cref{fig:generic_torus}, we plot the persistence image of $\dgm_1(\A_n)$ where $\A_n$ is a sample of $n=10^4$ points on a generic torus. 
We observe that the map is discontinuous for $p<2$: the two points with large persistence corresponding to the PD of the underlying torus are nonapparent in the image. For $p>2$, the two points are apparent, and the contribution of points with small persistence (close to the lower edge) has vanished. In the limit situation $p=2$, we see the contribution of both points with large and small persistence. This phenomenon suggests the following heuristics: \textbf{when in presence of multiple datasets on $m$-dimensional objects whose global geometries need to be distinguished, feature maps with weights $\pers^p$ with $p>m$ should be used; when the relevant information is the underlying density of the datasets, the choice $p<m$ should be preferred}.

\medskip

\textbf{Convergence of total persistences.} We verify the rate of convergence of the total persistence predicted by \Cref{thm:law_large_numbers}. For values of $n$ ranging from $10^2$ to $10^4$, we compute $\Pers_p(\dgm_i^{(1)}(\A_n))$ in three scenarios: points sampled on a  circle for $i=0$, and points sampled on a  torus for $i=0$ and $i=1$. The correct rates of convergence are observed on a log-log plot, see \Cref{fig:loglog}. For $i=1$, we remark that the asymptotic regime starts at larger values of $n$, above $n=10^3$.
\medskip

\textbf{Convergence of $\mu_{n,i}$.} We sample $n$ points on a  torus by uniformly sampling the two angles $(\theta,\phi)$ parametrizing the torus. We obtain a (nonnuniform) probability measure, having density $f$. We then compute, for various values of $n$, the measure $\mu_{n,1}$. The measure is approximated by kernel density estimation (see \Cref{fig:OT_cv}). We approximate in a similar manner the measure $\mu_{\infty,1,2}$ by sampling $n=10^5$ points on a square. We then apply the change of variable formula \eqref{eq:change_of_variable} to compute the theoretical limit $\mu_{f,1}$. The distance $\OT_2(\mu_{n,1},\mu_{f,1})$ is then computed by approximating the measures on a grid: the distance converges to $0$ as predicted by \Cref{thm:law_large_numbers}. See also \Cref{fig:OT_cv}.

\begin{figure}[htb]
    \centering
     \includegraphics[width=0.3\textwidth]{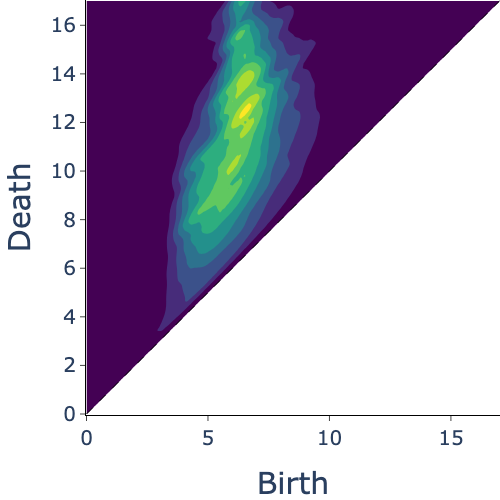}
        \includegraphics[width=0.3\textwidth]{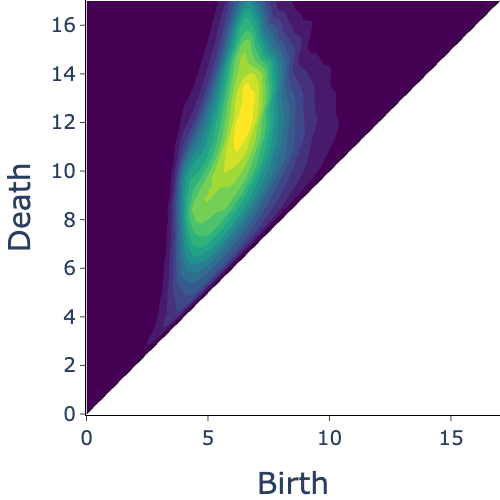}\hspace{.1cm}
    \raisebox{0.045\height}{\includegraphics[width=0.3\textwidth]{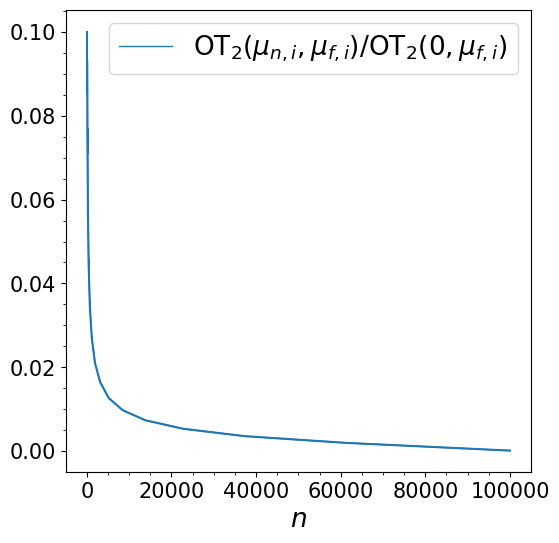}}
    \caption{Left:  Heatmap of $\mu_{n,1}$ for $n=5\cdot 10^4$ points sampled on the torus with density $f$. Center: Heatmap of $\mu_{f,1}$. Right: $\OT_2$ distance between $\mu_{n,1}$ and $\mu_{f,1}$  (normalized by $\OT_2(0,\mu_{f,i})$) for $n$ ranging from $10^2$ to $10^5$.}
    \label{fig:OT_cv}
\end{figure}

\section{Conclusion}

Under the manifold hypothesis, we have greatly refined earlier work regarding the persistent homology of subsamples of compact sets, with especially strong results when the sampling is either random or well-behaved. 
In particular, we have precisely described the PDs of such samplings, and provided new convergence guarantees w.r.t. the $p$-Wassertein distances, as well as detailed asymptotics for their total $\alpha$-persistence.  
The main limitations of our work were the assumptions that the data is sampled from a submanifold, and without any noise. Relaxing those assumptions, as well as establishing similar guarantees for Vietoris-Rips complexes, could be the subject of future research.
We also plan on exploring the consequences of our findings regarding the persistent homology dimension  \cite{adams2020fractal, simsekli2020hausdorff, birdal2021intrinsic,  schweinhart2021persistent} of submanifolds.

\appendix

\section{Proofs of additional lemmas in \Cref{sec:generic}}
The goal of this section is to prove the following lemma, used in the proof of \Cref{prop:generic_deterministic}.

\begin{lemma}
\label{lem:bounded_contribution_to_homology}
Let $\A\subset \R^d$ be a finite set, and let $a<b \in \R\cup\{-\infty,+\infty\}$ and $i\geq 0$.
Let
\[ \bigcup_{j=1}^L C_j = \Crit(\A) \cap d_\A^{-1}[a,b] \]
be a covering of the set of critical points of $d_\A$ whose critical value belongs to $[a,b]$,\footnote{When $a=-\infty$, we commit a minor abuse of notation by writing $[a,b]$ rather than $(a,b]$, and similarly when $b= +\infty$.} and let 
\[N_j :=  \#\left( \bigcup_{z\in C_j} \sigma_\A(z) \right)\]
be the cardinality of the union of the projections of the critical points of $C_j$.
Then the number of points in $\dgm_i(\A)$ such that at least one of their coordinates belongs to $[a,b]$ 
is upper-bounded by $\sum_{j=1}^L \binom{N_j}{i+2}$, hence by $\sum_{j=1}^L N_j^{i+2}$.
\end{lemma}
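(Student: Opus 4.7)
The plan is to translate the claim into a counting argument on simplices in the \v Cech filtration of $\A$. First, I would identify $\dgm_i(\A)$ with the persistent homology in degree $i$ of the filtered \v Cech complex $(K_t)_{t \geq 0}$ (a standard consequence of the Nerve Theorem and persistent interleaving), where a simplex $\sigma \subseteq \A$ enters the filtration at the unique scale $r(\sigma)$ equal to the radius of its smallest enclosing ball. Via the standard persistence (matrix) algorithm, each point $(u_1,u_2) \in \dgm_i(\A)$ is paired with a positive $i$-simplex (the creator, with $r(\sigma) = u_1$) and a negative $(i+1)$-simplex (the destroyer, with $r(\tau) = u_2$). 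Thus the number of points of $\dgm_i(\A)$ with at least one coordinate in $[a,b]$ is bounded above by the number of positive $i$-simplices plus negative $(i+1)$-simplices entering the filtration at scales in $[a,b]$.

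Second, I would establish that any homology-changing simplex addition corresponds to a critical point of $d_\A$. More precisely, if $\sigma$ is a positive or negative simplex entering at scale $t^* = r(\sigma)$, then its addition changes the homotopy type of $K_t$ and hence (via the Nerve Theorem) of the offset $\A^{t^*}$. The Isotopy Lemma for distance functions (\Cref{lem:isotopy_lemma_general}) forces $t^*$ to be a critical value of $d_\A$. Moreover, $c(\sigma) \in \mathrm{conv}(\sigma)$ by the very definition of the smallest enclosing ball, all vertices of $\sigma$ lie at distance exactly $r(\sigma) = d_\A(c(\sigma))$ from $c(\sigma)$, so $\sigma \subseteq \sigma_\A(c(\sigma))$, and hence $c(\sigma) \in \mathrm{conv}(\sigma_\A(c(\sigma)))$, which is the defining condition $\nabla d_\A(c(\sigma)) = 0$.

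Third, the combinatorial bound follows quickly. Since $c(\sigma) \in \Crit(\A) \cap d_\A^{-1}[a,b] = \bigcup_{j=1}^L C_j$, one can fix $j$ with $c(\sigma) \in C_j$, and then all vertices of $\sigma$ lie in $V_j := \bigcup_{z \in C_j}\sigma_\A(z)$, a set of $N_j$ elements. The number of $i$-simplices on $V_j$ is at most $\binom{N_j}{i+1}$, and the number of $(i+1)$-simplices is at most $\binom{N_j}{i+2}$; summing over $j$ and using Pascal's rule yields the upper bound $\sum_{j=1}^L\left[\binom{N_j}{i+1} + \binom{N_j}{i+2}\right] = \sum_{j=1}^L \binom{N_j+1}{i+2}$, which is in turn $\leq 2^{i+2}\sum_{j=1}^L N_j^{i+2}$. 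A finer injective assignment---treating each point with death in $[a,b]$ via its destroyer $(i+1)$-simplex directly, and each remaining point (with only its birth coordinate in $[a,b]$) via a canonical $(i+2)$-extension of its creator $i$-simplex inside $V_j$---then recovers the tighter stated bound $\sum_j \binom{N_j}{i+2}$.

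The main obstacle will be the second step: rigorously connecting simplex additions that change homology in the \v Cech filtration to critical points of $d_\A$, and pinning down the vertex sets of these simplices. The Isotopy Lemma supplies homotopy invariance between critical values, and the geometric analysis of the smallest enclosing ball controls the vertex structure; combining these carefully is the crux of the argument.
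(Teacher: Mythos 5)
Your overall strategy — push the count onto simplices of the \v Cech complex, relate them to critical points of $d_\A$, and bound combinatorially via projections — is the same high-level idea as the paper's. But your Step~2 contains a genuine error. For a positive or negative simplex $\sigma$ entering at $t^* = r(\sigma)$, you claim that all vertices of $\sigma$ lie at distance exactly $r(\sigma)$ from $c(\sigma)$, that $r(\sigma) = d_\A(c(\sigma))$, and that $\sigma \subseteq \sigma_\A(c(\sigma))$, hence $c(\sigma)\in\Crit(\A)$. None of this holds in general. The smallest enclosing ball of $\sigma$ touches only the extreme points of $\sigma$, not every vertex; moreover there may be points of $\A$ strictly inside that ball, in which case $d_\A(c(\sigma)) < r(\sigma)$ and $\sigma_\A(c(\sigma))$ is disjoint from the boundary points of $\sigma$ — so $\sigma \not\subseteq \sigma_\A(c(\sigma))$ and $c(\sigma)$ is typically not a critical point at all. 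The Isotopy Lemma only tells you that \emph{some} critical point has critical value $t^*$; it says nothing about $c(\sigma)$, nor that the persistence-pairing creator/destroyer simplices admit a canonical injection into $\bigcup_j\sigma_\A(C_j)$. This is precisely the crux you flagged at the end, and it does not go through.

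The clean way to close the gap is along the lines the paper actually takes: first perturb $\A$ to a nearby finite set $\B$ in general position so that $d_\B$ is topologically Morse (a dense condition), and simultaneously ensure that each $z\in\Crit(\B)$ is close to some $\psi(z)\in\Crit(\A)$ with $\phi(\sigma_\B(z))\subset\sigma_\A(\psi(z))$ — this is \Cref{lem:point_cloud_perturbation}. For such $\B$, topological Morse theory (not a matrix-reduction pairing of arbitrary simplices) gives the bijection between nonzero PD coordinates and critical points, and in general position a critical point of index $i$ is determined by its $i+1$ projections (respectively $i+2$ for index $i+1$), which all land in $\bigcup_{z'\in C_j}\sigma_\A(z')$; Bottleneck stability then transfers the count back to $\dgm_i(\A)$. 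Alternatively, you could invoke Bauer--Edelsbrunner's generalized discrete Morse theory for the \v Cech radius function, under which only \emph{critical} simplices (those with $\sigma = \sigma_\A(c(\sigma))$ and $c(\sigma)\in\Crit(\A)$) contribute to persistent homology — but that is a substantially stronger input than the naive ``positive/negative simplices'' identification you use, and it still needs a genericity/perturbation step for arbitrary finite $\A$. Finally, note that even granting Step~2, your Pascal's-rule tally gives $\sum_j\binom{N_j+1}{i+2}$ rather than $\sum_j\binom{N_j}{i+2}$, and the ``finer injective assignment'' you invoke to recover the tighter constant is not actually constructed.
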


We first require an intermediate result. 
Following \cite[Definition 4.2]{BauerEdelsbrunner2017}, we say that a finite set $\B \subset \R^d$ is in \textit{general position} if for every $\S\subset \B$ of 
at most $d+1$ points, $\S$ is affinely independent in $\R^d$ and no point of $\B \backslash \S$ lies on the smallest circumsphere of $\S$.
\begin{lemma}
\label{lem:point_cloud_perturbation}
    Let $\A \subset \R^d$ be a finite set.
    For any $\eps>0$, there exists a finite set $\B \subset \R^d$ such that:
    \begin{itemize}
        \item $\B$ is in general position,
        \item the distance function $d_\B$ is topologically Morse,
        \item there exists a bijection $\phi:\B \rightarrow\A$ with $d(\phi(b),b) \leq \eps$ for all $b\in \B$, and
        \item there exists a map $\psi: \Crit(\B) \rightarrow \Crit(\A)$ such that $|d_\B(z) - d_\A(\psi(z))| \leq \eps$ and $\phi(\sigma_\B(z)) \subset \sigma_\A(\psi(z))$ for all $z\in \Crit(\B)$.
    \end{itemize}
\end{lemma}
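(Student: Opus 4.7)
The plan is to prove the lemma in two steps: first, finding a perturbation $\B$ of $\A$ that is in general position with $d_\B$ topologically Morse, and second, constructing the map $\psi$ stably identifying critical points of $d_\B$ with those of $d_\A$. Both steps rely on openness arguments in the space of finite configurations and on the continuity of the smallest-enclosing-ball operator under small perturbations.

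For the existence of $\B$, I will use that the general position conditions from \cite{BauerEdelsbrunner2017}---affine independence of every subset of size at most $d+1$, and no extra point lying on any minimal circumsphere---are the non-vanishing of finitely many polynomial expressions in the coordinates of the $|\A|$ points. Their complement is therefore a nowhere-dense semi-algebraic subset of $(\R^d)^{|\A|}$, so for any $\eps'>0$ one may find $\B$ in general position together with a natural bijection $\phi:\B\to\A$ satisfying $\|b-\phi(b)\|\leq\eps'$ for all $b\in\B$. For such $\B$, Bauer-Edelsbrunner's analysis identifies $\Crit(\B)$ with the finite set of centers $c(S)$ of smallest enclosing balls of subsets $S\subset\B$ with $c(S)\in\mathrm{relint}(\mathrm{Conv}(S))$, and equips each such critical point with an explicit non-degenerate topological Morse chart, so $d_\B$ is topologically Morse.

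For the construction of $\psi$, fix $z\in\Crit(\B)$ with $S_\B=\sigma_\B(z)$ and $z=c(S_\B)\in\mathrm{relint}(\mathrm{Conv}(S_\B))$, and set $S_\A=\phi(S_\B)\subset\A$. Since no point of $\B\setminus S_\B$ lies in the closed ball $\overline{B}(z,d_\B(z))$ and these are finitely many strict inequalities, for $\eps'$ small enough no point of $\A\setminus S_\A$ lies in the open ball $B(y_0,r_0)$, where $y_0=c(S_\A)$ and $r_0=d_{S_\A}(y_0)$. Hence $\sigma_\A(y_0)\supset S_\A$ and $d_\A(y_0)=r_0$; by continuity of the smallest-enclosing-ball operator, $(y_0,r_0)\to(z,d_\B(z))$ as $\eps'\to 0$, uniformly over $z$, so $|d_\A(y_0)-d_\B(z)|\leq\eps$ for $\eps'$ small. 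If $y_0\in\mathrm{relint}(\mathrm{Conv}(\sigma_\A(y_0)))$, we set $\psi(z)=y_0$ and observe that $\phi(\sigma_\B(z))=S_\A\subset\sigma_\A(\psi(z))$ tautologically.

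The main obstacle is the potential presence of degenerate incidental coincidences in $\A$: additional points of $\A\setminus S_\A$ may happen to lie on the sphere $\partial B(y_0,r_0)$, forcing $\sigma_\A(y_0)\supsetneq S_\A$ and possibly pushing $y_0$ out of $\mathrm{relint}(\mathrm{Conv}(\sigma_\A(y_0)))$, in which case $y_0$ is not a critical point of $d_\A$. To resolve this, one needs a refined choice of $\psi(z)$ among the finitely many critical points of $d_\A$: one argues that near $y_0$, a generalized gradient flow of $d_\A$ leads to a bona fide critical point whose projection set still contains $S_\A$, using the local structure of $d_\A$ near incidental sphere coincidences. The finiteness of the combinatorial types of such configurations then ensures that a uniform choice of $\eps'$ suffices to realize all the required bounds for every $z\in\Crit(\B)$.
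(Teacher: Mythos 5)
Your first step (density of general position, hence existence of $\B$ with the desired bijection $\phi$, and topological Morse-ness of $d_\B$ from the Bauer--Edelsbrunner structure theory) matches the paper's. The construction of $\psi$, however, departs completely from the paper's argument, and this is where the genuine gaps are.

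The paper constructs $\psi$ by a "top-down" stability argument: Theorem~3.4 of \cite{Chazal_CS_Lieutier_OGarticle} combined with the lower semicontinuity of $\|\nabla d_\A\|$ forces each critical point of $d_\B$ to lie within $\rho$ of some critical point of $d_\A$ (for $\delta$ small enough), and a separate compactness argument (the existence of the slack $\lambda(z')$ around each $z'\in\Crit(\A)$) gives the inclusion $\phi(\sigma_\B(z))\subset\sigma_\A(\psi(z))$. You instead attempt a "bottom-up" construction: given $z\in\Crit(\B)$, set $S_\A=\phi(\sigma_\B(z))$, form $y_0 = c(S_\A)$, and declare $\psi(z)=y_0$. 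This has two unresolved issues. First, even before your stated obstacle, the claim $\sigma_\A(y_0)\supset S_\A$ requires that \emph{all} points of $S_\A$ lie on the boundary of the smallest enclosing ball of $S_\A$ (equivalently, that $c(S_\A)$ is still in $\mathrm{relint}(\mathrm{Conv}(S_\A))$ and the circumsphere is still the minimal sphere). This is an open condition, but the margin of openness is governed by $z$ and $\B$, which themselves depend on $\eps'$; a uniform bound is not provided, and the "finiteness of combinatorial types" gesture does not obviously supply one, since as $\eps'\to 0$ critical points of $\B$ can approach the boundary of the convex hull of their projections (as happens when a degenerate critical configuration of $\A$ resolves into several nondegenerate ones).

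Second, and more seriously, you correctly identify the main obstacle — $y_0$ need not be a critical point of $d_\A$ when extra points of $\A\setminus S_\A$ happen to lie on $\partial B(y_0,r_0)$ — but the proposed fix (following a generalized gradient flow of $d_\A$ to a "bona fide critical point whose projection set still contains $S_\A$") is asserted, not proved, and is in fact false in general: the generalized gradient flow of $d_\A$ starting near $y_0$ can terminate at a critical point whose projection set is disjoint from $S_\A$, and the local structure of $d_\A$ near such degenerate configurations is precisely what is \emph{not} controlled without a genericity assumption (this is the entire motivation for the genericity theorem in the paper). So the final paragraph of the proposal is a gap in the argument, not a routine detail, and the paper's route via the stability of critical points under Hausdorff perturbation is not recoverable from what is written.
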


\begin{proof}
Let $N$ be the cardinality of $\A = \{p_1,\ldots,p_N \}$. 
 As being in general position is a dense property, we can find  for any $\delta>0$ a set $\B:=\{q_1,\ldots,q_N\} \subset \R^d$ such that $d(q_i,p_i) \leq \delta$ and such that the set $\B$ is in general position in $\R^d$. If $\delta <\eps$ is small enough, this defines a bijection $\phi:\B \rightarrow \A, q_i \mapsto p_i$ that satisfies the second condition of the statement.
  As $\B$ is in general position, the distance function $d_\B$ is then topologically Morse (see \cite{gershkovich1997morse}).

Moreover, Theorem~3.4 from \cite{Chazal_CS_Lieutier_OGarticle}, combined with the lower semicontinuity of the norm of the generalized gradient $\nabla d_\A$ and the compactness of $\A$, implies that for any $\rho>0$, each critical point of $\B$ must be at distance less than $\rho$ from one of the critical points of $\A$ as long as $\delta>0$ is small enough.
This defines a map  $\psi: \Crit(\B) \rightarrow \Crit(\A)$. If $\delta + \rho < \eps$, then 
$|d_\B(z) - d_\A(\psi(z))| \leq |d_\B(z) - d_\A(z) + d_\A(z) -  d_\A(\psi(z))| \leq \delta + \rho < \eps$
and the first condition on $\psi$ is satisfied.

Furthermore, for each of the finitely many critical point $z' \in \Crit(\A)$, there exists $\lambda(z') >0$ such that $B(z', d_\A(z') + \lambda(z')) \cap \A = \sigma_\A(z')$.
Assume now that $\delta>0$ is small enough that $\delta <\min_{z'\in \Crit(\A)} \lambda(z') /4$ and that each critical point $z$ of $\B$ is at distance less than $\min_{z'\in \Crit(\A)} \lambda(z') /4$ from $\psi(z)\in\Crit(\A)$. Then such a $z$ is necessarily such that $\sigma_\B(z) \subset \phi^{-1}(\sigma_\A(\psi(z)))$.
Hence $\psi$ is as required for $\delta$ small enough, and the lemma is proved.
\end{proof}

\begin{proof}[Proof of \Cref{lem:bounded_contribution_to_homology}]
Lemma \ref{lem:point_cloud_perturbation} states that for any $\delta> 0$, there exists a point cloud $\B$ in general position, a bijection $\phi:\B \rightarrow\A$ with $d(\phi(b),b) \leq \delta$ for all $b\in \B$, and a map $\psi: \Crit(\B) \rightarrow \Crit(\A)$ such that $|d_\B(z) - d_\A(\psi(z))| \leq \delta$ and $\phi(\sigma_\B(z)) \subset \sigma_\A(\psi(z))$ for all $z\in \Crit(\B)$ and such that $d_\B$ is topologically Morse.
Let $\rho>0$ be such that all critical values of $d_\A$ in $[a-\rho, b+\rho]$ belong to $[a,b]$, and let $\B$ be as described for some $\delta<\rho/2$.
 Since the distance function $d_\B$ is topologically Morse, and as explained in the proof of \Cref{prop:pd_generic_submanifolds}, the critical points $z \in \Crit(\B)$ are topological critical points and are in bijection (via their critical values) with the non-zero coordinates of the points in (the union over all degrees of) the \v Cech persistence diagrams of $\B$.
 In particular, the number of points in $\dgm_i(\B)$ such that at least one of their coordinates belongs to $[a-\rho/2, b+\rho/2]$ is upper bounded by the number of points $z\in \Crit(\B)$ of topological Morse index $i$ or $i+1$ such that $d_\B(z) \in [a-\rho/2, b+\rho/2]$.
 For any such $z$, the point $\psi(z) \in \Crit(\A)$  is such that  $ d_{\A_n}(\psi(z))\in [a,b]$, and as such belongs to $C_j$ for some $j\in \{1,\ldots,L\}$.
Furthermore,  $\phi(\sigma_\B(z)) \subset \sigma_{\A}(\psi(z)) \subset \bigcup_{z'\in C_j} \sigma_{\A}(z')$.
As $\B$ is in general position, each critical point of $\B$ of index $i$ is uniquely identified by its $i+1$ projections on $\B$ (respectively $i+2$ projections for points of index $i+1$), and this bounds the total number of $z\in \Crit(\B)$ such that $d_\B(z) \in [a-\rho/2, b+\rho/2]$ by $\sum_{j=1}^L \binom{N_j}{i+2}$.

 We have shown that $\dgm_i(\B)$ does not have more than $\sum_{j=1}^n \binom{N_j}{i+2} $ points with at least one of their coordinates greater $\in [a-\rho/2, b+\rho/2]$.
 By the Bottleneck Stability Theorem, there is a surjection from this set to the set of points in $\dgm_i(\A)$ with at least one of their coordinates in $[a,b]$ as soon as $\delta>0$ is small enough.
 Hence there are at most  $\sum_{j=1}^n \binom{N_j}{i+2}$ points in $\dgm_i(\A)$ with at least one of their coordinates in $[a,b]$, as desired.
\end{proof}

\section{Proofs of additional lemmas in \Cref{sec:random}}
In this section, we give additional lemmas needed to prove  \Cref{thm:law_large_numbers}. First, we state two simple lemmas on properties of probability measures and of samples on $\M$.

\begin{lemma}\label{lem:control_balls}
Let $\M$ be a compact submanifold with positive reach. 
    Let $P$ be a probability measure having a density $f$ on $\M$ satisfying $f_{\min}\leq f \leq f_{\max}$ for two strictly positive constants $f_{\min}$, $f_{\max}$. There exist constants $c=c(m),C=C(m)$ such that for all $0\leq r \leq \tau(\M)/4$
    \begin{equation}
        cf_{\min}r^m\leq P(\overline B(x,r))\leq C f_{\max}r^m.
    \end{equation}
    Let $\A_n$ be a sample of $n$ i.i.d. observations of law $P$. 
    Then, there exists $C_0=C_0(\M)$ depending on $\M$ such that for all $x\in \M$ and all $r>0$,
    \begin{equation}\label{eq:subexp_distance}
        \P(d(x,\A_n)\geq r)\leq \exp(-nC_0 f_{\min}r^m).
    \end{equation}
\end{lemma}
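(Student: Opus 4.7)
My plan is to reduce both inequalities to a standard computation of the $m$-dimensional volume of $B(x,r)\cap \M$, already implicit in the arguments used earlier in the paper (e.g.\ in the proof of \Cref{lem:control_Kin}).

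First I would treat the two-sided volume bound for $r\leq \tau(\M)/4$. As invoked in \cite[Lemma 2.2]{divol2021minimax}, the orthogonal projection $\pi_x:y\in B(x,r)\cap \M \mapsto \pi_x(y-x)\in T_x\M$ is a diffeomorphism onto its image, with Jacobian bounded above and below by positive constants depending only on $m$. Its image contains $\pi_x\bigl(B(x,r)\cap \M\bigr)$, which I would sandwich between two Euclidean balls in $T_x\M$ of radius comparable to $r$: the inclusion $B_{T_x\M}(0,r)\subset \pi_x(B(x,r)\cap \M)$ follows from \eqref{eq:federer} (any $y\in \M$ with $\|\pi_x(y-x)\|\leq r$ and $\|y-x\|\leq r\sqrt{1+r^2/(2\tau(\M))^2}$ remains within distance $\leq r$ of $x$ after adjusting constants), and the reverse inclusion in a ball of radius $r$ is obvious. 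Changing variables then gives $c(m)r^m\leq \Vol_m(B(x,r)\cap\M)\leq C(m)r^m$, and integrating the density bounds against this volume yields the claimed $cf_{\min}r^m\leq P(\overline B(x,r))\leq Cf_{\max}r^m$.

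For the tail bound, the $n$ observations are i.i.d., so
\[
    \P(d(x,\A_n)\geq r) = (1-P(\overline B(x,r)))^n \leq \exp\bigl(-nP(\overline B(x,r))\bigr).
\]
When $r\leq \tau(\M)/4$, the first part gives $P(\overline B(x,r))\geq c(m)f_{\min}r^m$, which is exactly the desired bound with $C_0 = c(m)$. For $r\geq \tau(\M)/4$, I would use monotonicity: $\P(d(x,\A_n)\geq r)\leq \P(d(x,\A_n)\geq \tau(\M)/4)\leq \exp(-nc(m) f_{\min}(\tau(\M)/4)^m)$, and if $r\leq \diam(\M)$ this is at most $\exp(-nC_0 f_{\min}r^m)$ as long as $C_0\leq c(m)(\tau(\M)/(4\diam(\M)))^m$, a constant depending only on $\M$. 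For $r>\diam(\M)$ the left-hand side vanishes, so the bound is trivial. Taking the minimum of the two values of $C_0$ gives the global constant claimed in the statement.

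Nothing here is delicate: the only step requiring care is the volume comparison, and that follows immediately from the standard tubular-neighborhood machinery already used throughout the paper.
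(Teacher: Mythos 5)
Your proposal is correct, and the tail-bound argument (split at $\tau(\M)/4$ and $\diam(\M)$, use monotonicity, adjust $C_0$) is essentially identical to the paper's. The one place where you diverge is that you reconstruct the two-sided volume bound from the tubular-neighborhood machinery, whereas the paper simply cites Proposition~31 of \cite{aamari2018stability}. Your reconstruction is morally right, but note that the inclusion $B_{T_x\M}(0,r)\subset \pi_x(B(x,r)\cap\M)$ as literally written is false (it holds only with a strictly smaller radius $cr$, as you acknowledge parenthetically), and more importantly the surjectivity of $\pi_x|_\M$ onto a ball of radius $cr$ in $T_x\M$ does not follow from Federer's inequality \eqref{eq:federer} alone: that inequality only controls the normal component of a point of $\M$ that is already known to exist, it does not produce a point of $\M$ over each $v\in T_x\M$. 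You would need to invoke the fact that $\M$ is locally a graph over $T_x\M$ (implicit function theorem plus reach bounds), which is exactly the content of the cited references. Since that step is standard and available, this is an informality rather than a gap, but it is the one thing I would tighten before presenting this as a self-contained proof.
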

\begin{proof}
    For the first statement, see \cite[Proposition 31]{aamari2018stability}. Let us prove the second one. Remark that the probability is zero for $r>\diam(\M)$. Hence, we can assume that $r\leq \diam(\M)$. When $r\leq \tau(\M)/4$, it holds that
    \begin{align*}
         \P(d(x,\A_n)\geq r)=(1-P(\overline B(x,r)))^n\leq \exp(-nc_mf_{\min}r^m).
    \end{align*}
    When $\tau(\M)/4\leq r \leq \diam(\M)$, we write
    \begin{align*}
        \P(d(x,\A_n)\geq r)&\leq \P(d(x,\A_n)\geq \tau(\M)/4)\leq \exp(-ncf_{\min}(\tau(\M)/4)^m)\\
        &\leq \exp(-ncf_{\min}\frac{(\tau(\M)/4)^m}{\diam(\M)^m}r^m).
    \end{align*}
    Hence, the result holds with $C_0 = c \min\left(1,\frac{(\tau(\M)/4)^m}{\diam(\M)^m}\right)$.
\end{proof}

\boundonepslemma*

\begin{proof}
    The bound $\P(d_H(\A_n,\M)>r)$ is given in \cite[Lemma III.23]{aamari2017rates}. Furthermore, \cite[Lemma III.23]{aamari2017rates} also states that for any $q>0$, there exists $C(q)$ depending on $f_{\min}$ and $m$ such that, with probability at least $1-n^{-q/m}$, $d_H(\A_n,\M)\leq C(q)\p{\frac{\ln n}n}^{1/m}$. In particular,  we obtain that 
    \begin{align*}
        \E[d_H(\A_n,\M)^q ]\leq  C(q)^q\p{\frac{\ln n}n}^{q/m} + \diam(\M)^q n^{-q/m},
    \end{align*}
    proving the second claim of the lemma.
\end{proof}

Next, we state a simple geometric lemma.
\begin{lemma}
\label{lem:bound_norms}
    Let $\A\subset \M$, $z\in \Crit(\A)$ and $x\in \sigma_\A(z)$. Then $\|\pi_x^\bot(z-x)\|\leq \|z-x\|^2/\tau(\M)$. Furthermore, if $d_\A(z)\leq \tau(M)/\sqrt{2}$, then $\|\pi_x(z-x)\|\geq \|z-x\|/\sqrt{2}$.
\end{lemma}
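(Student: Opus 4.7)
The plan is to first establish the normal-component bound $\|\pi_x^\bot(z-x)\|\leq \|z-x\|^2/\tau(\M)$, then derive the tangential bound $\|\pi_x(z-x)\|\geq \|z-x\|/\sqrt{2}$ from it via Pythagoras.

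\textbf{First inequality.} Since $z\in\Crit(\A)$, by definition $\nabla d_\A(z)=0$, i.e. $z=c(\sigma_\A(z))$ is the center of the smallest enclosing ball of $\sigma_\A(z)$. As such, $z$ lies in the convex hull of its projections, so there exist $x_1,\dots,x_k\in\sigma_\A(z)$ (with $x_1=x$, say) and barycentric coefficients $\lambda_1,\dots,\lambda_k\geq 0$ with $\sum_i\lambda_i=1$ such that $z=\sum_i\lambda_i x_i$. In particular $z-x=\sum_i\lambda_i(x_i-x)$. Since each $x_i\in\A\subset\M$, Federer's inequality \eqref{eq:federer} applied at $x\in\M$ yields $\|\pi_x^\bot(x_i-x)\|\leq \|x_i-x\|^2/(2\tau(\M))$, hence by convexity
\[\|\pi_x^\bot(z-x)\|\leq \sum_i\lambda_i\|\pi_x^\bot(x_i-x)\|\leq \frac{1}{2\tau(\M)}\sum_i\lambda_i\|x_i-x\|^2.\]
The key algebraic step is the barycentric identity $\sum_i\lambda_i\|x_i-x\|^2=2\|z-x\|^2$: expanding $\|x_i-x\|^2=\|x_i-z\|^2+2\dotp{x_i-z,\, z-x}+\|z-x\|^2$, using $\|x_i-z\|=d_\A(z)=\|z-x\|$ for every $i\in\{1,\dots,k\}$, and observing that $\sum_i\lambda_i(x_i-z)=0$ kills the cross term, the identity follows and yields the first bound.

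\textbf{Second inequality.} Combining the first bound with the Pythagorean theorem gives
\[\|\pi_x(z-x)\|^2=\|z-x\|^2-\|\pi_x^\bot(z-x)\|^2\geq \|z-x\|^2\p{1-\frac{\|z-x\|^2}{\tau(\M)^2}}.\]
Under the hypothesis $d_\A(z)=\|z-x\|\leq \tau(\M)/\sqrt{2}$, the parenthesized factor is at least $1/2$, and taking square roots finishes the proof.

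There is no significant obstacle; the only non-routine step is spotting the barycentric identity $\sum_i\lambda_i\|x_i-x\|^2=2\|z-x\|^2$, without which the crude estimate $\|x_i-x\|\leq 2d_\A(z)$ would cost an extra factor of $2$ and miss the claimed constant.
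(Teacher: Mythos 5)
Your proof is correct and takes essentially the same route as the paper: write $z$ as a convex combination of its projections $x_i\in\sigma_\A(z)$, apply Federer's tangent-plane estimate \eqref{eq:federer} at $x$ to each $x_i-x$, and exploit that $\|x_i-z\|=d_\A(z)=\|x-z\|$ together with $\sum_i\lambda_i(x_i-z)=0$ to obtain $\sum_i\lambda_i\|x_i-x\|^2=2\|z-x\|^2$, the exact identity the paper uses. Your derivation of the second inequality via Pythagoras is precisely the ``direct computation'' the paper leaves implicit, and your brief justification that $z=c(\sigma_\A(z))$ lies in the convex hull of $\sigma_\A(z)$ is a (welcome) bit more explicit than the paper's.
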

\begin{proof}
    The point $z$ can be written as a convex combination $z=\sum_k \lambda_ky_k$ where the points $y_k$ are in $\A\subset \M$. Then, using \cite[Theorem 4.18]{federer1959curvature},
    \begin{align*}
        \|\pi_x^\bot(z-x)\|&\leq \sum_k \lambda_k \|\pi_x^\bot(y_k-x)\| \leq  \sum_k \lambda_k \frac{ \|y_k-x\|^2}{2\tau(\M)}\\
        &=  \sum_k \lambda_k \frac{ \|y_k-z\|^2+ \|z-x\|^2 + 2\dotp{y_k-z,z-x}}{2\tau(\M)} = \frac{\|z-x\|^2}{\tau(\M)},
    \end{align*}
    as $\|y_k-z\|^2= \|x-z\|^2$ and $\sum_k\lambda_k (y_k-z)=0$.
The second inequality $\|\pi_x(z-x)\|\geq \|z-x\|/\sqrt{2}$ from the statement follows from the first one through a direct computation.
\end{proof}

At last, the following property of the multinomial distribution was used in the proof of \Cref{thm:law_large_numbers}.
\begin{lemma}
\label{lem:multinomial}
Let $K\geq 1$ and $L=(L_1,\ldots,L_K,L_{K+1})$  be a random multinomial variable of parameters $n$ and $\alpha_1,\ldots,\alpha_K,\alpha_{K+1}$. 
Then there exists $C = C(K,l)$ such that 
\[ \E[(\sum_{i=1}^KL_i)^l | L_i\geq 1 \ \forall i=1,\ldots, K] \leq  C_l(1+(n\sum_{i=1}^K \alpha_i)^l). \]

\end{lemma}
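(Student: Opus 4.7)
The plan is to rewrite the conditional expectation as a ratio of expectations under a closely related unconditioned multinomial, and then remove an inconvenient weight by a monotonicity argument.

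First I would perform the substitution $m_i = k_i - 1$ for $i \leq K$. A direct comparison of multinomial densities gives, for any $k_1, \ldots, k_K \geq 1$ with $\sum k_i \leq n$,
\[
\P(L = k) = (n)_K \prod_{i=1}^K \alpha_i \cdot \prod_{i=1}^K \frac{1}{m_i + 1}\cdot \P^{(n-K)}(\tilde L = m),
\]
where $(n)_K = n(n-1)\cdots(n-K+1)$ and $\P^{(n-K)}$ denotes the probability under the multinomial with $n-K$ trials and the same probabilities $(\alpha_1, \ldots, \alpha_{K+1})$. Writing $\tilde S = \sum_{i=1}^K \tilde L_i$, $Y = \prod_{i=1}^K (\tilde L_i + 1)^{-1}$, and $E = \{L_i \geq 1 \,\forall i \leq K\}$, summing this identity against $(\sum_{i\leq K} L_i)^l = (K+\tilde S)^l$ in the numerator and against $1$ in the denominator (with the $(n)_K \prod_i \alpha_i$ prefactor cancelling) produces the key reformulation
\[
\E\Big[\Big(\sum_{i=1}^K L_i\Big)^l \,\Big|\, E \Big] = \frac{\E^{(n-K)}\!\big[(K + \tilde S)^l\, Y\big]}{\E^{(n-K)}[Y]}.
\]

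Next I would show that $Y$ can be dropped from this ratio, so that the right-hand side is bounded above by $\E^{(n-K)}[(K+\tilde S)^l]$. Setting $F(s) = \E^{(n-K)}[Y \mid \tilde S = s]$, the elementary coupling from $\mathrm{Mult}(s, \beta)$ to $\mathrm{Mult}(s+1, \beta)$ (with $\beta_i = \alpha_i/p$, $p = \sum_{i\leq K}\alpha_i$) obtained by appending one extra trial can only raise one of the $\tilde L_i$'s by one, hence can only decrease $Y$ pointwise, which shows $F$ is non-increasing in $s$. Since $s \mapsto (K+s)^l$ is non-decreasing and $F$ is non-increasing, Chebyshev's sum inequality (equivalently, FKG for monotone functions of a single real variable) yields $\mathrm{Cov}\big((K+\tilde S)^l, F(\tilde S)\big) \leq 0$, which after conditioning on $\tilde S$ is exactly
\[
\E^{(n-K)}\!\big[(K + \tilde S)^l\, Y \big] \leq \E^{(n-K)}\!\big[(K + \tilde S)^l\big] \cdot \E^{(n-K)}[Y].
\]

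Finally, since $\tilde S \sim \mathrm{Bin}(n-K, p)$, the standard factorial moment bound gives $\E^{(n-K)}[\tilde S^j] \leq C_j(1 + (np)^j)$ for every $j \leq l$, and combined with $(K + \tilde S)^l \leq 2^{l-1}(K^l + \tilde S^l)$ this produces $\E^{(n-K)}[(K + \tilde S)^l] \leq C(K, l)(1 + (np)^l)$, which completes the proof. The step requiring the most care is the density identity in the first paragraph (matching combinatorial factors correctly and checking that both sides vanish outside the support constraint $\sum m_i \leq n-K$); the monotonicity of $F$ and the final binomial moment estimate are then routine.
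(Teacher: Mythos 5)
Your argument is correct, and it takes a genuinely different route from the one in the paper. The paper proves the bound by a direct combinatorial decomposition: it writes the multinomial as counts derived from an i.i.d. sequence of categoricals $X_1,\dots,X_n$, partitions the conditioning event $E=\{L_i\geq 1\ \forall i\leq K\}$ according to the positions $\iota(1)<\dots<\iota(K)$ of the first appearance of each category $1,\dots,K$, and observes that, conditionally on one such $E_\iota$, the count $\sum_{i\le K}L_i$ is $K$ plus a sum of $K+1$ blocks, each of which is (conditionally) binomial with success probability $\le \sum_{i\leq K}\alpha_i$; the moment bound then follows blockwise. Your proof instead changes variables $k_i\mapsto k_i-1$ to compare the conditioned mass function of $\mathrm{Mult}(n)$ with that of $\mathrm{Mult}(n-K)$, expressing the target conditional $l$-th moment as a ratio $\E^{(n-K)}[(K+\tilde S)^l Y]/\E^{(n-K)}[Y]$ with $Y=\prod_i(\tilde L_i+1)^{-1}$, and then removes the weight $Y$ by showing that $s\mapsto\E^{(n-K)}[Y\mid\tilde S=s]$ is non-increasing (via an add-one-trial coupling of $\mathrm{Mult}(s,\beta)$ and $\mathrm{Mult}(s+1,\beta)$) and invoking the one-dimensional monotone-correlation inequality, after which the claim reduces to the moments of $\tilde S\sim\mathrm{Bin}(n-K,p)$. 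Both proofs are sound and land on the same binomial moment estimate; the paper's argument is elementary and self-contained but requires careful bookkeeping over the $\iota$-partition, whereas yours avoids the combinatorial case analysis entirely at the cost of invoking a correlation inequality and a tilting identity for multinomial mass functions. Your density identity in the first step is exactly right (the quotient of factorials gives $\prod_i(m_i+1)^{-1}$, the quotient $n!/(n-K)!$ gives $(n)_K$, and the $\alpha_i$ exponents drop by one), and the coupling step is valid because conditioning a multinomial on the total of a sub-block again yields a multinomial.
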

\begin{proof}
Let $X_1,\ldots,X_n$ be i.i.d. categorical variables of parameters $\alpha_1,\ldots,\alpha_K,\alpha_{K+1}$. Define $L_k(p) = \sum_{r\leq p} \one_{X_r = k}$; then $(L_1(n),\ldots,L_K(n),L_{K+1}(n))$ has the same distribution as $(L_1,\ldots,L_K,L_{K+1})$, and we identify the two in our notation.
For a fixed $n$, and for any injective function $\iota : \{1,\ldots,k\} \rightarrow \{1,\ldots,n\}$, consider the event
\[ E_\iota :=\{L_1(\iota(1))=1,\ldots, L_1(\iota(K)) = 1\}, \]
i.e. $\iota(i)$ is the first appearance of $i$ among the variables $X_1,\ldots,X_n$.
Note that $ E := \{L_1,\ldots,L_K\geq 1\} = \bigsqcup_\iota E_\iota $ where the sum is taken over all such injective functions.
Then 
\[\E[(\sum_{i=1}^KL_i)^l | E] = \sum_\iota \P(E_\iota|E)\E[(\sum_{i=1}^KL_i)^l| E_\iota]. \]
Fix a function $\iota$, and assume without loss of generality that $\iota(1)<\iota(2)<\ldots<\iota(K)$. 
Conditioned by $A_\iota$,  
the variable $Y_i:=\sum_{r=\iota(i)+1}^{\iota(i+1)-1} \one_{X_r \neq K+1}$ is a binomial variable of parameters $\iota(i+1)-\iota(i)-2\leq n $ and $\frac{\sum_{i=1}^i \alpha_i}{\alpha_1 +\ldots + \alpha_i + \alpha_{K+1}} \leq \sum_{i=1}^K \alpha_i$.
Hence $\E[Y_i^l| E_\iota]\leq C_1(l)(n\sum_{i=1}^K \alpha_i)^l$ using  classical bounds on the $l$-th moment of a binomial variable, and we see that
\begin{align*}
    \E[(\sum_{i=1}^KL_i)^l| E_\iota] &  =\E[(\sum_{r=1}^n \one_{X_r \neq K+1})^l| E_\iota] = \E[(K+\sum_{i=0}^K\sum_{r=\iota(i)+1}^{\iota(i+1)-1} \one_{X_r \neq K+1})^l| E_\iota]
    \\ & \leq C_2(K,l)(K^l+ \sum_{i=0}^K\E[Y_i^l| E_\iota] ) 
    \leq C_3(K,l)(1 + (n\sum_{i=1}^K \alpha_i)^l)
\end{align*}
where we write $\iota(0) = 0$ and $\iota(K+1) = n$ to simplify notation.
\end{proof}

\section{Proofs of \Cref{cor:Wasserstein_convergence} and \Cref{cor:feature_maps_regularity}}
We restate and prove two corollaries from \Cref{sec:random}.

\wassersteinconvergencecorollary*

\begin{proof}
Let $\eps_n= d_H(\A_n,\M)$. Consider the event $E=\{\eps_n <\tau(M)/2\}$. According to  \Cref{prop:regions}, on $E$, the number of points of $\dgm_i^{(2)}(\A_n)$ is bounded by some constant $N_0$ depending only on $\M$. 
\begin{itemize}
\item Consider the optimal matching $\gamma_n$ given by \Cref{thm:random_regions}. On the event $E$, this optimal matching sends all the points of $\dgm_i^{(1)}(\A_n)$ to the diagonal $\partial \Omega$.  Thus, we have (when $E$ is satisfied)
    \begin{equation}\label{eq:OTp_split_in_three}
        \begin{split}
            &|\OT_p^p(\dgm_i(\A_n),\dgm_i(\M)) - \Pers_p(\dgm_i^{(1)}(\A_n))|\\
        &\leq  N_0\max_{u\in \dgm_i^{(2)}(\A_n)} \|u-\gamma_n(u)\|_\infty^p + \#( \dgm_i^{(3)}(\A_n) ) \max_{u\in \dgm_i^{(3)}(\A_n)} \|u-\gamma_n(u)\|_\infty^p .
        \end{split}
    \end{equation}
Furthermore, note that 
   \begin{align*}
       \max_{u\in \dgm_i^{(2)}(\A_n)} \|u-\gamma_n(u)\|_\infty^p &\leq \max\left(\max_{(u_1,u_2)\in \dgm_i^{(2)}(\A_n)} |u_2-\gamma_n(u)_2|^p, \eps_n^p\right) \\
       &\leq \max_{(u_1,u_2)\in \dgm_i^{(2)}(\A_n)} |u_2-\gamma_n(u)_2|^p + \eps_n^p.
   \end{align*}
    We take the expectation and apply the Cauchy-Schwarz inequality to obtain (together with \Cref{thm:law_large_numbers}, \Cref{thm:random_regions} and \Cref{lem:bound_on_eps_random}): 
    \begin{align*}
\E[\OT_p^p(\dgm_i(\A_n),\dgm_i(\M))\one\{E\}]& \leq n^{1-p/m} \Pers_p(\mu_{f,i}) + o(1). 
\end{align*}
When $E$ is not realized, we crudely bound $\OT_p^p(\dgm_i(\A_n),\dgm_i(\M))$ by considering the matching sending every point to the diagonal. The cost of this matching is bounded by $C_\M(1+n^{i+1})$, where $n^{i+1}$ is an upper bound on the number of points in $\dgm_i(\A_n)$ and $C_\M$ is some constant depending on $\M$. Hence,
\begin{align*}
\E[\OT_p^p(\dgm_i(\A_n),\dgm_i(\M))\one\{E^c\}] \leq C_\M(1+n^{i+1})\P(\eps_n> \tau(\M)/2) = o(1),    
\end{align*}
according to \Cref{lem:bound_on_eps_random}. This proves the first bullet point.
\item The second bullet point is proved likewise from \eqref{eq:OTp_split_in_three}. Indeed, in that case, it holds that 
\begin{equation}
    \E[|\OT_p^p(\dgm_i(\A_n),\dgm_i(\M)) - \Pers_p(\dgm_i^{(1)}(\A_n))|\one\{E\}] =o(1).
\end{equation}
Also, using the same crude bound, we obtain that 
\begin{equation}
    \E[|\OT_p^p(\dgm_i(\A_n),\dgm_i(\M)) - \Pers_p(\dgm_i^{(1)}(\A_n))|\one\{E^c\}] = o(1).
\end{equation}
So far, we have proved that $\OT_p^p(\dgm_i(\A_n),\dgm_i(\M))=\Pers_p(\dgm_i^{(1)}(\A_n))+o_{L^1}(1)$. According to \Cref{thm:law_large_numbers}, it holds that $\Pers_p(\dgm_i^{(1)}(\A_n)) = \Pers_p(\mu_{f,i})+o_{L^1}(1)$ for $p=m$, proving the second bullet point.
\item 
For the third bullet point, we use that on the event $E$,
\begin{equation}
    \OT_p^p(\dgm_i(\A_n),\dgm_i(\M))\geq \Pers_p(\dgm_i^{(1)}(\A_n)).
\end{equation}
The latter is equal to $n^{1-p/m}\Pers_p(\mu_{f,i}) +o_{L^1}(n^{1-p/m}) $, with $\Pers_p(\mu_{f,i})>0$: indeed, the support of the measure $\mu_{f,i}$ is nontrivial for $i<m$, see \cite{hiraoka2018limit, goel2019strong}. Hence,
\begin{align*}
    \E[\OT_p^p(\dgm_i(\A_n),\dgm_i(\M))]&\geq \E[\OT_p^p(\dgm_i(\A_n),\dgm_i(\M))\one\{E\}]  \\
    &\geq \E[\Pers_p(\dgm_i^{(1)}(\A_n))]  - \E[\Pers_p(\dgm_i^{(1)}(\A_n))\one \{E^c\}].
\end{align*}
The second term goes to zero (use the crude bound $\Pers_p(\dgm_i^{(1)}(\A_n))\leq C_\M(1+n^{i+1})$), while the first one diverges. This proves the third bullet point.
\end{itemize}
At last, we prove the formula for the asymptotic expansion of 
\begin{equation}
\Pers_\alpha(\dgm_i(\A_n))=\Pers_\alpha(\dgm_i^{(1)}(\A_n))+\Pers_\alpha(\dgm_i^{(2)}(\A_n))+\Pers_\alpha(\dgm_i^{(3)}(\A_n)).
\end{equation} 
The first term is equal to $n^{1-\alpha/m}\Pers_\alpha(\mu_{\infty,i,m})\int_\M f(x)^{1-\frac \alpha m} \dd x  + o_{L^1}(n^{1-\alpha/m})$ according to \Cref{thm:law_large_numbers}. Remark that for $u,v\in \Omega$, $|\pers_\alpha(u)-\pers_\alpha(v)|\leq 2 \alpha (\pers_\alpha(u)+\pers_\alpha(v))\|u-v\|_\infty$. Hence, on the event $E=\{\eps_n<\tau(\M)/2\}$,
\begin{align*}
&|\Pers_\alpha(\dgm_i^{(2)}(\A_n))+\Pers_\alpha(\dgm_i^{(3)}(\A_n))-\Pers_\alpha(\dgm_i(\M))|\\
&\leq 2\alpha \eps_n  (\Pers_\alpha(\dgm_i^{(2)}(\A_n))+\Pers_\alpha(\dgm_i^{(3)}(\A_n))+\Pers_\alpha(\dgm_i(\M))).
\end{align*}
We crudely bound the persistence of a point in $\dgm_i(\A_n)$ by $R(\M)$, the radius of the smallest enclosing ball of $\M$, to obtain that
\begin{align*}
&|\Pers_\alpha(\dgm_i^{(2)}(\A_n))+\Pers_\alpha(\dgm_i^{(3)}(\A_n))-\Pers_\alpha(\dgm_i(\M))|\\
&\leq 2\alpha \eps_n (R(\M)^\alpha+1) (\#(\dgm_i^{(2)}(\A_n))+\#(\dgm_i^{(3)}(\A_n))+\Pers_\alpha(\dgm_i(\M))).
\end{align*}
We have already established that $\#(\dgm_i^{(2)}(\A_n))=O_{L^2}(1)$ (actually, it is larger than $N_0$ with only exponentially small probability). Furthermore, \Cref{thm:random_regions} states that $\#(\dgm_i^{(3)}(\A_n))=O_{L^2}(1)$.
\Cref{lem:bound_on_eps_random} also states that $\eps_n = O_{L^2}(((\log n)/n)^{1/m})$. Hence,
\begin{align*}
&|\Pers_\alpha(\dgm_i^{(2)}(\A_n))+\Pers_\alpha(\dgm_i^{(3)}(\A_n))-\Pers_\alpha(\dgm_i(\M))| =O_{L^1}(((\log n)/n)^{1/m}).
\end{align*}
This concludes the proof.
\end{proof}

\featuremapsregularitycorollary*

\begin{proof}
According to \cite[Proposition 5.1]{divol2021understanding}, the feature map $\Phi_\alpha$ is continuous with respect to the $\OT_\alpha$ distance. But we have shown in \Cref{cor:Wasserstein_convergence} that $\E[\OT_\alpha^\alpha(\dgm_i(\A_n),\dgm_i(\M))]\to 0$ as $n\to \infty$. In particular, $\OT_\alpha(\dgm_i(\A_n),\dgm_i(\M))$ converges in probability to $0$. By continuity, so does $\|\Phi_\alpha(\dgm_i(\A_n))-\Phi_\alpha(\dgm_i(\M))\|$. 
\end{proof}

\bibliography{references}

\end{document}